\theoremstyle{plain}
\newtheorem{theorem}{Theorem}
\newtheorem{lemma}[theorem]{Lemma}
\newtheorem{corollary}[theorem]{Corollary}
\theoremstyle{definition}
\newtheorem{definition}[theorem]{Definition}
\title{Towards a Topology-Shape-Metrics Framework for Ortho-Radial Drawings}
\author{Lukas Barth \and Benjamin Niedermann \and Ignaz Rutter \and Matthias 
Wolf}
\newtheorem{observation}[theorem]{Observation}
\newcommand{\widebar}[1]{\mkern 
1.5mu\overline{\mkern-1.5mu#1\mkern-1.5mu}\mkern 1.5mu}
\newcommand{\subpath}[2]{\ensuremath{#1[#2]}}
\newcommand{\N}{\ensuremath{\mathbb{N}}}
\newcommand{\join}{+}
\newcommand{\reverse}[1]{\ensuremath{\widebar{#1}}}
\newcommand{\NP}{\ensuremath{\mathcal{NP}}}
\DeclareMathOperator{\rot}{rot}
\begin{document}

\maketitle

\begin{abstract}
  \emph{Ortho-Radial drawings} are a generalization of orthogonal
  drawings to grids that are formed by concentric circles and
  straight-line spokes emanating from the circles' center. Such
  drawings have applications in schematic graph layouts, e.g., for
  metro maps and destination maps.

  A plane graph is a planar graph with a fixed planar embedding.  We
  give a combinatorial characterization of the plane graphs that admit
  a planar ortho-radial drawing without bends.  Previously, such a
  characterization was only known for paths, cycles, and theta
  graphs~\cite{hht-orthoradial-09}, and in the special case of
  rectangular drawings for cubic graphs~\cite{hhmt-rrdcp-10}, where
  the contour of each face is required to be a rectangle.

  The characterization is expressed in terms of an \emph{ortho-radial
    representation} that, similar to Tamassia's \emph{orthogonal
    representations} for orthogonal drawings describes such a drawing
  combinatorially in terms of angles around vertices and bends on the
  edges.  In this sense our characterization can be seen as a first
  step towards generalizing the Topology-Shape-Metrics framework of
  Tamassia to ortho-radial drawings.
\end{abstract}

\begin{figure}[t]
  \centering
  \subfloat[Ortho-Radial
  Grid.]{ 
  \label{fig:pre:drawing-grid}\includegraphics[scale=1.3]{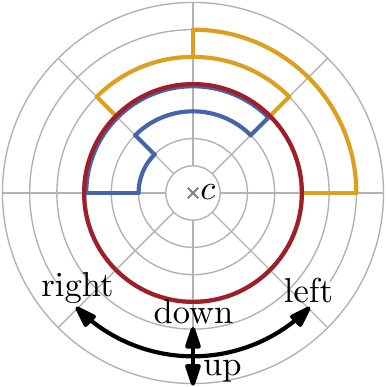}
  }%
  \hspace{10ex}
  \subfloat[Cylinder.]{ 
  \label{fig:pre:drawing-cylinder}\includegraphics[scale=1.3]{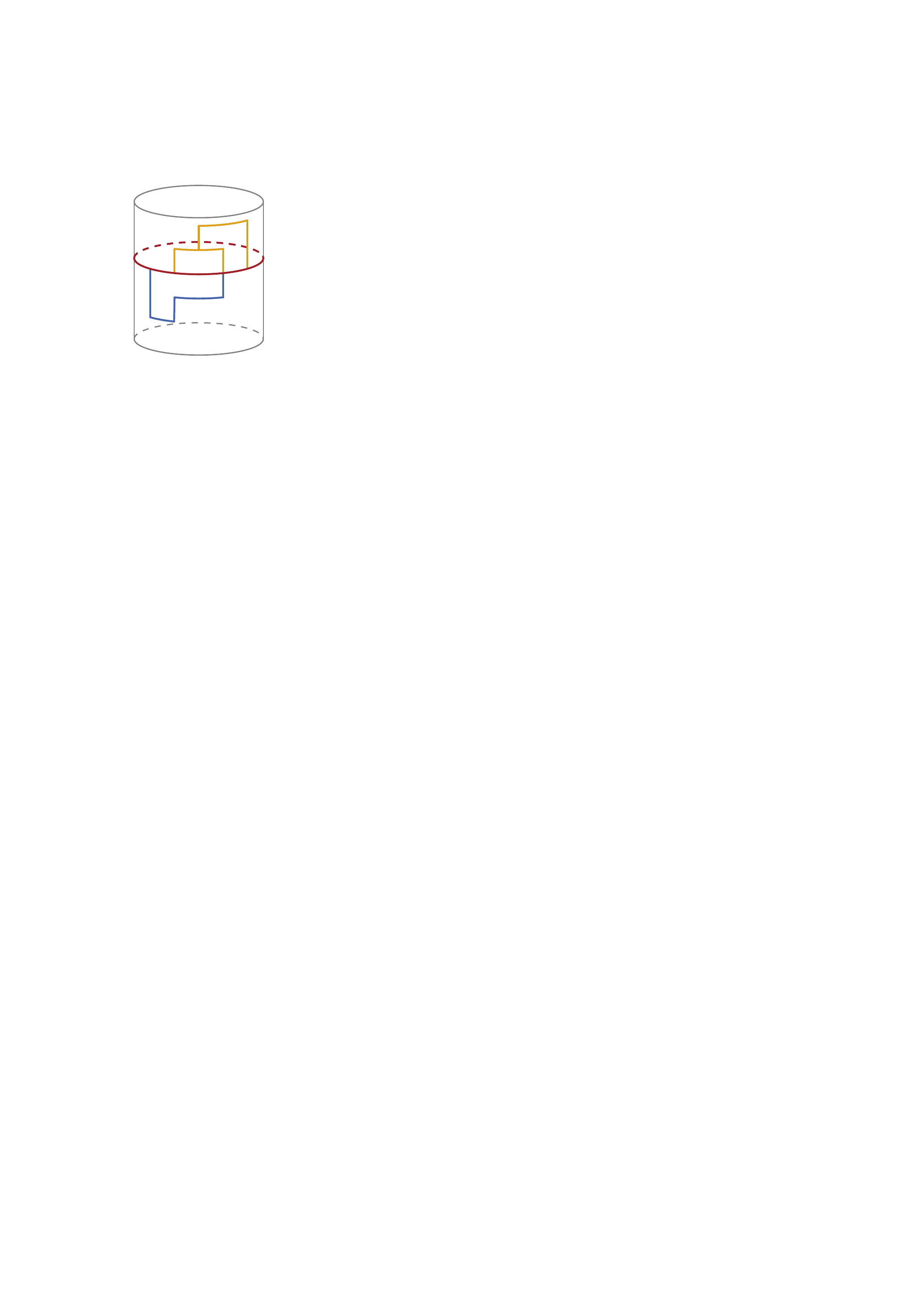}
  }
  \hfill
  \caption{An ortho-radial drawing of a graph on a grid
    \protect\subref{fig:pre:drawing-grid} and its equivalent interpretation
    as an orthogonal drawing on a cylinder
    \protect\subref{fig:pre:drawing-cylinder}.}
  \label{fig:pre:drawing}
\end{figure}

\begin{figure}[t]
  \centering
  \includegraphics[width=0.7\textwidth]{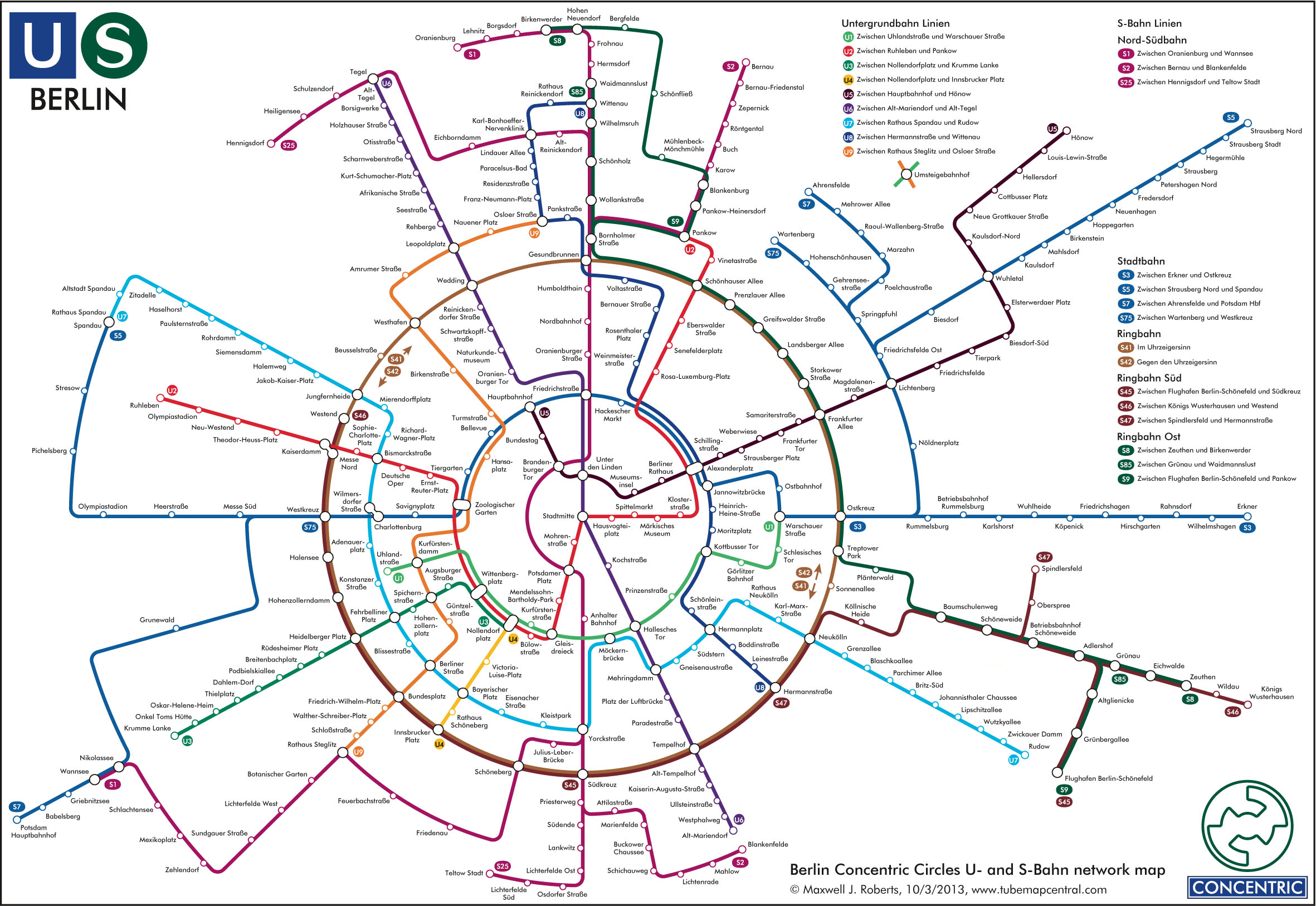}

  \caption{Metro map of Berlin using an ortho-radial layout. Image
    copyright by Maxwell J.~Roberts.}
    \label{fig:intro:berlin}
\end{figure}

\section{Introduction}
\label{sec:introduction}
\emph{Grid drawings} of graphs map vertices to grid points, and edges to internally disjoint curves on the grid lines connecting their endpoints. The appropriate choice of the underlying
grid is decisive for the quality and properties of the drawing.
\emph{Orthogonal grids}, where the grid lines are horizontal and
vertical lines, are popular and widely used in graph drawing.  Their
strength lies in their simple structure, their high angular
resolution, and the limited number of directions.  Graphs admitting
orthogonal grid drawings must be \emph{4-planar}, i.e., they must be
planar and have maximum degree~4.  On such grids a single edge
consists of a sequence of horizontal and vertical grid segments.  A
transition between a horizontal and a vertical segment on an edge is
called a \emph{bend}. A typical optimization goal is the minimization
of the number of bends in the drawing, either in total or by the
maximum number of bends per edge.

The popularity and usefulness of orthogonal drawing have their
foundations in the seminal work of Tamassia~\cite{t-emn-87} who
showed that for \emph{plane graphs}, i.e., planar graphs with a fixed
embedding, a bend-minimal planar orthogonal drawing can be computed efficiently.  More generally,
Tamassia~\cite{t-emn-87} established the so-called
Topology-Shape-Metrics framework, abbreviated as TSM in the following,
for computing orthogonal drawings of 4-planar graphs.

The goal of this work is to provide a similar framework for
\emph{ortho-radial drawings}, which are based on ortho-radial grids rather
than orthogonal grids. Such drawings have applications in schematic
graph layouts, e.g., for metro maps and destination maps; see
Fig.~\ref{fig:intro:berlin} for an example.  An \emph{ortho-radial
  grid} is formed by $M$ concentric circles and $N$ spokes, where
$M,N\in\N$.  More precisely, the radii of the concentric circles are
integers, and the $N$ spokes emanate from the center~$c=(0,0)$ of the
circles such that they have uniform angular resolution; see
Fig.~\ref{fig:pre:drawing}. We note that $c$ does not belong to the
grid.  Again vertices are positioned at grid points and edges are
drawn as internally disjoint chains of (1) segments of spokes and (2)
arcs of concentric circles.  As before, a \emph{bend} is a transition between a spoke segment
and an arc segment.  We observe that ortho-radial
drawings can also be thought of as orthogonal drawings on a cylinder;
see Fig.~\ref{fig:pre:drawing}.  Edge segments that are originally
drawn on spokes are parallel to the axis of the cylinder, whereas
segments that are originally drawn as arcs are drawn on circles
orthogonal to the axis of the cylinder.

It is not hard to see that every orthogonal drawing can be transformed
into an ortho-radial drawing with the same number of bends.  The
converse is, however, not true.  It is readily seen that for example a
triangle, which requires at least one bend in an orthogonal drawing,
admits an ortho-radial drawing without bends, namely in the form of a
circle centered at the origin.  In fact, by suitably nesting
triangles, one can construct graphs that have an ortho-radial drawing
without bends but require a linear number of bends in any orthogonal
drawing. Similarly, the octahedron, which requires an edge with three
bends in a planar orthogonal drawing has an ortho-radial drawing with
two bends per edge; see Fig.~\ref{fig:intro:octahedron}.

\begin{figure}[tb]
  \centering
  \includegraphics[scale=1.2]{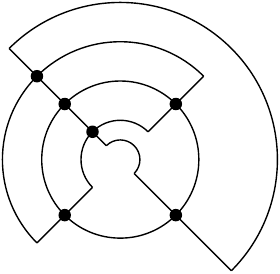}
  \hspace{6ex}
  \includegraphics{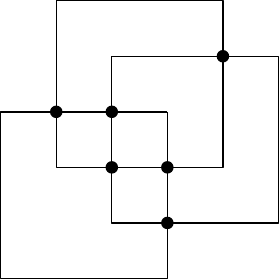}
  \captionof{figure}{Bend-minimal drawings of an octahedron graph. In the ortho-radial drawing, every edge has at most two bends, while in the orthogonal drawing, one edge requires three bends. Also, the ortho-radial drawing requires just eight bends, while the orthogonal drawing requires twelve bends in total.}
 \label{fig:intro:octahedron}
\end{figure}

\subparagraph*{Related Work.}

The case of planar orthogonal drawings is well-researched and there is
a plethora of results known; see~\cite{dg-popda-13} for a survey.
Here, we mention only the most recent results and those which are most
strongly related to the problem we consider.  A $k$-embedding is an
orthogonal planar drawing such that each edge has at most $k$ bends.
It is known that, with the single exception of the octahedron graph, every planar graph has a 2-embedding and that
deciding the existence of a 0-embedding is \NP-complete \cite{gt-ccurpt-01}.  The latter in
particular implies that the problem of computing a planar orthogonal
drawing with the minimum number of bends is \NP-hard.  In contrast, the
existence of a 1-embedding can be tested
efficiently~\cite{bkrw-odfc-14}; this has subsequently been
generalized to a version that forces up to $k$ edges to have no bends
in FPT time w.r.t $k$~\cite{blr-ogdie-14}, and to an optimization
version that optimizes the number of bends beyond the first one on
each edge~\cite{brw-oodc-13}.  Moreover, for series-parallel graphs
and graphs with maximum degree~3 an orthogonal planar drawing with the
minimum number of bends can be computed
efficiently~\cite{blv-sood-98}.

The \NP-hardness of the bend minimization problem has also inspired the
study of bend minimization for planar graphs with a fixed embedding.
In his fundamental work Tamassia~\cite{t-emn-87} showed that for plane
graphs, i.e., planar graphs with a fixed planar embedding,
bend-minimal planar orthogonal drawing can be computed in polynomial
time.  The running time has subsequently been improved to
$O(n^{1.5})$~\cite{ck-am-12}.  Key to this result is the existence of
a combinatorial description of planar orthogonal drawings of a plane
graph in terms of the angle surrounding each vertex and the order and
directions of bends on the edges but neglecting any kind of geometric
information such as coordinates and edge lengths.  In particular, this
allows to efficiently compute bend-minimal orthogonal drawings of
plane graphs by mapping the purely combinatorial problem of
determining an orthogonal representation with the minimum number of
bends to a flow problem.

\begin{wrapfigure}[15]{r}{0.33\textwidth}
  \centering
  \includegraphics[scale=1.2]{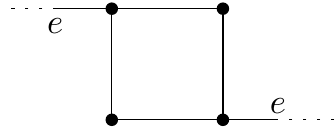}
  \caption{In this drawing, the angles around vertices sum up to $360^\circ$, and also the sum of angles for each face is as expected for an ortho-radial drawing. However, the graph does not have an ortho-radial drawing without bends.}
  \label{fig:intro:noteasy}
\end{wrapfigure}

In addition, there is a number of works that seek to characterize the
plane graphs that can be drawn without bends.  In this case an
orthogonal representation essentially only describes the angles around
the vertices.  Rahman, Nishizeki and Naznin~\cite{rnn-odpgb-03}
characterize the plane graphs with maximum degree~3 that admit such a
drawing and Rahman, Nakano and Nishizeki~\cite{rnn-rgdpg-98} characterize the plane graphs
that admit a rectangular drawing where in addition the contour of each
face is a rectangle.

In the case of ortho-radial drawings much less is known.  A natural
generalization of the properties of an orthogonal representation to the
ortho-radial case, where the angles around the vertices and inside the
faces are constrained, has turned out not be sufficient for drawability; see
Fig.~\ref{fig:intro:noteasy}. So far characterizations of bend-free
ortho-radial drawing have
only been achieved for paths, cycles, and theta
graphs~\cite{hht-orthoradial-09}.  For the special case of rectangular
ortho-radial drawings, i.e., every internal face is bounded by a rectangle, a
characterization is known for cubic graphs~\cite{hhmt-rrdcp-10}.

\subparagraph*{Contribution and Outline.}

Since deciding whether a 4-planar graph can be orthogonally drawn in
the plane without any bends is \NP-complete~\cite{gt-ccurpt-01}, it is
not surprising that also ortho-radial bend minimization is \NP-hard; see 
Section~\ref{sec:bend-minimization}.

As our main result we introduce a generalization of an orthogonal
representation, which we call \emph{ortho-radial representation}, that
characterizes the 4-plane graphs having bend-free ortho-radial drawings; see Theorem~\ref{thm:repr:characterization}.

This significantly generalizes the corresponding results for paths, cycles,
theta graphs~\cite{hht-orthoradial-09}, and cubic graphs~\cite{hhmt-rrdcp-10}.
Further, this characterization can be seen as a step towards an extension of the
TSM framework for computing ortho-radial drawings that may have bends. Namely,
once the angles around vertices and the order and directions of bends along each
edge of a graph $G$ have been fixed, we can replace each bend by a vertex to
obtain a graph $G'$.  The directions of bends and the angles at the vertices of
$G$ define a unique ortho-radial representation of $G'$, which is valid if and
only if $G$ admits a drawing with the specified angles and bends. Thus,
ortho-radial drawings can indeed be described by angles around vertices and
orders and directions of bends on edges. Our main result therefore implies that
ortho-radial drawings can be computed by a TSM framework, i.e., by fixing a
combinatorial embedding in a first ``Topology'' step, determining a description
of the drawing in terms of angles and bends in a second ``Shape'' step, and
computing edge lengths and vertex coordinates in a final ``Metrics'' step.

In the following, we disregard the ``Topology'' step and assume that
our input consists of a 4-planar graph with a fixed
\emph{combinatorial embedding}, i.e., the order of the incident edges
around each vertex is fixed and additionally, one outer face and one
central face are specified; the latter shall contain the center of the
drawing.  We present our definition of an ortho-radial representation
in Section~\ref{sec:ortho-radial-representations}.  After introducing
some basic tools based on this representation in
Section~\ref{sec:preliminaries}, we first present in
Section~\ref{sec:characterization-rect} a characterization for
rectangular graphs, whose ortho-radial representation is such that
internal faces have exactly four $90^\circ$ angles, while all other
incident angles are $180^\circ$; i.e., they have to be drawn as
rectangles.

The algorithm we use as a proof of drawability corresponds to the
``Metrics'' step of an ortho-radial TSM framework.  The
characterization corresponds to the output of the ``Shape''
step. Based on that special case of 4-planar graphs, we then present
the characterization and the ``Metrics'' step for general 4-planar
graphs in Section~\ref{sec:rectangulation}.

\section{Ortho-Radial Drawings}

In this section we introduce basic definitions and conventions on
ortho-radial drawings that we use throughout this paper. To that end,
we first introduce some basic definitions.

We assume that paths cannot contain vertices multiple times but cycles
can, i.e. all paths are simple. We always consider cycles that are
directed clockwise, so that their interior is locally to the right of
the cycle. A cycle is part of its interior and its exterior.

For a path $P$ and vertices $u$ and $v$ on $P$, we denote the subpath
of $P$ from $u$ to $v$ (including these vertices) by
$\subpath{P}{u,v}$. We may also specify the first and last edge instead and
write, e.g., $\subpath{P}{e,e'}$ for edges $e$ and $e'$ on $P$. We denote the
concatenation of two paths $P$ and $Q$ by $P + Q$.
For any path $P=v_1\dots v_k$, we
define its reverse $\reverse{P}=v_k\dots v_1$. For a cycle~$C$
that contains any edge at most once (e.g.~if $C$ is simple), we extend
the notion of subpaths as follows: For two edges $e$ and $e'$ on $C$,
the subpath $\subpath{C}{e,e'}$ is the unique path on $C$
that starts with $e$ and ends with $e'$.  If the start vertex $v$ of
$e$ identifies $e$ uniquely, i.e., $C$ contains $v$ exactly once, we
may write $\subpath{C}{v, e'}$ to describe the path on $C$ from $v$ to
$e'$. Analogously, we may identify $e'$ with its endpoint if this is
unambiguous.

We are now ready to introduce concepts of ortho-radial drawings.
We are given a $4$-planar graph $G=(V,E)$ with fixed embedding. We refer
to the directed edge from $u$ to $v$ by $uv$.  Consider an
ortho-radial drawing~$\Delta$ of $G$. Recall that we allow no bends on edges.
In $\Delta$ a directed edge~$e$ is either drawn clockwise, counter-clockwise,
towards the center or away from the center; see Fig.~\ref{fig:pre:drawing}.
Using the
metaphor of drawing $G$ on a cylinder, we say that $e$ points
\emph{right}, \emph{left}, \emph{down} or \emph{up},
respectively. Edges pointing left or right are \emph{horizontal} edges
and edges pointing up or down are \emph{vertical} edges.

There are two fundamentally different ways of drawing a simple cycle~$C$: The
center of the grid may lie in the interior or the
exterior of $C$. In the former case $C$ is \emph{essential} and in the
latter case it is \emph{non-essential}. In Fig.~\ref{fig:pre:drawing}
the red cycle is essential and the blue cycle is non-essential.

We further observe that~$\Delta$ has two special faces: One unbounded
face, called the \emph{outer face}, and the \emph{central face}
containing the center of the drawing.  These two faces are equal if
and only if $\Delta$ contains no essential cycles. All other faces of
$G$ are called \emph{regular}.  Ortho-radial drawings without
essential cycles are equivalent to orthogonal
drawings~\cite{hht-orthoradial-09}.  That is, any such ortho-radial
drawing can be transformed to an orthogonal drawing of the same graph
with the same outer face and vice versa.

We represent a face as a cycle $f$ in which the interior of the face
lies locally to the right of~$f$. Note that $f$ may not be simple
since cut vertices may appear multiple times on $f$. But no directed
edge is used twice by $f$.  Therefore, the notation of subpaths of
cycles applies to faces.  Note furthermore that the cycle bounding the
outer face of a graph is directed counter-clockwise, whereas all other
faces are bounded by cycles directed clockwise.

A face~$f$ in $\Delta$ is a \emph{rectangle} if and only if its
boundary does not make any left turns.  That is, if $f$ is a regular
face, there are exactly $4$ right turns, and if $f$ is the central or
the outer face, there are no turns at all.  Note that by this
definition $f$ cannot be a rectangle if it is both the outer and the
central face.

\section{Ortho-Radial Representations}\label{sec:ortho-radial-representations}

In this section, we define ortho-radial representations. These are a tool to describe
the ortho-radial drawing of a graph without fixing any edge lengths. As mentioned in the introduction, they are an analogon to orthogonal representations in the TSM framework.

We observe all directions of all edges being fixed once all angles around vertices are fixed.

For two edges $uv$ and $vw$ that enclose the
angle~$\alpha\in\{90\degree,180\degree,270\degree,360\degree\}$ at $v$
(such that the angle measured lies locally to the right of $uvw$), we
define the rotation~$\rot(uvw)=2-\alpha/90\degree$.  We note that the
rotation is 1 if there is a right turn at $v$, $0$ if $uvw$ is
straight, and $-1$ if a left turn occurs at $v$. If $u=w$, it is
$\rot(uvw)=-2$.

We define the rotation of a path $P=v_1\dots v_k$ as the sum of the rotations at its internal vertices, i.e.,
$
\rot(P)=\sum_{i=2}^{k-1}\rot(v_{i-1}v_iv_{i+1}).
$
Similarly, for a cycle $C=v_1\dots v_kv_1$, its rotation is the sum of the rotations at all its vertices (where we define $v_0=v_k$), i.e.,
$
\rot(C)=\sum_{i=1}^{k}\rot(v_{i-1}v_iv_{i+1}).
$

When splitting a path at an edge, the sum of the rotations of the two
parts is equal to the rotation of the whole path.
\begin{observation}\label{obs:rot_splitting_path}
  Let $P$ be a simple path with start vertex $s$ and end vertex
  $t$. For all edges $e$ on $P$ it holds that $ \rot(P) =
  \rot(\subpath{P}{s, e}) + \rot(\subpath{P}{e,t})$.
\end{observation}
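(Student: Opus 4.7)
The plan is to prove the observation by directly unpacking the definition of rotation as a sum over internal vertices and verifying that splitting $P$ at $e$ partitions these internal vertices in exactly the right way.

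First, I would write $P = v_1 \dots v_k$ with $s = v_1$ and $t = v_k$, and let $e = v_i v_{i+1}$ for some index $1 \le i \le k-1$. Using the subpath convention introduced for paths (where $\subpath{P}{s, e}$ denotes the subpath ending with the edge $e$, and $\subpath{P}{e, t}$ denotes the subpath starting with $e$), this gives $\subpath{P}{s, e} = v_1 \dots v_i v_{i+1}$ and $\subpath{P}{e, t} = v_i v_{i+1} \dots v_k$.

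Next, I would apply the definition of $\rot$ for paths to each subpath. The internal vertices of $\subpath{P}{s, e}$ are $v_2, \dots, v_i$, while the internal vertices of $\subpath{P}{e, t}$ are $v_{i+1}, \dots, v_{k-1}$. These two sets are disjoint and their union is exactly $\{v_2, \dots, v_{k-1}\}$, the set of internal vertices of $P$. Since the rotation at each such internal vertex $v_j$ is determined solely by the triple $v_{j-1} v_j v_{j+1}$, which appears identically in $P$ and in whichever of the two subpaths contains $v_j$ as an internal vertex, the sums add up:
\[
\rot(\subpath{P}{s,e}) + \rot(\subpath{P}{e,t}) = \sum_{j=2}^{i} \rot(v_{j-1}v_jv_{j+1}) + \sum_{j=i+1}^{k-1} \rot(v_{j-1}v_jv_{j+1}) = \rot(P).
\]

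The only points that need a moment's care are the degenerate cases where $e$ is the first or last edge of $P$ (so $i = 1$ or $i = k-1$): then one of the two subpaths consists of the single edge $e$ and has no internal vertices, contributing an empty sum of value $0$, while the other subpath coincides with $P$; the identity still holds. There is no real obstacle here — the statement is essentially a bookkeeping fact about how the defining sum of $\rot$ splits when the index range is partitioned at $i$, and the simplicity of $P$ ensures that no vertex is visited twice so no term is double-counted.
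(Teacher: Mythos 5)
Your proof is correct: the paper states this as an observation without proof, treating it as an immediate consequence of the definition of $\rot$ as a sum over internal vertices, which is exactly the bookkeeping argument you carry out. Your explicit handling of the index partition and the degenerate cases where $e$ is the first or last edge is the intended (and only natural) justification.
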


Furthermore, reversing a path changes all left turns to right turns and vice versa. Hence, the sign of the rotation is flipped.
\begin{observation}\label{obs:rot_reverse}
For any path~$P$ it holds $
\rot(\reverse{P}) = -\rot(P)$.
\end{observation}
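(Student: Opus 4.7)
The plan is to reduce the global claim to a local claim at each internal vertex of $P$. Writing $P = v_1 v_2 \cdots v_k$, so that $\reverse{P} = v_k v_{k-1} \cdots v_1$, the internal vertices of $\reverse{P}$ are the same as those of $P$, only traversed in reverse order. Expanding the defining sum and re-indexing,
\[
\rot(\reverse{P}) = \sum_{i=2}^{k-1} \rot(v_{i+1} v_i v_{i-1}),
\]
so it suffices to prove, for every internal index $i$, the pointwise identity $\rot(v_{i+1} v_i v_{i-1}) = -\rot(v_{i-1} v_i v_{i+1})$.

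To establish this identity, I would use the fact that $P$ is simple, which ensures $v_{i-1} \neq v_{i+1}$ and hence the U-turn case $\alpha = 360\degree$ cannot occur; the angle $\alpha$ to the right of $v_{i-1} v_i v_{i+1}$ therefore lies in $\{90\degree, 180\degree, 270\degree\}$. The key geometric observation is that swapping the traversal direction at $v_i$ interchanges the ``right'' and ``left'' sides of the walk, so the angle to the right of the reversed triple $v_{i+1} v_i v_{i-1}$ is the complementary angle $360\degree - \alpha$ (using that the total angle around $v_i$ is $360\degree$). Plugging into the definition,
\[
\rot(v_{i+1} v_i v_{i-1}) = 2 - \frac{360\degree - \alpha}{90\degree} = -\Bigl(2 - \frac{\alpha}{90\degree}\Bigr) = -\rot(v_{i-1} v_i v_{i+1}).
\]

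Summing this pointwise identity over $i = 2, \dots, k-1$ yields the claim. There is no real obstacle; the only thing to be slightly careful about is verifying that the degenerate case $u = w$ never arises, which is exactly what simplicity of $P$ buys us, and stating cleanly why the two complementary wedges at $v_i$ sum to $360\degree$ rather than, say, being separated by other incident edges of $v_i$ — this is fine because the ``angle to the right'' is by definition the full sum of wedge-angles on one side of the walk, and the two sides partition the $360\degree$ around $v_i$.
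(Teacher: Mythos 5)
Your proposal is correct and matches the paper's reasoning: the paper justifies this observation only by the informal remark that reversing a path turns left turns into right turns and flips the sign, which is exactly the pointwise identity $\rot(v_{i+1}v_iv_{i-1})=-\rot(v_{i-1}v_iv_{i+1})$ you prove via the complementary angle $360\degree-\alpha$. Your explicit handling of the excluded $u=w$ case (where the identity would fail) via simplicity of $P$ is a worthwhile detail the paper leaves implicit.
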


The next observation analyzes \emph{detours}; see also Figure~\ref{fig:pre:rot_path_detour}.

\begin{figure}[tb]
 \centering
 \includegraphics{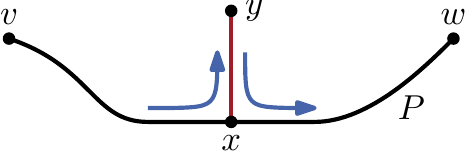}
 \hspace{1ex}
 \includegraphics{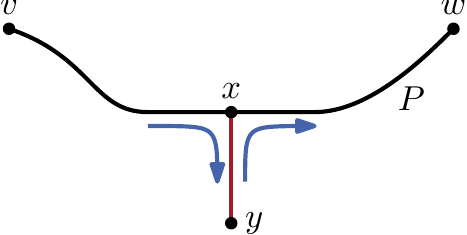}
 \captionof{figure}{The situation of Observation~\ref{obs:pre:rot_path_detour}.
 If one makes a detour via $xy$ (without accounting for the $360\degree$-turn at $y$), the rotation changes by $-2$, if $xy$ is to the left of $P$, or $+2$, if $xy$ lies to the right.}
 \label{fig:pre:rot_path_detour}
\end{figure}

\begin{observation}\label{obs:pre:rot_path_detour}
Let $P$ be a path from $v$ to $w$ and $xy$ an edge not on $P$ such that $x$ is
an internal vertex of $P$.
If $xy$ lies locally to the left of $P$,
$
\rot\left(\subpath{P}{v,x}\join xy\right) + \rot\left(yx+\subpath{P}{x,w}\right) = \rot\left(P\right) - 2.
$
Further, if $xy$ lies locally to the right of $P$,
$
\rot\left(\subpath{P}{v,x}\join xy\right) + \rot\left(yx+\subpath{P}{x,w}\right) = \rot\left(P\right) + 2.
$
\end{observation}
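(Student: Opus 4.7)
The plan is to localize the identity to the vertex $x$. Let $P_1 = \subpath{P}{v,x}$ and $P_2 = \subpath{P}{x,w}$, and let $a$ and $b$ be the neighbors of $x$ on $P_1$ and $P_2$, respectively. I would expand each of the three rotations as a sum over internal vertices; noting that $y$ is an endpoint of both $P_1+xy$ and $yx+P_2$ and therefore contributes nothing, every rotation at an internal vertex of $P$ other than $x$ will appear identically on both sides of the desired equation. Consequently the observation reduces to the purely local statement
\[
\rot(axy) + \rot(yxb) - \rot(axb) = \begin{cases} +2 & \text{if $xy$ lies locally to the right of $P$,}\\ -2 & \text{if $xy$ lies locally to the left of $P$.}\end{cases}
\]

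To handle this local identity I would substitute $\rot(uvw) = 2 - \alpha(uvw)/90\degree$, where $\alpha(uvw)$ denotes the angle at $v$ on the right side of the turn $uvw$. The identity then becomes $\alpha(axy) + \alpha(yxb) - \alpha(axb) \in \{0\degree,\, 360\degree\}$, which I would settle by a short case analysis on the cyclic order of the edges incident to $x$. If $xy$ sits between $xa$ and $xb$ on the right side of $P$, then $\alpha(axy)$ and $\alpha(yxb)$ partition $\alpha(axb)$ and the bracket vanishes, yielding $+2$. If $xy$ sits on the left side instead, then going clockwise from $xa$ one passes $xb$ before reaching $xy$, so the two right-side angles $\alpha(axy)$ and $\alpha(yxb)$ together sweep a full turn around $x$, adding an extra $360\degree$ and yielding $-2$.

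The main obstacle is simply keeping the orientation conventions straight: because $\alpha$ is always the angle on the \emph{right} of the turn, the $360\degree$ asymmetry between the two cases is precisely the expected full wrap around $x$ when the detour leaves $P$ on the left. Once the conventions are pinned down, the rotation decomposition at $x$ and the angle computation are both immediate, and combining them yields the observation.
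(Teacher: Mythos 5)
Your proposal is correct: the paper states this as an unproved observation (justified only by a figure), and your reduction to the purely local identity $\rot(axy)+\rot(yxb)-\rot(axb)=\pm 2$ at $x$, followed by the right-side angle-sum case analysis, is precisely the intended argument. The only nit is the direction word in the left-hand case (with the paper's convention that a path's right side is swept counterclockwise from the incoming ray $xa$, one passes $xb$ \emph{before} $xy$ when $xy$ lies to the left), but this does not affect the conclusion.
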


\begin{SCfigure}[50][tb]
    \centering
    \includegraphics[page=2]{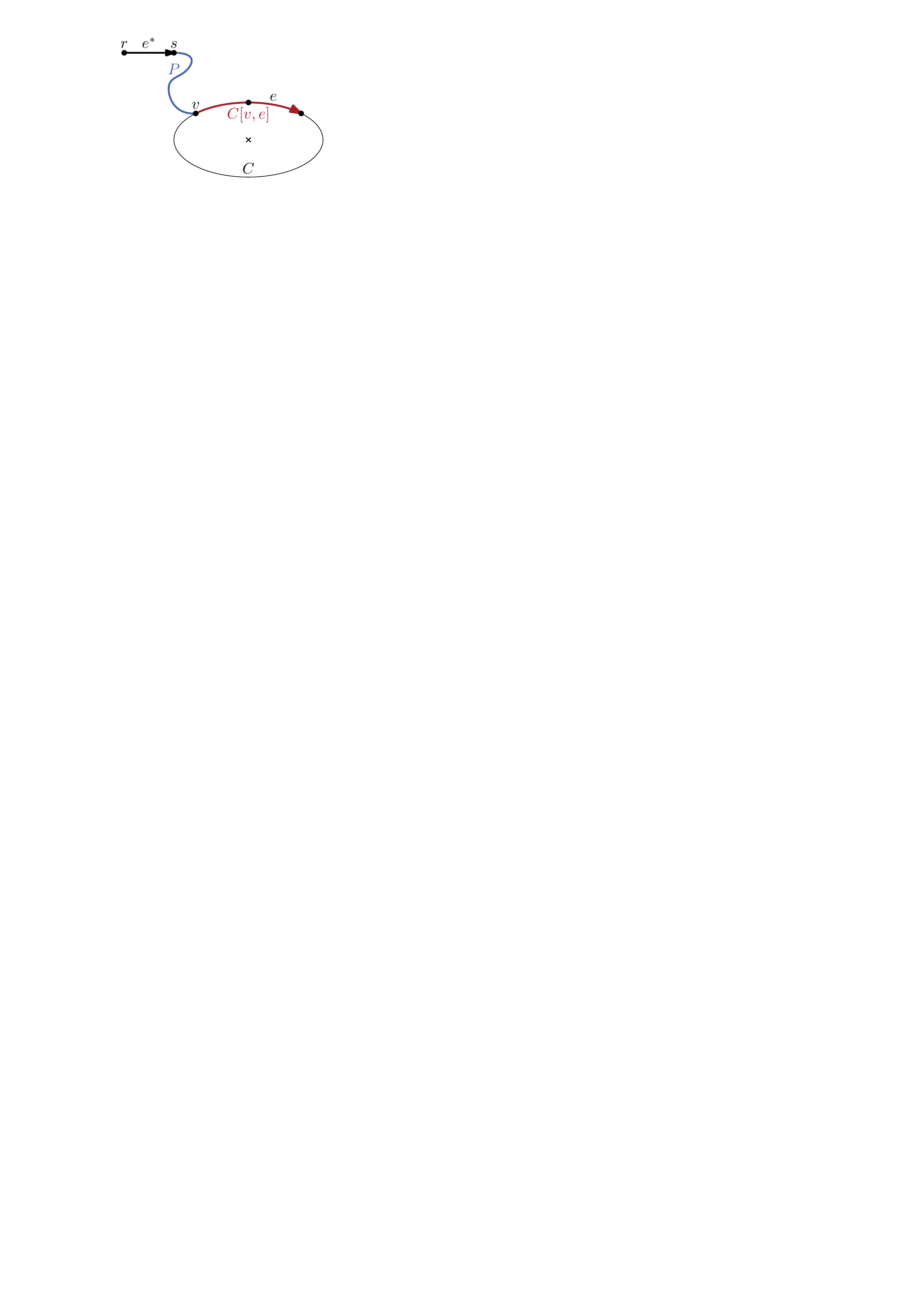}
       \caption{The labeling of $e$ induced by $P$ is
      $\ell^P_C(e)=\rot(e^\star\join P\join \subpath{C}{v,e})$.}
    \label{fig:repr:labeling}
    
  \end{SCfigure}

An \emph{ortho-radial representation} of a 4-planar graph~$G$ consists of a list
$H(f)$ of pairs $(e, a)$ for each face $f$, where $e$ is an edge on~$f$, and
$a\in\{90,180,270,360\}$.  Further, the outer face and the central face are
fixed and one \emph{reference edge} $e^\star$ in the outer face is given, with
$e^\star$ oriented such that the outer face is locally to its left. By
convention the
reference edge is always drawn such that it points right.
We interpret the fields of a pair in $H(f)$ as follows:
$e$ denotes the edge on $f$ directed such that $f$ lies to the right
of $e$. The field $a$ represents the
angle inside $f$ from $e$ to the following edge in degrees. Using this
information we define the \emph{rotation} of such a pair $t=(e,a)$
as $\rot(t)={(180-a)}/{90}$.

Not every ortho-radial representation also yields a valid ortho-radial
drawing of a graph. In order to characterize valid ortho-radial
representations we introduce \emph{labelings} of essential cycles.
These labelings prove to be a valuable tool to ensure that the
drawings of all cycles of the graph are compatible with each other.

Let $G$ be an embedded 4-planar graph and let~$e^\star=rs$ be a
reference edge of~$G$.  Further, let $C$ be a simple, essential cycle
in $G$, and let $P$ be a path from $s$ to a vertex $v$ on $C$.  The
\emph{labeling}~$\ell^P_C$ of $C$ induced by $P$ is defined for each
edge~$e$ on~$C$ by $\ell^P_C(e)=\rot(e^\star\join
P\join\subpath{C}{v,e})$; see Fig.~\ref{fig:repr:labeling} for an
illustration.  We are mostly interested in labelings that are induced
by paths starting at $s$ and intersecting $C$ only at their endpoints. We 
call such paths \emph{elementary paths}.

We now introduce properties characterizing bend-free ortho-radial drawings, as we prove in Theorem~\ref{thm:repr:characterization}.

\begin{definition}\label{def:repr:valid_representation}
An ortho-radial representation is \emph{valid}, if the following conditions hold:
\begin{enumerate}
\item\label{cond:repr:sum_of_angles} The angle sum of all edges around each vertex given by the $a$-fields is 360.
\item\label{cond:repr:rotation_faces} For each face $f$, it is
\[
\rot(f)=
\begin{cases}
  4, & \text{$f$ is a regular face} \\
  0, & \text{$f$ is the outer or the central face but not both} \\
  -4,& \text{$f$ is both the outer and the central face} \\
\end{cases}
\]
\item\label{cond:repr:labeling} For each simple, essential cycle $C$ in $G$,
there is a labeling $\ell^P_C$ of $C$ induced by an elementary path $P$ such
that either $\ell^P_C(e)=0$ for all edges~$e$ of $C$, or there are edges $e_+$
and $e_-$ on $C$ such that $\ell^P_C(e_+)>0$ and $\ell^P_C(e_-)<0$.
\end{enumerate}
\end{definition}

The first two conditions ensure that the angle-sums at vertices
and inside the faces are correct.
Since the labels of neighboring edges differ by at most $1$ the last condition
ensures that on each essential cycle with not only horizontal edges there are
edges with labels $1$ and $-1$, i.e., edges pointing up and down. This reflects
the characterization of cycles~\cite{hht-orthoradial-09}. Basing all labels
on the reference edge guarantees that all cycles in the graph can be drawn
together consistently.

For an essential cycle~$C$ not satisfying the last
condition there are two possibilities: Either all labels of edges on $C$ are
non-negative and at least one label is positive, or all are non-positive and
at least one is negative. In the former case $C$ is
called \emph{decreasing} and in the latter
case it is \emph{increasing}. We call both \emph{monotone} cycles. Cycles with only the label 0 are not monotone.

We show that a graph with a given ortho-radial representation can be
drawn if and only if the representation is valid, which yields our main result:

\begin{theorem}\label{thm:repr:characterization}
  A 4-plane graph admits a bend-free orthogonal drawing if and only if
  it admits a valid ortho-radial representation.
\end{theorem}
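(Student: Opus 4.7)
The theorem is an ``if and only if'' so I would split it into the two directions, with the bulk of the work lying in showing that validity implies drawability.

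\textbf{Necessity (drawing $\Rightarrow$ valid representation).} Given a bend-free ortho-radial drawing~$\Delta$, I would define the representation by reading off, for each face, the angle on its right side at each vertex. Condition~\ref{cond:repr:sum_of_angles} is immediate from $\Delta$. For Condition~\ref{cond:repr:rotation_faces}, I would interpret the rotation of a face boundary as the total turning of a pedestrian walking along it on the cylinder: a regular face is a topological disk so its boundary rotates by $+4$ quarter turns; the outer and central faces are annular when distinct, contributing $0$; and if they coincide, the graph is contractible on the cylinder and the standard Euler-characteristic argument yields $-4$. For Condition~\ref{cond:repr:labeling}, I would take any simple essential cycle $C$ and choose an elementary path $P$ from $s$ to $C$ existing by planarity. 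Then $\ell_C^P(e) = \rot(e^\star + P + \subpath{C}{v,e})$ equals the algebraic ``height'' on the cylinder at which the endpoint of $e$ lies (measured from $s$), because each rotation value $\pm 1$ encodes a transition between horizontal and vertical in the appropriate direction. Since $C$ is essential, it closes up around the cylinder, so either all edges of $C$ are horizontal (all labels zero) or $C$ must at some point go up and at some other point go down, producing edges with $\ell_C^P(e_+)>0$ and $\ell_C^P(e_-)<0$.

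\textbf{Sufficiency (valid representation $\Rightarrow$ drawing).} This is the deep direction and I would follow the two-step outline already announced in the paper. First, I would handle the \emph{rectangular} special case (Section~\ref{sec:characterization-rect}): if every regular face is a rectangle, one can set up a system of edge-length constraints — vertical edges on opposite sides of a rectangle must agree, and so must horizontal ones — and solve it as a flow problem analogous to Tamassia's, using Conditions~\ref{cond:repr:sum_of_angles} and~\ref{cond:repr:rotation_faces} to guarantee local consistency and Condition~\ref{cond:repr:labeling} to rule out monotone cycles that would otherwise make the length system infeasible on the cylinder. Second, for the general case I would reduce to the rectangular case by \emph{rectangulation}: augment $G$ with dummy edges that subdivide each non-rectangular face into rectangles while extending the representation so it remains valid. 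The labelings of essential cycles in the augmented graph must still satisfy Condition~\ref{cond:repr:labeling}, and a drawing of the rectangulated graph restricts to a drawing of~$G$ after removing the dummies.

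\textbf{Main obstacle.} I expect two steep steps. The first is proving drawability of a valid rectangular representation: one needs to show that the absence of monotone essential cycles is precisely what makes the associated length-flow feasible, which is subtle because monotonicity is a global condition depending on the chosen reference edge and elementary path. The second, and likely harder, is the rectangulation itself — adding dummies without ever creating a monotone essential cycle, and without violating Conditions~\ref{cond:repr:sum_of_angles} or~\ref{cond:repr:rotation_faces}. Extra care is required because an innocent-looking subdivision inside one face can create new essential cycles whose labelings combine contributions from several original faces, so the invariant that needs to be maintained must be strong enough to survive each augmentation step. Once these two obstacles are cleared, composition with the necessity direction closes the equivalence and yields Theorem~\ref{thm:repr:characterization}.
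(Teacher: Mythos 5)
Your overall architecture coincides with the paper's: prove the rectangular case by a Tamassia-style flow argument and reduce the general case to it by rectangulation. But as written the proposal is a restatement of that roadmap rather than a proof, and both directions have genuine gaps. In the necessity direction, your central claim that $\ell^P_C(e)$ ``equals the algebraic height'' of the endpoint of $e$ is false: the label is a rotation, and only its residue modulo $4$ encodes the direction of $e$ (labels $1$ and $-3$ both mean ``down''). To establish Condition~\ref{cond:repr:labeling} you must exhibit a specific elementary path whose induced labeling is controlled; the paper does this for rectangular graphs by walking up/left from a topmost vertex $v$ of $C$, which forces the first edge of $C$ after $v$ to have label $0$ and the edges entering and leaving the topmost subpath to have labels $-1$ and $+1$, and it handles general graphs by first rectangulating the \emph{drawing} geometrically. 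You would also need the fact that all elementary paths induce the same labeling (Lemma~\ref{lem:repr:equivalence_pseudo_elementary_paths}), which itself rests on the detour and cycle-crossing lemmas of Section~\ref{sec:preliminaries}; without it, verifying Condition~\ref{cond:repr:labeling} for one chosen path is exactly what is asked, but your sketch gives no reason why that particular path's labeling, rather than the geometric height function, takes both signs.

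In the sufficiency direction the two ``steep steps'' you name are precisely where the proof lives, and neither is carried out. For the rectangular case, the crux is the equivalence between ``no monotone essential cycle'' and ``no node set $S$ in $N_\text{ver}$ with an incoming but no outgoing arc''; the paper proves the contrapositive by showing that the outermost (or innermost) boundary of the faces in $S$ is an essential cycle all of whose labels lie in $\{-2,-1,0\}$ with at least one equal to $-1$, again via an explicitly constructed down/right path from the reference edge. For the rectangulation, the dangerous case is resolving a left turn with a \emph{horizontal} edge: there it can happen that every single candidate edge creates a monotone cycle, and the paper needs the chain of Lemmas~\ref{lem:rect:first_candidate}--\ref{lem:rect:horizontal_cycle_valid} (the first candidate never creates an increasing cycle, the last never a decreasing one, hence some consecutive pair yields one decreasing and one increasing cycle, which are then combined into an all-zero horizontal path through $u$ to which the new edge can be attached without creating any monotone cycle). ``Extra care is required'' does not substitute for this argument, which is the technical heart of the theorem.
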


In the next section, we introduce further tools based on
labelings. Subsequently, in Section~\ref{sec:characterization-rect} we prove Theorem~\ref{thm:repr:characterization} for
rectangular graphs. In Section~\ref{sec:rectangulation}, we generalize the result to
4-planar graphs.

\section{Properties of Labelings}\label{sec:preliminaries}
In this section we study the properties of labelings in more detail to
derive useful tools for proving
Theorem~\ref{thm:repr:characterization}.  Throughout this section, $G$
is a 4-planar graph with ortho-radial representation~$\Gamma$ that
satisfies
Conditions~\ref{cond:repr:sum_of_angles} and \ref{cond:repr:rotation_faces}
of Definition~\ref{def:repr:valid_representation}. Further let $e^\star$ be a
reference edge.
From Condition~\ref{cond:repr:rotation_faces} we obtain that the rotation of
all essential cycles is 0:
\begin{observation}\label{obs:rot_essential_cycle}
For any simple essential cycle $C$ of $G$ it is $\rot(C) = 0$.
\end{observation}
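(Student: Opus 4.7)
\begin{proofsketch}
The plan is to derive the rotation of $C$ from the rotations of the faces lying in its interior, via a discrete Gauss--Bonnet style identity applied to the plane subgraph $G_C$ formed by $C$ together with everything inside it.

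First, since $C$ is essential, its interior contains the central face but not the outer face. Hence the faces of $G$ in the interior of $C$ consist of the central face together with some number of regular faces, whose rotations are $0$ and $4$ respectively by Condition~\ref{cond:repr:rotation_faces}. The outer face of $G_C$, viewed as a standalone plane graph, is bounded by $\reverse{C}$.

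Next, I would establish the general identity $\sum_{f} \rot(f) = 4|\E{H}| - 4|\V{H}|$, valid for any embedded plane graph~$H$ in which the angles at every vertex sum to $360\degree$. This follows by regrouping the sum defining $\rot(f)$ by vertex rather than by face and then invoking Condition~\ref{cond:repr:sum_of_angles}; the same identity applies to $G_C$ because the leftover sector outside $C$ at each vertex on $C$ becomes precisely the outer-face angle of $G_C$, so the vertex angle sums remain $360\degree$. Combining this identity applied to $G_C$ with Euler's formula would pin down the right-hand side in terms of the number $k$ of interior faces of $C$; equating with the known contributions from the central and regular interior faces leaves only $\rot(\reverse{C})$ unknown and yields $\rot(\reverse{C})=0$. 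Observation~\ref{obs:rot_reverse} then gives $\rot(C)=0$.

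The main obstacle is not the algebra but the careful bookkeeping required to transfer the vertex-angle identity from $G$ to $G_C$: at every vertex on $C$ the angles lying inside $G_C$ together with the single outer-face angle of $G_C$ must be verified to sum to $360\degree$. A secondary bookkeeping issue is the possibility that $G_C$ is disconnected (e.g.\ with isolated subgraphs sitting strictly inside $C$ and not attached to $C$); this is handled by summing Euler's formula componentwise, which adjusts both sides of the key equation by the same amount and leaves the conclusion intact.
\end{proofsketch}
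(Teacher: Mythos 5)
Your argument is correct and is precisely the standard face-summation argument (regroup corner rotations by vertex, apply Condition~\ref{cond:repr:sum_of_angles} and Euler's formula to the subgraph enclosed by $C$, and use that the enclosed faces are the central face plus regular faces) that the paper leaves implicit when it asserts the observation follows from Condition~\ref{cond:repr:rotation_faces}. Your final worry about $G_C$ being disconnected does not arise: since $G$ is connected and no path can leave the interior of $C$ without passing through a vertex of $C$, the subgraph $G_C$ is connected as well.
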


An ortho-radial representation fixes the direction of all edges: For the direction of an edge $e$, we consider a path~$P$ from
the reference edge to $e$ including both edges.  Different such paths
may have different rotations but we observe that these rotations
differ by a multiple of 4.
An edge $e$ is directed right, down, left, or up, if $\rot(P)$ is
congruent to 0, 1, 2, or 3 modulo 4, respectively.  If $e$ lies on an
essential cycle~$C$, we consider the label~$\ell^Q_C(e)$ induced by
any path~$Q$ instead of a path $P$, as the labels
are defined as rotations of such paths. Note that by this definition the
reference edge always points to the right.

\noindent{}Because the rotation of essential cycles is $0$ by
Observation~\ref{obs:rot_essential_cycle}, for two edges on an essential cycle
we observe:

\begin{observation}\label{obs:repr:label_difference}
  For any path $P$ and for any two edges $e$ and $e'$ on a simple,
  essential cycle~$C$, it holds that  $\rot(\subpath{C}{e,e'})=\ell^P_C(e')-\ell^P_C(e)$
\end{observation}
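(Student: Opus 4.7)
The plan is to compare the two walks whose rotations define $\ell^P_C(e)$ and $\ell^P_C(e')$. By definition these are $W = e^\star \join P \join \subpath{C}{v,e}$ and $W' = e^\star \join P \join \subpath{C}{v,e'}$, and they share the common prefix $e^\star \join P$. Hence the difference of their rotations can come only from the tails along $C$, and the goal is to show that this difference equals $\rot(\subpath{C}{e, e'})$.

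First suppose that $e$ appears before $e'$ on the traversal of $C$ starting at $v$, so that $e$ occurs as an edge of $\subpath{C}{v, e'}$. Splitting $W'$ at this occurrence of $e$ decomposes it into the initial piece $W = e^\star \join P \join \subpath{C}{v, e}$ and the tail $\subpath{C}{e, e'}$. Exactly as in Observation~\ref{obs:rot_splitting_path}, since $\rot$ is defined as the sum of the rotations at consecutive vertex-triples along the walk, this decomposition yields $\rot(W') = \rot(W) + \rot(\subpath{C}{e, e'})$, which rearranges directly to the claim.

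For the complementary case, in which $e'$ precedes $e$ on $\subpath{C}{v, \cdot}$, I would apply the previous step with the roles of $e$ and $e'$ exchanged to obtain $\rot(\subpath{C}{e', e}) = \ell^P_C(e) - \ell^P_C(e')$. Since the internal vertices of $\subpath{C}{e, e'}$ and of $\subpath{C}{e', e}$ together contribute exactly the rotations at all vertices of $C$, each once, we have $\rot(\subpath{C}{e, e'}) + \rot(\subpath{C}{e', e}) = \rot(C)$, which equals $0$ by Observation~\ref{obs:rot_essential_cycle}. Substituting yields the claim.

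The main technical care is in the splitting step: if $P$ shares vertices or edges with $C$ (the statement does not require $P$ to be elementary), the walk $W'$ need not be a simple path, so Observation~\ref{obs:rot_splitting_path} cannot be applied verbatim. However, the splitting argument itself only uses additivity of $\rot$ over the list of consecutive vertex-triples along a walk, which holds for arbitrary walks, provided one is careful to split at the specific occurrence of $e$ that lies on $\subpath{C}{v, e'}$. Once this bookkeeping is settled, both cases follow by direct computation.
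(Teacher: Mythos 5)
Your proof is correct and follows exactly the argument the paper intends: the paper states this as an immediate observation, justified by the shared prefix $e^\star\join P$, additivity of $\rot$ under splitting at an edge, and $\rot(C)=0$ from Observation~\ref{obs:rot_essential_cycle} for the wrap-around case. Your extra care about $P$ not being elementary (so the walk need not be simple, but additivity over consecutive vertex-triples still applies) is a reasonable and correct refinement of the same approach.
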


We now show that two elementary paths to the same essential
cycle~$C$ induce identical labelings of $C$.  Two paths $P$ and $Q$
from $e^\star$ to vertices on $C$ are \emph{equivalent} ($P\equiv_C Q$) if the corresponding labelings agree on all edges
of $C$, i.e., $\ell^P_C(e)=\ell^Q_C(e)$ for every edge $e$
of $C$. %

First, we prove that two paths are equivalent if the labels for just one
edge are equal.

\begin{figure}[t]
  \centering
  \subfloat[The edge $xy$ does not
  lie on $\subpath{C}{p,w}$ and $P'$ contains no duplicate
  edges.]{ \includegraphics{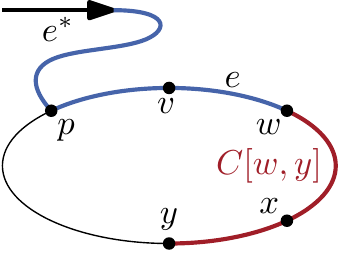}
    \label{fig:repr:equal_labels_edge-after}
  }
  \hspace{4em}
  \subfloat[The edge $xy$ lies on
 $\subpath{C}{p,w}$ and $P'$ completely goes around $C$. In both
 cases it is $\ell^P_C(xy)=\rot(e^\star\join P')$.]{ \includegraphics{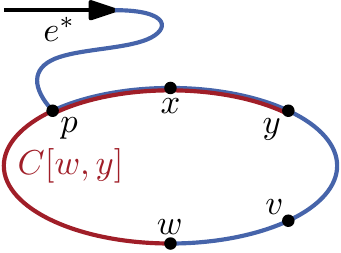}
    \label{fig:repr:equal_labels_edge-on}
  }

  \caption{Illustration of Lemma~\ref{lem:repr:equal_label_all_equal}.}
  \label{fig:repr:equal_labels_edge}
\end{figure}

\begin{lemma}\label{lem:repr:equal_label_all_equal}
Let $C$ be a simple, essential cycle and $P$ and $Q$ two paths from
the endpoint of $e^\star$ to vertices on~$C$. If there is an edge $e$
on $C$ with $\ell^P_C(e)=\ell^Q_C(e)$, then $P \equiv_{C} Q$.
\end{lemma}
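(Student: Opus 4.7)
The plan is to reduce the lemma to a one-line consequence of Observation~\ref{obs:repr:label_difference}. That observation states that for any path $R$ inducing a labeling and for any two edges $e, e'$ on a simple essential cycle $C$, one has $\ell^R_C(e') - \ell^R_C(e) = \rot(\subpath{C}{e,e'})$, with the right-hand side depending only on $C$ itself, not on the inducing path.

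Concretely, I would fix an arbitrary edge $e'$ on $C$ and apply the observation once with $R = P$ and once with $R = Q$. Together these yield
\[
\ell^P_C(e') - \ell^P_C(e) \;=\; \rot(\subpath{C}{e,e'}) \;=\; \ell^Q_C(e') - \ell^Q_C(e).
\]
Invoking the hypothesis $\ell^P_C(e) = \ell^Q_C(e)$ and rearranging gives $\ell^P_C(e') = \ell^Q_C(e')$. Since $e'$ was arbitrary, the two labelings coincide on every edge of $C$, which is exactly the definition of $P \equiv_C Q$.

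The only delicate point is making sure Observation~\ref{obs:repr:label_difference} truly applies regardless of the cyclic position of $e'$ relative to $e$: for an edge $e'$ lying ``before'' $e$ in the cyclic order starting at the endpoint of $P$, the subpath $\subpath{C}{e,e'}$ wraps once around $C$, and one needs $\rot(C) = 0$ from Observation~\ref{obs:rot_essential_cycle} to absorb the full loop. The two cases depicted in Figure~\ref{fig:repr:equal_labels_edge}---whether the witness path avoids $\subpath{C}{p,w}$ (no duplicate edges) or makes a complete revolution around $C$---correspond precisely to these two orderings. Beyond this bookkeeping the argument carries no further obstacle, so the lemma is essentially immediate once the labeling-difference identity is available.
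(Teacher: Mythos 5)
Your proof is correct and is essentially the paper's own argument in compressed form: the paper's proof of this lemma is an inline verification of exactly the label-difference identity of Observation~\ref{obs:repr:label_difference}, applied to the two paths $P$ and $Q$, with the same two-case analysis you flag (whether the appended arc $\subpath{C}{w,y}$ wraps once around $C$, absorbed via $\rot(C)=0$ from Observation~\ref{obs:rot_essential_cycle}). Citing the observation directly is legitimate, since it is stated before the lemma and does not depend on it, so there is no circularity.
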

\begin{proof}
  Denote the endpoints of $P$ and $Q$ on $C$ by $p$ and $q$,
respectively. Let $e=vw$ be an edge on $C$ such that
$\ell^P_C(vw)=\ell^Q_C(vw)$, which implies by the definition of the
labeling
\begin{equation}\label{eqn:repr:equal_labels_all_equal:first}
\rot(e^\star\join P\join\subpath{C}{p,vw})=\rot(e^\star\join
Q\join\subpath{C}{q,vw}).
\end{equation} For any edge $e'=xy$ on $C$, we obtain
$P'=P\join\subpath{C}{p,vw}\join\subpath{C}{w,y}$ and
$Q'=Q\join\subpath{C}{q,vw}\join\subpath{C}{w,y}$ by adding
$\subpath{C}{w,y}$ to the end of the two paths to $e$.
Equation~\ref{eqn:repr:equal_labels_all_equal:first} then implies
\begin{equation}\label{eqn:repr:equal_labels_all_equal:second}
\rot(e^\star\join P')=\rot(e^\star\join Q').
\end{equation}
If $w\in\subpath{C}{p,y}$ (as illustrated in
Fig.~\ref{fig:repr:equal_labels_edge-after}), it is
$\subpath{C}{p,y}=\subpath{C}{p,vw}\join\subpath{C}{w,y}$ and
therefore $\ell^P_C(e')=\rot(e^\star\join P')$.  Otherwise, $P'$
contains $C$ completely and $\rot(e^\star\join
P')=\rot(e^\star\join\subpath{C}{p,y})+\rot(C)=\ell^P_C(e')$, since
$C$ is essential (i.e., $\rot(C)=0$). This case is shown in
Fig.~\ref{fig:repr:equal_labels_edge-on}.
Analogously, we obtain $\ell^Q_C(e')=\rot(e^\star\join Q')$. Together
with Equation~\ref{eqn:repr:equal_labels_all_equal:second} we obtain
$\ell^P_C(e')=\rot(e^\star\join P')=\rot(e^\star\join
Q')=\ell^Q_C(e')$, which implies $P\equiv_C Q$.
\end{proof}

\begin{SCfigure}[50][bt]
 \centering
 \includegraphics{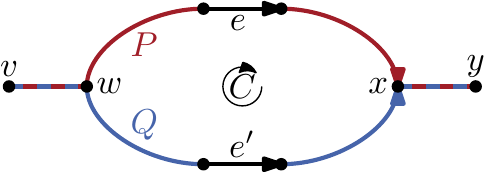}
 \caption{The paths $P$ and $Q$ use different sides of the cycle~$C$. The edges $e$ and $e'$ are chosen arbitrarily and cut $P$ and $Q$ in two parts.}
 \label{fig:repr:paths_cycle}
\end{SCfigure}

For a further useful property of labelings, we consider two paths
which traverse the same cycle $C$. Take a cycle $C$ and cut it open at
two vertices into two parts $C'$ and $C''$ and consider a path
traversing $C$ from cut-point to cut-point; see
Fig.~\ref{fig:repr:paths_cycle}. The path either completely includes
$C'$ or $C''$; in the former case, we call the path $P$, in the latter
case $Q$. We now show that both paths have the same rotation under
certain conditions.

\newcommand{\lemmaRotPathsCycle}{
Let $C$ be a simple cycle in $G$. Consider two vertices $w$ and $x$ on $C$ and edges $vw\not\in E(C)$ and $xy\not\in E(C)$.
Let $P=vw\join\subpath{C}{w,x}\join xy$ and $Q=vw\join\subpath{\reverse{C}}{w,x}\join xy$.
\begin{enumerate}
\item \label{lem:repr:rot_paths_cycle:non-essential} If $C$ is a
  non-essential cycle and $vw$ and $xy$ lie in the exterior of $C$, or
\item\label{lem:repr:rot_paths_cycle:essential} if $C$ is an essential
  cycle and $vw$ lies in the exterior of $C$ and $xy$ in the interior,
  then it is $\rot(P)=\rot(Q)$.
\end{enumerate}
}

\begin{lemma}\label{lem:repr:rot_paths_cycle}
\lemmaRotPathsCycle
\end{lemma}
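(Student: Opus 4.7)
The plan is to compute $\rot(P)-\rot(Q)$ and show it vanishes in both cases by isolating a global contribution $\rot(C)$ and two local contributions at the branch vertices $w$ and $x$.

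Write $C[w,x]=w,a_1,\dots,a_{m-1},x$ and $\overline{C}[w,x]=w,b_{n-1},\dots,b_1,x$, so that $C[x,w]$, the reverse of $\overline{C}[w,x]$, is $x,b_1,\dots,b_{n-1},w$. Splitting the rotations of $P$ and $Q$ at $w$ and $x$ yields $\rot(P)=\rot(v,w,a_1)+\rot(C[w,x])+\rot(a_{m-1},x,y)$ and $\rot(Q)=\rot(v,w,b_{n-1})+\rot(\overline{C}[w,x])+\rot(b_1,x,y)$. Observation~\ref{obs:rot_reverse} gives $\rot(\overline{C}[w,x])=-\rot(C[x,w])$, and summing the pointwise rotations around $C$ gives $\rot(C)=\rot(C[w,x])+\rot(C[x,w])+\rot(b_{n-1},w,a_1)+\rot(a_{m-1},x,b_1)$. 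Substituting eliminates both path rotations and leaves $\rot(P)-\rot(Q)=A+B+\rot(C)$, where $A:=\rot(v,w,a_1)-\rot(v,w,b_{n-1})-\rot(b_{n-1},w,a_1)$ collects only terms at $w$ and $B$ is the analogous expression at $x$.

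To evaluate $A$, I would unpack each rotation as $2-\alpha/90$, with $\alpha$ the right angle of the corresponding three-vertex path, and use Condition~\ref{cond:repr:sum_of_angles}. Because $C$ is clockwise with interior on the right, the exterior region at $w$ is exactly the angular sector going clockwise from $wb_{n-1}$ to $wa_1$, and the hypothesis that $vw$ lies in the exterior of $C$ places $wv$ in that sector. A direct calculation then shows the right angle of $v\to w\to a_1$ equals the sum of the right angles of $v\to w\to b_{n-1}$ and $b_{n-1}\to w\to a_1$, so the three terms of $A$ collapse to $A=-2$, independent of any possible fourth edge at $w$. The same argument at $x$ gives $B=-2$ when $xy$ is in the exterior of $C$; when $xy$ is in the interior, the analogous sum of the two right angles at $x$ instead exceeds the single right angle by a full $360\degree$, which flips the sign and yields $B=+2$.

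It then remains to plug in $\rot(C)$. In Case~\ref{lem:repr:rot_paths_cycle:non-essential}, $C$ is non-essential and clockwise, so $\rot(C)=4$ (which follows from Condition~\ref{cond:repr:rotation_faces} summed over the regular faces in the disk bounded by $C$), and both extra edges are in the exterior, giving $-2-2+4=0$. In Case~\ref{lem:repr:rot_paths_cycle:essential}, $\rot(C)=0$ by Observation~\ref{obs:rot_essential_cycle}, $vw$ in the exterior yields $A=-2$ and $xy$ in the interior yields $B=+2$, again giving $0$. I expect the main obstacle to be the sign tracking in $A$ and $B$: the whole proof reduces to reading off, from the hypothesis on which side of $C$ a given edge lies, the cyclic position of that edge among the two $C$-edges at the shared vertex, and then verifying that the three named right angles add up as claimed.
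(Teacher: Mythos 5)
Your proof is correct and follows essentially the same route as the paper's: both reduce $\rot(P)-\rot(Q)$ to $\rot(C)$ plus local $\pm 2$ corrections at the two branch vertices $w$ and $x$, the signs being determined by whether $vw$ and $xy$ lie in the exterior or interior of $C$, and then substitute $\rot(C)=4$ (non-essential) or $\rot(C)=0$ (essential). The only difference is that the paper obtains the $\pm 2$ corrections by applying Observation~\ref{obs:pre:rot_path_detour} to suitable concatenations, whereas you re-derive them inline by direct angle arithmetic at $w$ and $x$.
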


\begin{proof}
To prove the equality, we show in the following that $\rot(P)-\rot(Q)=0$. We first cut the paths in two parts each and analyze them separately.
To this end we choose arbitrary edges $e$ on $\subpath{P}{w,x}$ and $e'$ on $\subpath{Q}{w,x}$ as shown in Fig.~\ref{fig:repr:paths_cycle}. By Observation~\ref{obs:rot_splitting_path}, it is
\begin{align}
\rot(P)-\rot(Q) =
&\rot(\subpath{P}{v,e}) - \rot(\subpath{Q}{v, e'})
+ \rot(\subpath{P}{e, y}) - \rot(\subpath{Q}{e', y}). \label{eqn:repr:rot_paths_cycle:split}
\end{align}

By definition it is $\subpath{P}{v,e}=vw\join \subpath{C}{w,e}$ and $\subpath{Q}{v,e'}=vw\join \subpath{\reverse{C}}{w,e'}$. Therefore, we can calculate that difference of the rotations as follows:
\begin{align}
\rot(\subpath{P}{v,e})-\rot(\subpath{Q}{v,e'})
&=\rot(\subpath{P}{v,e})+\rot(\subpath{\reverse{Q}}{v,e'}) \nonumber\\
&=\rot(vw\join \subpath{C}{w,e}) + \rot(\subpath{C}{\reverse{e'},w}\join wv) \nonumber\\
&=\rot(\subpath{C}{\reverse{e'}, e})-2 \label{eqn:repr:rot_paths_cycle:first}
\end{align}
The last equality follows from Observation~\ref{obs:pre:rot_path_detour}, since $vw$ lies in the exterior of $C$ and therefore locally to the left of $\subpath{C}{\reverse{e'}, e}$.
In a similar spirit one obtains an equation for the second part:
\begin{equation}
\rot(\subpath{P}{e,y})-\rot(\subpath{Q}{e',y})=\rot(\subpath{C}{e,\reverse{e'}}) + c \label{eqn:repr:rot_paths_cycle:second}
\end{equation}
Here, $c$ represents a constant, which depends on whether $xy$ lies in the exterior of $C$ (Case~\ref{lem:repr:rot_paths_cycle:non-essential}; $c=-2$) or in the interior (Case~\ref{lem:repr:rot_paths_cycle:essential}; $c=2$).

Substituting Equations~\ref{eqn:repr:rot_paths_cycle:first} and \ref{eqn:repr:rot_paths_cycle:second} in Equation~\ref{eqn:repr:rot_paths_cycle:split}, we get
\begin{equation}
\rot(P)-\rot(Q) = \rot(\subpath{C}{\reverse{e'}, e}) + \rot(\subpath{C}{e,\reverse{e'}}) - 2 + c.
\end{equation}
Note that the paths in the equation above together form the cycle~$C$. Hence, it is $\rot(\subpath{C}{\reverse{e'}, e}) + \rot(\subpath{C}{e,\reverse{e'}}) = \rot(C)$ and the equation simplifies to
\begin{equation}\label{eqn:repr:rot_paths_cycle:final}
\rot(P)-\rot(Q) = \rot(C) - 2 + c.
\end{equation}

In Case \ref{lem:repr:rot_paths_cycle:non-essential} the cycle $C$ is
non-essential and therefore it is $\rot(C)=4$. Moreover, we have
$c=-2$.  Plugging these values in
Equation~\ref{eqn:repr:rot_paths_cycle:final}, we get
$\rot(P)-\rot(Q)=0$.

Similarly, in Case \ref{lem:repr:rot_paths_cycle:essential} it is
$c=2$ and $\rot(C)=0$, since $C$ is essential. Therefore,
Equation~\ref{eqn:repr:rot_paths_cycle:final} again gives the desired
result.
\end{proof}

Applying this result repeatedly to a pair of paths, we show that the labels of an edge~$e$ on an essential cycle $C$ induced by two elementary paths $P$ and $Q$ are the same, i.e., $\ell^P_C(e)=\ell^Q_C(e)$.
We even prove a slightly more general result, where the paths do not need to start at the reference edge but stay between two essential cycles $C_1$ and $C_2$; see  Fig.~\ref{fig:repr:rotation_paths_to_cycle}.

\newcommand{\lemRotationPathsToCycle}{
Let $C_1$ and $C_2$ be two disjunct essential cycles such that $C_2$ lies in the interior of $C_1$.
Fix an edge $rs$ on $C_1$ and an edge $e=tu$ on $C_2$. Let $P$ and $Q$ be two paths starting at $s$ and ending at $t$ such that both paths lie in the interior of $C_1$ and in the exterior of $C_2$. Then $\rot(rs\join P\join tu)=\rot(rs\join Q\join tu)$.
}
\begin{lemma}\label{lem:repr:rotation_paths_to_cycle}
\lemRotationPathsToCycle
\end{lemma}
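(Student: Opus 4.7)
The plan is to proceed by induction on the symmetric edge difference $|E(P) \triangle E(Q)|$, reducing to Lemma~\ref{lem:repr:rot_paths_cycle} via a bigon-swap argument. The base case $P = Q$ is immediate. For the inductive step, I locate a \emph{bigon}: vertices $v, w$ on both $P$ and $Q$ such that $\subpath{P}{v,w}$ and $\subpath{Q}{v,w}$ meet only at $v$ and $w$ and differ as paths. I take $v$ as the first vertex where $P$ and $Q$ take different outgoing edges, and $w$ as the first subsequent vertex common to both paths; a minimality argument refines this choice so that $\subpath{P}{v,w}$ and $\subpath{Q}{v,w}$ are internally disjoint. Then $D = \subpath{P}{v,w} + \reverse{\subpath{Q}{v,w}}$ is a simple cycle lying in the annular region between $C_1$ and $C_2$.

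I then apply Lemma~\ref{lem:repr:rot_paths_cycle} to $D$. Let $e_{\text{in}}$ denote the edge of the compound path $rs + P + tu$ entering $v$ (so $e_{\text{in}} = rs$ if $v = s$, and otherwise $e_{\text{in}}$ is the shared last edge of $\subpath{P}{s,v} = \subpath{Q}{s,v}$); define $e_{\text{out}}$ analogously for $w$. The key topological observation is that $C_1$ lies entirely in the exterior of $D$ (since $D$ is strictly inside $C_1$ and does not wrap around it), while $C_2$ lies in the interior of $D$ precisely when $D$ is essential (both cycles then enclose the grid's center, with $C_2$ strictly inside $D$) and in the exterior otherwise. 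Accordingly, $e_{\text{in}}$, reached via the $P$-portion from $s \in C_1$ to $v$ that does not cross $D$, lies in the exterior of $D$; and $e_{\text{out}}$, reached via the $P$-portion from $w$ to $t \in C_2$, lies in the interior of $D$ when $D$ is essential and in the exterior otherwise. These match exactly Case~2 and Case~1 of Lemma~\ref{lem:repr:rot_paths_cycle}, which then yields
\[
\rot\bigl(e_{\text{in}} + \subpath{P}{v,w} + e_{\text{out}}\bigr) = \rot\bigl(e_{\text{in}} + \subpath{Q}{v,w} + e_{\text{out}}\bigr).
\]

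Using Observation~\ref{obs:rot_splitting_path} to decompose $\rot(rs + P + tu)$ and $\rot(rs + Q + tu)$ at $e_{\text{in}}$ and $e_{\text{out}}$, the prefix contributions from $rs$ through $e_{\text{in}}$ agree because $\subpath{P}{s,v} = \subpath{Q}{s,v}$, and the middle contributions agree by the equality above. Swapping $\subpath{P}{v,w}$ for $\subpath{Q}{v,w}$ therefore produces a hybrid path $P' = \subpath{P}{s,v} + \subpath{Q}{v,w} + \subpath{P}{w,t}$ with $\rot(rs + P + tu) = \rot(rs + P' + tu)$ and $|E(P') \triangle E(Q)| < |E(P) \triangle E(Q)|$; applying the inductive hypothesis to $P'$ and $Q$ concludes the proof.

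The principal obstacles are twofold: (i) the topological argument identifying which face of $D$ contains $C_1$ versus $C_2$, including the essential-versus-non-essential dichotomy that precisely matches the two cases of Lemma~\ref{lem:repr:rot_paths_cycle}; and (ii) ensuring that the hybrid $P'$ is itself a simple path in the annular region, which may require a more careful selection of the bigon than the naive first-divergence/first-reconvergence choice—in particular, one may need to additionally require that the interior of $\subpath{Q}{v,w}$ be vertex-disjoint from $\subpath{P}{s,v} \cup \subpath{P}{w,t}$.
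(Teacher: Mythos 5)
Your proposal follows essentially the same route as the paper's proof: induction on the edge difference between $P$ and $Q$, isolating the first divergence as a simple cycle in the annulus, applying Lemma~\ref{lem:repr:rot_paths_cycle} with exactly the essential/non-essential dichotomy you describe, and then swapping one detour for the other before invoking the induction hypothesis. The only wrinkle is your obstacle~(ii): the paper sidesteps it by performing the substitution in the opposite direction, replacing the $Q$-detour by the $P$-detour (whose interior avoids $Q$ entirely by the choice of the reconvergence vertex), so the hybrid path is automatically simple and no further refinement of the bigon is needed.
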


\begin{figure}
  \centering
  \includegraphics{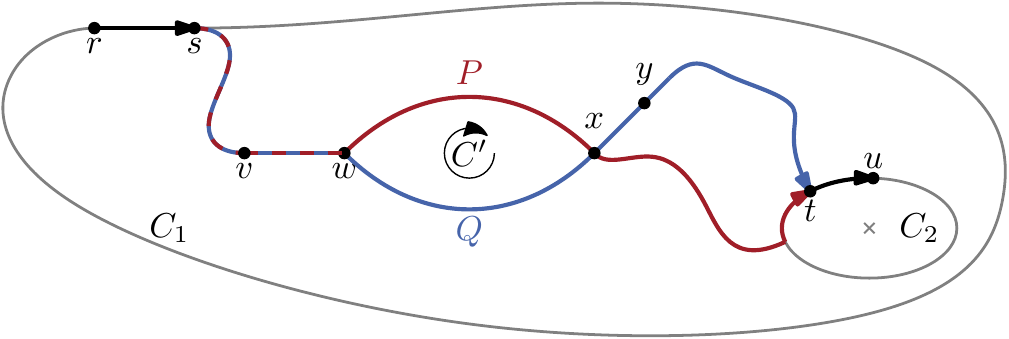}
  \caption{Two paths $P$ and $Q$ from $s$ to $t$, which lie between the essential cycles $C_1$ and $C_2$. Here, the cycle $C'$ formed by $\subpath{P}{w,x}$ and $\subpath{\reverse{Q}}{x,w}$ is non-essential.}
  \label{fig:repr:rotation_paths_to_cycle}
\end{figure}

  \begin{proof}
    Let $k$ be the number of directed edges on $P$ that do not lie on
    $Q$. We prove the equivalence of $P$ and $Q$ by induction on $k$.
    If $k=0$, $Q$ contains $P$ completely. Since both paths have the
    same start and end points, $P$ and $Q$ are equal. Hence, the claim
    follows immediately.

    If $k>0$, there is a first edge $vw$ on $P$ such that the
    following edge does not lie on $Q$. Let~$x$ be the first vertex on
    $P$ after $w$ that lies on $Q$ and let $y$ be the vertex on $Q+tu$
    that follows $x$ immediately.  This situation is illustrated in
    Fig.~\ref{fig:repr:rotation_paths_to_cycle}.  As both $P$ and $Q$
    end at~$t$, these vertices always exist.  Consider the cycle
    $C'=\subpath{P}{w,x}\join{\subpath{\reverse{Q}}{x,w}}$. We may
    assume without loss of generality that the edges of $C'$ are
    directed such that the outer face lies in the exterior of $C'$
    (otherwise we may reverse the edges and work with $\reverse{C'}$
    instead).  Note that $\subpath{Q}{x,t}$ cannot intersect $C'$
    (except for $x$), since the interior of $\subpath{P}{w,x}$ does
    not intersect $Q$ by the definition of $x$. Therefore, $xy$ lies
    on the same side of $C'$ as $C_2$.

    If $C'$ is non-essential, it does not contain the central face,
    and hence, $C'$ cannot contain $C_2$ in its interior. Thus, both
    $vw$ and $xy$ lie in the exterior of $C'$.  By
    Lemma~\ref{lem:repr:rot_paths_cycle}, we have $\rot(vw\join
    \subpath{P}{w,x}\join xy) = \rot(vw\join\subpath{Q}{w,x}\join
    xy)$.

    If $C'$ is essential, $C_2$ lies in the interior of $C'$. Hence,
    $vw$ lies in the exterior of $C'$ and $xy$ in its interior or
    boundary. Lemma~\ref{lem:repr:rot_paths_cycle} states that
    $\rot(vw\join \subpath{P}{w,x}\join xy) =
    \rot(vw\join\subpath{Q}{w,x}\join xy)$.

    In both cases it follows from $\subpath{P}{s,w}=\subpath{Q}{s,w}$
    that $\rot(rs\join\subpath{P}{s,x}\join
    xy)=\rot(rs\join\subpath{Q}{s,x}\join xy)$.  For
    $Q'=\subpath{P}{s,x}\join\subpath{Q}{x,t}$ it therefore holds that
    $\rot(rs\join Q\join tu)=\rot(rs\join Q'\join tu)$.  As $Q'$
    includes the part of $P$ between $w$ and $x$ it misses fewer edges
    from $P$ than $Q$ does. Hence, the induction hypothesis implies
    $\rot(rs\join P\join tu)=\rot(rs\join Q'\join tu)$, and thus
    $\rot(rs\join P\join tu)=\rot(rs\join Q\join tu)$.
  \end{proof}

Combining Lemma~\ref{lem:repr:rotation_paths_to_cycle} and Lemma~\ref{lem:repr:equal_label_all_equal} we show that labelings induced by elementary paths are equal.

\newcommand{\lemEquivalencePseudoElementaryPaths}{
Let $C$ be an essential cycle in $G$ and let $e^\star=rs$. If $P$ and $Q$ are paths from $s$ to vertices on $C$ such that $P$ and $Q$ lie in the exterior of $C$, they are equivalent.
}

\begin{lemma}\label{lem:repr:equivalence_pseudo_elementary_paths}
 \lemEquivalencePseudoElementaryPaths
\end{lemma}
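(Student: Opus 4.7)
The plan is to deduce the full equivalence from a single-edge match via Lemma~\ref{lem:repr:equal_label_all_equal}: it suffices to exhibit one edge $e$ of $C$ with $\ell^P_C(e)=\ell^Q_C(e)$. The remaining task is to reduce this single-edge comparison to Lemma~\ref{lem:repr:rotation_paths_to_cycle}, whose hypotheses demand two disjoint essential cycles together with paths running between them.

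To provide the missing outer essential cycle, I use that the existence of $C$ forces the outer face and the central face of $G$ to be distinct. By Condition~\ref{cond:repr:rotation_faces} the outer face boundary then has rotation $0$, and in the cylinder picture of Fig.~\ref{fig:pre:drawing} it wraps around the cylinder and is thus essential; call it $C^{\star}$. Then $s$ lies on $C^{\star}$, the cycle $C$ lies in the interior of $C^{\star}$, and any path from $s$ that stays in the exterior of $C$ automatically stays in the interior of $C^{\star}$, i.e., in the annular region between the two cycles.

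Let $p$ and $q$ be the endpoints of $P$ and $Q$ on $C$. If $p=q$, I pick $tu$ to be the edge of $C$ starting at $p$, so that $\ell^P_C(tu)=\rot(rs\join P\join tu)$ and $\ell^Q_C(tu)=\rot(rs\join Q\join tu)$, and invoke Lemma~\ref{lem:repr:rotation_paths_to_cycle} with $C_1:=C^{\star}$ and $C_2:=C$ to conclude. If $p\neq q$, I first replace $Q$ by $\tilde Q:=Q\join \subpath{C}{q,p}$, which still starts at $s$ but now ends at $p$. Using Observation~\ref{obs:rot_essential_cycle} together with the fact that for every edge $e$ on $C$ the concatenation $\subpath{C}{q,p}\join\subpath{C}{p,e}$ either equals $\subpath{C}{q,e}$ or differs from it by exactly one full traversal of $C$ (contributing only $\rot(C)=0$), I obtain $\ell^{\tilde Q}_C=\ell^Q_C$. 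Hence comparing $P$ against $\tilde Q$ reduces to the previous case.

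The delicate point is justifying that the outer face boundary may serve as $C_1$ in Lemma~\ref{lem:repr:rotation_paths_to_cycle}: a face boundary need not be a simple cycle when cut vertices are present, and the extended path $\tilde Q$ contains a tail on $C$ and therefore meets $C_2=C$ beyond its endpoint. I would either verify that the induction inside the proof of Lemma~\ref{lem:repr:rotation_paths_to_cycle} tolerates these relaxations (the convention that a cycle belongs to both its interior and exterior makes this plausible), or adapt that induction on the number of edges of $\tilde Q$ missing from $P$ directly to the present one-cycle setting, using the outer face to orient the auxiliary cycles that arise in each step.
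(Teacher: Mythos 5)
Your overall strategy is the same as the paper's: extend $Q$ along $C$ to the endpoint $p$ of $P$, use Lemma~\ref{lem:repr:rotation_paths_to_cycle} to match the label of a single edge at $p$, and then conclude with Lemma~\ref{lem:repr:equal_label_all_equal}. However, there is one concrete gap in the case $p\neq q$: the hypothesis of the lemma only requires $P$ and $Q$ to lie in the exterior of $C$, so $Q$ may intersect $C$ at several vertices besides its endpoint $q$. If $Q$ meets $C$ at an internal vertex of $\subpath{C}{q,p}$, then $\tilde Q=Q\join\subpath{C}{q,p}$ revisits that vertex and is not a path, so Lemma~\ref{lem:repr:rotation_paths_to_cycle} (stated and proved for paths, whose induction forms cycles from the symmetric difference of two simple paths) cannot be invoked on $P$ and $\tilde Q$. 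The paper avoids this by first proving the claim under the additional assumption that $Q$ is \emph{elementary} (so $Q$ meets $C$ only at $q$ and $Q'=Q\join\subpath{C}{q,p}$ is genuinely a path), and then handling two arbitrary paths by transitivity through an auxiliary elementary path $R$, i.e.\ $P\equiv_C R$ and $R\equiv_C Q$. You should add this reduction; with it, your computation that appending $\subpath{C}{q,p}$ does not change the induced label (because a full traversal of $C$ contributes $\rot(C)=0$) is exactly what is needed, and in fact you only need it for the single edge of $C$ leaving $p$, for which it holds by definition of the labeling.

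Your worry about what plays the role of $C_1$ in Lemma~\ref{lem:repr:rotation_paths_to_cycle} is a fair observation, but it is a looseness the paper itself shares: in its application the paper supplies no explicit enclosing cycle either. Inspecting the proof of that lemma shows $C_1$ is never used in the induction --- only the condition that the paths lie in the exterior of $C_2$ matters (it forces the auxiliary cycle $C'$, when essential, to contain $C_2$ in its interior) --- so your instinct that the induction ``tolerates these relaxations'' is correct and your detour through the outer face boundary is unnecessary. Likewise, the convention that a cycle belongs to its own exterior already covers the tail of $Q'$ running along $C$; the only point that genuinely needs repair is the simplicity of $\tilde Q$, via the elementary-path reduction above.
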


\begin{proof}
  First assume that one of the paths, say $Q$, is elementary. Let $p$
  and $q$ be the endpoints of $P$ and $Q$, respectively. Let $v$ be
  the vertex following $p$ on $C$ when $C$ is directed such that the
  central face lies in its interior.  We define
  $Q'=Q\join\subpath{C}{q,p}$.  It is easy to verify that both $P$ and
  $Q'$ are paths and lie in the exterior of $C$.  Therefore, it
  follows from Lemma~\ref{lem:repr:rotation_paths_to_cycle} that \[
  \ell^P_C(pv)=\rot(e^\star\join P\join pv)=\rot(e^\star\join Q'\join
  pv)=\ell^Q_C(pv).
\]
Thus, $P$ and $Q$ are equivalent by
Lemma~\ref{lem:repr:equal_label_all_equal}.

If neither $P$ nor $Q$ are elementary, choose any elementary path~$R$ to a vertex on $C$. The argument above shows that $P\equiv_C R$ and $R\equiv_C Q$, and thus $P\equiv_C Q$.
\end{proof}

In the remainder of this section all labelings are induced by elementary paths.
By Lemma~\ref{lem:repr:equivalence_pseudo_elementary_paths}, the labelings are independent of the choice of the elementary path. Hence, we drop the superscript $P$ and write $\ell_C(e)$ for the labeling of an edge $e$ on an essential cycle~$C$.

\begin{figure}[bt]
  \centering
  \subfloat[]{ \label{fig:repr:intersection_cycles-complete}
     \includegraphics{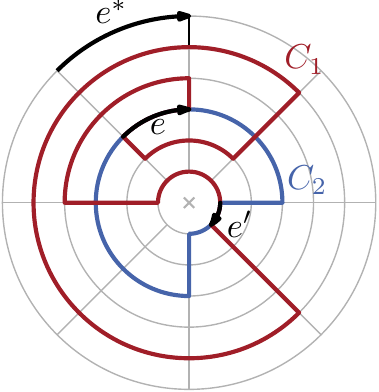}
  } \hspace{1cm}
  \subfloat[]{ \label{fig:repr:intersection_cycles-C1}
             \includegraphics{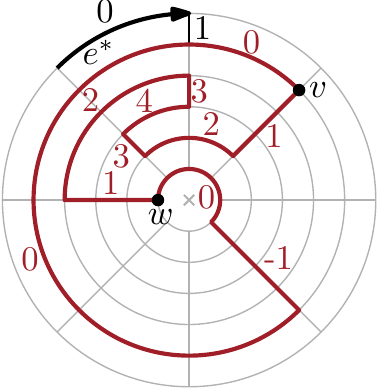}
  }
  \caption{\protect\subref{fig:repr:intersection_cycles-complete} Two cycles 
  $C_1$ and $C_2$ may have both common edges with different labels 
  ($\ell_{C_1}(e)=4\neq 0 = \ell_{C_2}(e)$) and ones with equal labels 
  ($\ell_{C_1}(e')=\ell_{C_2}(e')=0$).
  \protect\subref{fig:repr:intersection_cycles-C1} All labels of 
  $\subpath{C_1}{v,w}$ are positive, implying that $C_1$ goes down. Note that 
  not all edges of $\subpath{C_1}{v,w}$ point downwards.}
  \label{fig:repr:intersection_cycles}
\end{figure}

If an edge $e$ lies on two simple, essential cycles $C_1$ and $C_2$,
the labels $\ell_{C_1}(e)$ and $\ell_{C_2}(e)$ may not be equal in
general.  In Fig.~\ref{fig:repr:intersection_cycles-complete} the labels of the
edge $e$ are different for $C_1$ and $C_2$ respectively, but $e'$ has label~$0$ on both cycles. Note
that $e'$ is incident to the central face of $C_1+C_2$.  We now show
that this is a sufficient condition for the equality of the labels.

\newcommand{\lemEqualLabelsAtIntersection}{
  Let $C_1$ and $C_2$ be two essential cycles and let $H=C_1+C_2$ be
  the subgraph of $G$ formed by these two cycles. Let $v$ be a common
  vertex of $C_1$ and $C_2$ on the central face of $H$ and consider
  the edge $vw$ on $C_1$.  Denote the vertices before $v$ on $C_1$ and
  $C_2$ by $u_1$ and $u_2$, respectively.  Then $\ell_{C_1}(u_1v) +
  \rot(u_1vw) = \ell_{C_2}(u_2v) + \rot(u_2vw)$.

Further, if $vw$ belongs to both $C_1$ and $C_2$, the labels of $e$ are equal, i.e., $\ell_{C_1}(vw)=\ell_{C_2}(vw)$.
}

\begin{lemma}\label{lem:repr:equal_labels_at_intersection}
\lemEqualLabelsAtIntersection
\end{lemma}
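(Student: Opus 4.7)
The plan is to reduce both sides of the claimed equality to the rotations of two walks ending with the edge $vw$, and then to exploit the hypothesis that $v$ lies on the central face of $H$ to show these rotations coincide.

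First, I would choose an elementary path $P_1$ for $C_1$ ending at $v$ and an elementary path $P_2$ for $C_2$ ending at $v$; both exist because $v$ lies on each cycle. By the definition of the labelings,
\[
\ell_{C_1}(u_1v) = \rot(e^\star \join P_1 \join \subpath{C_1}{v, u_1v}),
\]
and since $\subpath{C_1}{v, u_1v}$ starts with the edge $vw$ and traverses the entire cycle $C_1$ back to $v$ via $u_1$, a direct expansion using Observation~\ref{obs:rot_essential_cycle} (which gives $\rot(C_1)=0$) collapses the contributions of the loop: the turns along $C_1$ combined with $\rot(u_1vw)$ sum to exactly the turn at $v$ from the last edge of $P_1$ into $vw$. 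Hence
\[
\ell_{C_1}(u_1v) + \rot(u_1vw) = \rot(e^\star \join P_1 \join vw),
\]
and the analogous identity holds for $C_2$, giving $\ell_{C_2}(u_2v) + \rot(u_2vw) = \rot(e^\star \join P_2 \join vw)$.

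It therefore remains to prove $\rot(e^\star \join P_1 \join vw) = \rot(e^\star \join P_2 \join vw)$. Both walks run from the start of $e^\star$ to $w$ via $v$, so this is essentially a comparison of the two paths $P_1$ and $P_2$ from $s$ to $v$, boundary-completed by $e^\star$ and the final edge $vw$. Here the hypothesis that $v$ lies on the central face $f_c$ of $H = C_1 + C_2$ is crucial: it gives a corner of $f_c$ at $v$, so that $P_1$ (coming in from the exterior of $C_1$) and $P_2$ (coming in from the exterior of $C_2$) can be related by deformations that cross only faces whose rotation contributions cancel. Mirroring the inductive strategy of Lemma~\ref{lem:repr:rotation_paths_to_cycle}, I would morph $P_1$ into $P_2$ by swapping one edge at a time, at each step introducing an auxiliary cycle formed by the swapped edges together with matching portions of the two paths; because $v$ lies on $f_c$, the cycles arising from these local swaps are either essential (with rotation $0$ by Observation~\ref{obs:rot_essential_cycle}) or non-essential and produced in matched pairs, so the rotation of the extended walk is preserved.

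The second claim follows immediately from the first. Applying Observation~\ref{obs:repr:label_difference} to the consecutive edges $u_iv$ and $vw$ on $C_i$ (for $i=1,2$) yields $\ell_{C_i}(vw) = \ell_{C_i}(u_iv) + \rot(u_ivw)$, and the first part of the lemma then identifies $\ell_{C_1}(vw)$ with $\ell_{C_2}(vw)$. The main obstacle is the morphing argument in the middle paragraph: one must classify the cycles appearing during the step-by-step deformation and verify that the central-face condition forces every non-essential contribution to cancel against a partner.
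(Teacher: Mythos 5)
Your opening and closing reductions are sound and match the paper: writing $\ell_{C_i}(u_iv)+\rot(u_ivw)$ as the rotation of a walk $e^\star\join P_i\join vw$ (using $\rot(C_i)=0$ to collapse the full traversal of the cycle) is exactly how the paper sets things up, and the second claim does follow from the first via Observation~\ref{obs:repr:label_difference}. The problem is the middle step, which is where all the content of the lemma lives, and there your argument has a genuine gap. The statement ``two paths from $s$ to the same vertex, each lying in the exterior of its respective cycle, yield the same rotation when extended by a common edge'' is \emph{false} in general --- Fig.~\ref{fig:repr:intersection_cycles-complete} exhibits a common edge $e$ of $C_1$ and $C_2$ with $\ell_{C_1}(e)=4\neq 0=\ell_{C_2}(e)$, and those two labels are precisely rotations of two such extended walks. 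So the central-face hypothesis must enter in a sharp, checkable way, and ``deformations that cross only faces whose rotation contributions cancel'' with non-essential cycles ``produced in matched pairs'' does not supply it: in the inductive mechanism of Lemma~\ref{lem:repr:rotation_paths_to_cycle} each swap produces a \emph{single} auxiliary cycle, and the rotation is preserved only because the final edge of the walk is known to lie in the exterior of every non-essential auxiliary cycle and in the interior of every essential one (the two cases of Lemma~\ref{lem:repr:rot_paths_cycle}). That positional information comes from the final edge lying \emph{on} the inner cycle bounding the region the paths live in --- not merely from its start vertex lying there.

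Concretely, you are missing the case distinction the paper makes on whether the edge $vw$ itself (not just the vertex $v$) lies on the central face of $H=C_1+C_2$. When $vw$ does lie on that face, both extended walks live in the exterior of the cycle $C$ bounding the central face of $H$ and terminate with an edge of $C$, so Lemma~\ref{lem:repr:rotation_paths_to_cycle} applies verbatim and gives the equality of rotations. When $vw$ does not lie on that face, the paper instead runs the argument with the edge $vw_2$ of $C_2$ that \emph{does} lie on the central face, and then converts back using that $vw$ lies to the left of both $u_1vw_2$ and $u_2vw_2$, so $\rot(u_ivw)=\rot(u_ivw_2)-\alpha$ with the \emph{same} offset $\alpha$ for $i=1,2$, which cancels. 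Your proof needs either this two-case reduction or a careful verification that in your edge-by-edge deformation the edge $vw$ always sits on the correct side of every auxiliary cycle; as written, neither is present. (A minor additional point: an elementary path need not end at $v$; you should take an elementary path to an arbitrary vertex of $C_i$ and extend it along $C_i$ to $v$, as the paper does, invoking Lemma~\ref{lem:repr:equivalence_pseudo_elementary_paths} to keep the labeling unchanged.)
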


\begin{proof}
  First assume that not only $v$ but the whole edge $vw$ lies on the
  central face of $H$.  Let $C$ be the cycle bounding the central face
  of $C_1+C_2$. Let $P$ and $Q$ be elementary paths from the
  endpoint~$s$ of the reference edge to vertices $t_1$ and $t_2$ on
  $C_1$ and $C_2$, respectively.  Define
  $P'=P\join\subpath{C_1}{t_1,v}$ and $Q'=Q\join\subpath{C_2}{t_2,v}$.
  The paths $P'$ and $Q'$ lie in the exterior of $C$.  Thus,
  Lemma~\ref{lem:repr:rotation_paths_to_cycle} can be applied to $P'$
  and $Q'$:
\begin{align*}
  \rot(e^\star\join P') + \rot(u_1vw) &= \rot(e^\star\join P'\join vw) \\
  &=\rot(e^\star\join Q'\join vw) = \rot(e^\star\join Q')+\rot(u_2vw).
\end{align*}
By the definition of labelings it is
$\ell_{C_1}(u_1v)=\rot(e^\star\join P')$ and
$\ell_{C_2}(u_2v)=\rot(e^\star\join Q')$. Substitution into the
previous equation give the desired result:
\[
\ell_{C_1}(u_1v) + \rot(u_1vw) = \ell_{C_2}(u_2v) + \rot(u_2vw).
\]

If $vw$ does not lie on the central face, then the edge $vw_2$ on
$C_2$ does lie on the central face, where $w_2$ denotes the vertex on
$C_2$ after $v$. By the argument above we have
\[
\ell_{C_1}(u_1v) + \rot(u_1vw_2) = \ell_{C_2}(u_2v) + \rot(u_2vw_2).
\]
Since $vw$ lies locally to the left of both $u_1vw_2$ and $u_2vw_2$,
it is $\rot(u_ivw)=\rot(u_ivw_2)-\alpha$ for $i=1,2$ and the same
constant $\alpha$, which is either $1$ or $2$. Hence, we get
\[
\ell_{C_1}(u_1v) + \rot(u_1vw) + \alpha = \ell_{C_2}(u_2v) + \rot(u_2vw) + \alpha.
\]
Canceling $\alpha$ on both sides gives the desired result.

If $vw$ lies on both $C_1$ and $C_2$,
Observation~\ref{obs:repr:label_difference} shows $\ell_{C_1}(u_1v) +
\rot(u_1vw) = \ell_{C_1}(vw)$ and $\ell_{C_2}(u_2v) + \rot(u_2vw) =
\ell_{C_2}(vw)$.  Hence, the result above can be simplified to
$\ell_{C_1}(vw) = \ell_{C_2}(vw)$.
\end{proof}

Intuitively, positive labels can often be interpreted as going downwards and negative labels as going upwards.
In Fig.~\ref{fig:repr:intersection_cycles-C1} all edges of $\subpath{C_1}{v,w}$ have positive labels and in total the $y$-coordinate decreases along this path, i.e., the $y$-coordinate of $v$ is greater than the $y$-coordinate of $w$.

Using this intuition, we expect that a path~$P$ going down cannot
intersect a path~$Q$ going up such that $P$ starts below $Q$ and ends
above it.  In Lemma~\ref{lem:repr:illegal_intersection}, we show that
this assumption is correct if we restrict ourselves to intersections
on the central face.

\newcommand{\lemIllegalIntersection}{
Let $C$ and $C'$ be two simple, essential cycles in $G$ sharing at least one vertex.
Let $H$ be the subgraph of $G$ composed of the two cycles $C$ and $C'$ and denote the central face of $H$ by $f$. Take subpaths $uvw$ of $C$ and $u'vw'$ of $C'$ with the same vertex~$v$ in the middle, which also lies on $f$.
\begin{enumerate}
\item\label{lem:repr:illegal_intersection-in} If $\ell_C(uv)\geq
  0$ and $\ell_{C'}(u'v)\leq 0$, $u'v$ lies in the interior of $C$.
\item\label{lem:repr:illegal_intersection-out} If $\ell_C(vw)\geq 0$
  and $\ell_{C'}(vw')\leq 0$, $vw'$ lies in the exterior of $C$.
\end{enumerate}
}

\begin{lemma}\label{lem:repr:illegal_intersection}
  \lemIllegalIntersection
\end{lemma}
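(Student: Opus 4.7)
The plan is to apply Lemma~\ref{lem:repr:equal_labels_at_intersection} to convert the sign conditions on labels into an inequality between rotations at $v$, and then read off the position of the $C'$-edge with respect to $C$ directly, using that the right-side angle $\alpha(avb)=180\degree-90\degree\cdot\rot(avb)$ at $v$ equals the counterclockwise angle swept from the ray $va$ to the ray $vb$.

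For Part~\ref{lem:repr:illegal_intersection-in}, applying Lemma~\ref{lem:repr:equal_labels_at_intersection} with $C_1=C$, $C_2=C'$, and the edge $vw$ on $C$ yields
\[
  \ell_C(uv) + \rot(uvw) = \ell_{C'}(u'v) + \rot(u'vw).
\]
Combined with $\ell_C(uv)\geq 0\geq \ell_{C'}(u'v)$ this gives $\rot(u'vw)\geq \rot(uvw)$, equivalently $\alpha(u'vw)\leq \alpha(uvw)$. Interpreted as counterclockwise angles at $v$ ending at the common ray $vw$, the inequality forces $vu'$ to lie in the counterclockwise sweep from $vu$ to $vw$, i.e., in the interior sector of $C$ at $v$; indeed, if $vu'$ were strictly in the exterior sector, the sweep from $vu'$ to $vw$ would take almost a full turn and give $\alpha(u'vw)>\alpha(uvw)$. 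Hence either $u'v$ enters the interior of $C$ at $v$ and therefore lies entirely in the interior (since it cannot cross $C$), or $vu'=vu$ and $u'v$ is an edge of $C$. By the convention that $C$ is part of both its interior and its exterior, $u'v$ lies in the interior of $C$ in either case.

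Part~\ref{lem:repr:illegal_intersection-out} is symmetric. Applying Lemma~\ref{lem:repr:equal_labels_at_intersection} with $C_1=C'$ and the edge $vw'$ on $C'$, and substituting $\ell_C(uv)=\ell_C(vw)-\rot(uvw)$ from Observation~\ref{obs:repr:label_difference}, gives
\[
  \rot(uvw')=\rot(uvw)+\ell_{C'}(vw')-\ell_C(vw)\leq \rot(uvw),
\]
so $\alpha(uvw')\geq \alpha(uvw)$. Reading both as counterclockwise angles from the common starting ray $vu$, this places $vw'$ at or past $vw$ in the counterclockwise sweep from $vu$, i.e., in the exterior sector of $C$ at $v$. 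Thus $vw'$ lies in the exterior of $C$, again allowing coincidence with an edge of $C$.

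The main obstacle is the geometric interpretation of the rotation inequality: once one recognizes that $\alpha=180\degree-90\degree\cdot\rot$ is exactly the counterclockwise angle from $va$ to $vb$ at $v$, the inequalities $\rot(u'vw)\geq \rot(uvw)$ and $\rot(uvw')\leq \rot(uvw)$ each immediately fix the position of the remaining ray around $v$. The few boundary cases, where $u'$ or $w'$ coincides with $u$ or $w$, either place the edge on $C$ itself (absorbed by the interior/exterior convention) or are ruled out because $\rot=-2$ would violate the derived inequality on the simple cycles $C$ and $C'$.
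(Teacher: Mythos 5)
Your proposal is correct and takes essentially the same route as the paper: both parts apply Lemma~\ref{lem:repr:equal_labels_at_intersection} at $v$, use the sign assumptions on the labels to compare rotations of the form $\rot(\cdot v\cdot)$, and translate the resulting inequality into the local position of the $C'$-edge relative to the interior/exterior sector of $C$ at $v$. The only (cosmetic) differences are that the paper routes Part~\ref{lem:repr:illegal_intersection-in} through whichever of $vw$, $vw'$ lies on the central face of $H$ and phrases Part~\ref{lem:repr:illegal_intersection-out} as a contradiction with $\ell_C(vw)\geq 0$, whereas you apply the lemma directly to $vw$ (respectively $vw'$) and argue forwards.
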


\begin{figure}[bt]
 \centering
 \subfloat[The labels of the incoming edges satisfy $\ell_C(uv)\geq 0$ and $\ell_{C'}(u'v)\leq 0$. The edges $vw$ and $vw'$ could be exchanged.]{ \label{fig:repr:illegal_intersection-in}\includegraphics{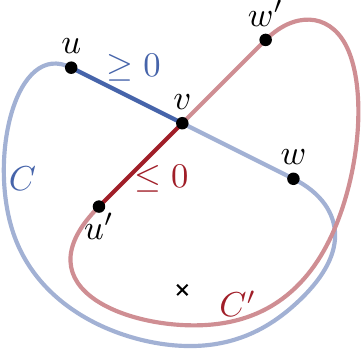} }
 \hspace{1cm}
 \subfloat[The labels of the outgoing edges satisfy $\ell_C(vw)\geq 0$ and $\ell_{C'}(vw')\leq 0$. The edges $uv$ and $u'v$ could be exchanged.]{ \label{fig:repr:illegal_intersection-out}\includegraphics{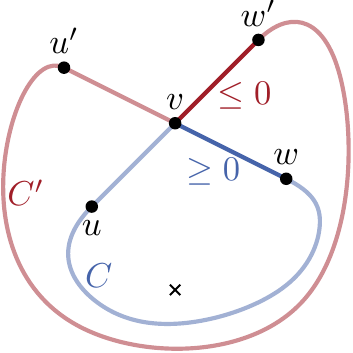} }
 \caption{Possible intersection of two cycles~$C$ and $C'$ at $v$. }
 \label{fig:repr:illegal_intersection}
\end{figure}

\begin{proof}
  Case \ref{lem:repr:illegal_intersection-in}. Since the central face~$f$
  lies in the interior of both $C$ and $C'$ and $v$ on the boundary of
  $f$, one of the edges $vw$ and $vw'$ lies on $f$. We denote this
  edge by $vx$ and it is either $x=w$ (as in
  Fig.~\ref{fig:repr:illegal_intersection-in}) or $x=w'$.  By the
  Lemma~\ref{lem:repr:equal_labels_at_intersection}, we have $
  \ell_C(uv) + \rot(uvx) = \ell_{C'}(u'v) + \rot(u'vx) $.  Applying
  $\ell_C(uv)\geq 0$ and $\ell_{C'}(u'v)\leq 0$, we obtain $\rot(uvx)
  \leq \rot(u'vx)$. Therefore, $u'v$ lies to the right of or on $uvx$
  and thus in the interior of $C$.

  Case \ref{lem:repr:illegal_intersection-out}. Assume that $vw'$ does
  not lie in the exterior of $C$, that is, $vw'$ lies locally to the
  right of $uvw$. Therefore, $vw'$ lies on $f$, and
  Lemma~\ref{lem:repr:equal_labels_at_intersection} implies that $ 0
  \geq \ell_{C'}(vw') = \ell_C(uv) + \rot(uvw') $ Since $vw'$ lies
  locally to the right of $uvw$, it is $\rot(uvw) < \rot(uvw')$ and
  therefore $0 \geq \ell_C(uv) + \rot(uvw') > \ell_C(uv)+\rot(uvw)
  =\ell_C(vw)$ Here, the last equality follows from
  Observation~\ref{obs:repr:label_difference}. But this contradicts
  the assumption $\ell_C(vw)\geq 0$. Hence, $vw'$ must lie in the
  exterior of $C$.
\end{proof}

For the correctnes proof in Section~\ref{sec:rectangulation}, a crucial insight is that for cycles using an edge which is part of a face, we can find an alternative cycle without this edge in a way that preserves labels on the common subpath:

\newcommand{\lemExistenceC}{ If an edge $e$ belongs to both a simple
  essential cycle~$C$ and a regular face~$f'$ with $C \neq f'$, then there
  is a simple essential cycle~$C'$ not containing $e$ such that $C'$
  can be decomposed into two paths $P$ and $Q$, where $P$ or
  $\reverse{P}$ lies on $f'$ and $Q=C\cap C'$.  }

\begin{lemma}\label{lem:rect:existence_C'}
\lemExistenceC
\end{lemma}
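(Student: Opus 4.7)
The plan is to construct $C'$ explicitly by rerouting $C$ locally around the face $f'$, replacing the edge $e$ with a portion of $\partial f'$, and then verify the required properties.

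Orient $e=xy$ so that $C$ traverses it from $x$ to $y$. Since $e$ lies on the boundary of $f'$, there is a walk $\pi$ along $\partial f'$ from $y$ to $x$ avoiding $e$. Walking along $\pi$ from $y$, let $z$ be the first vertex encountered that also lies on~$C$; this exists because the other endpoint~$x$ of $\pi$ is on $C$. Let $P$ denote the subpath of $\pi$ from $y$ to $z$, and let $Q$ be the subpath of $C$ from $z$ to $y$ that does not use $e$. Set $C':=P+Q$. If $\pi$ would revisit a vertex before first reaching $C$, I would instead redefine $z$ as the first repeat and work with the resulting sub-path; this corner case only arises at cut vertices of the boundary walk of $f'$.

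The structural properties of $C'$ then fall out of the construction. By the choice of $z$, no internal vertex of $P$ lies on~$C$, so $P$ and $Q$ meet only at their common endpoints $y$ and $z$; moreover, no edge of $P$ can be an edge of $C$, since such an edge would have both endpoints on $C$, contradicting the choice of $z$. Thus $P$ and $Q$ are edge-disjoint, $C'$ is a simple cycle, $e\notin C'$, $P$ (or $\reverse{P}$) lies on $f'$ by construction, and $C\cap C'=Q$ follows from the edge-disjointness of $P$ and $C$.

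The main obstacle is to prove that $C'$ is essential. Consider $D:=P+R$, where $R$ is the subpath of $C$ from $y$ to $z$ that contains $e$. Then $D$ is a simple cycle (by the same reasoning as for $C'$), and its edge set is the symmetric difference of the edge sets of $C$ and $C'$. Hence by planarity, $C'$ is essential if and only if the region $\Omega$ bounded by $D$ does not contain the center of the drawing. The crucial observation is that $P\subseteq\partial f'$ and $e\subseteq\partial f'$, so near $e$ the cycle $D$ traces along $\partial f'$; consequently $f'\subseteq\Omega$ and $\Omega$ lies on the $f'$-side of $P$. I would then argue, using a planarity/Jordan curve argument, that every face of $G$ contained in $\Omega$ is separated from the central face by a portion of $\partial f'$, and hence cannot be the central face itself; since $f'$ is a regular face and therefore distinct from the central face, the center lies outside $\Omega$. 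Combined with essentiality of $C$, this yields that $C'$ is essential.

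The hardest step is the last one, which requires making rigorous the intuitive statement that ``rerouting around a face that does not contain the center keeps the center on the correct side''. If it turned out that the initial construction placed the center inside $\Omega$, I would repeat the construction starting from $x$ and walking along $\reverse{\pi}$ instead; a parity argument on the two possible sides of $\partial f'$ guarantees that at least one of the two choices produces an essential $C'$.
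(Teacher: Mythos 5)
Your construction of $C'$ does not work. The first, smaller problem is your claim that no edge of $P$ lies on $C$: the argument fails for the very first edge of $P$, whose tail $y$ already lies on $C$. Whenever the boundary walk of $f'$ continues along $C$ beyond $y$ (i.e., $f'$ is incident to $C$ along $e$ and the next edge of $C$ as well, which is the generic situation), your $z$ is simply the successor of $y$ on $C$, $P$ is a single edge of $C$, and $C'=P+Q$ degenerates; in particular $Q\neq C\cap C'$. The second, fatal problem is essentiality. Consider a large square $C$ with consecutive top vertices $x,y,p$ and $e=xy$, and a small square $S$ inside $C$ bounding the central face. Let the boundary walk of $f'$ run $x\to y$ via $e$, then $y\to d_1\to p$ (a short notch below the edge $yp$ of $C$), then from $p$ down and around $S$ along a ring $d_2d_3d_4d_5$ and back up to $x$; attach $S$ to the ring by a single edge so that $f'$ (the region enclosed by this walk with $S$ and the attaching edge removed) is a disk. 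All hypotheses of the lemma hold. Walking along $\partial f'$ from $y$, the first vertex of $C$ you meet is $p$, so your $C'$ is the triangle $y\,d_1\,p$, which is not essential. Your fallback of walking from $x$ instead yields the cycle formed by $x\,d_5d_4d_3d_2\,p$ closed via the bottom of $C$, and the central face lies outside that cycle too. In fact the only essential cycle of this graph avoiding $e$ is $S$ itself, which shares no edge with $C$; no rerouting of $C$ along a prefix of $\partial f'$ can ever produce it. This also shows where your essentiality argument breaks: the portion of $\partial f'$ that you discard after $z$ may wind around the central face, so the central face can perfectly well lie inside your region $\Omega$, and the assertion that every face in $\Omega$ is ``separated from the central face by a portion of $\partial f'$'' is unsubstantiated precisely because only part of $\partial f'$ bounds $\Omega$.

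The correct $C'$ is a global object, not a local detour. The paper forms the subgraph $H=C+f'$ and takes $C'$ to be the boundary of the outer face of $H$ if $e$ does not lie on that face, and the boundary of the central face of $H$ otherwise; this dichotomy is well defined because $f'$ remains a face of $H$ occupying one side of $e$, so $e$ cannot be incident to both the outer and the central face of $H$. Essentiality is then automatic (these two boundary cycles of $H$ separate the central face from the outer face since $C\subseteq H$ does), the new cycle avoids $e$ by construction, and since every edge of $H$ lies on $C$ or on $f'$, the decomposition into $Q=C\cap C'$ and a path $P$ on $f'$ falls out. Your symmetric-difference observation ($C'$ essential iff the central face is not enclosed by $D=P+R$) is fine as far as it goes, but it cannot rescue a construction that, as the example shows, sometimes has no essential outcome for either choice of starting endpoint.
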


\begin{figure}[bt]
\centering
  \subfloat[The cycle bounding the outer face is $C'$.]{
    \centering
    \includegraphics{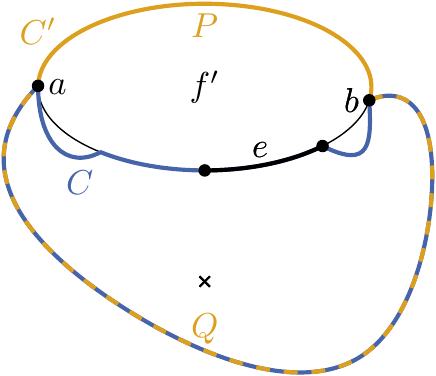}
    \label{fig:rect:existence_C'-outer}
  }
    \hspace{1ex}
  \subfloat[The cycle bounding the central face is $C'$.] {
    \centering
    \includegraphics{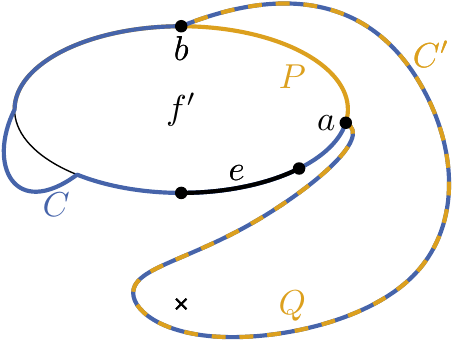}
    \label{fig:rect:existence_C'-center}
  }

  \caption{The edge $e$ cannot lie on both the outer and the central face, which is marked by a cross.
  (a) $e$ does not lie on the outer face, and hence the cycle bounding this face is defined as $C'$. (b) $C'$ is the cycle bounding the central face. In both cases $C'$ can be subdivided in two paths $P$ and $Q$ on $C$ and $f'$, respectively.
  Here, these paths are separated by the vertices $a$ and $b$.}
  \label{fig:rect:existence_C'}
\end{figure}

\begin{proof}
Consider the graph $H=C+f'$ composed of the essential cycle $C$ and
the regular face~$f'$. In $H$ the edge $e$ cannot lie on both the
outer and the central face.  If $e$ does not lie on the outer face, we
define $C'$ as the cycle bounding the outer face but directed such
that it contains the center in its interior (see
Fig.~\ref{fig:rect:existence_C'-outer}).  Otherwise, $C'$ denotes the
cycle bounding the central, which is illustrated in
Fig.~\ref{fig:rect:existence_C'-center}.

Since $C$ lies in the exterior of $f'$, the intersection of $C$ with
$C'$ forms one contiguous path~$Q$.  Setting $P=C-Q$ yields a path
that lies completely on $f'$ (it is possible though that $P$ and $f'$
are directed differently).
\end{proof}

Using this lemma, we construct an essential cycle $C'$ without the new edge $uz$ from an essential cycle~$C$ including $uz$.
Also, $C$ and $C'$ have a common path~$P$ lying on the central face of $H = C + f'$. Thus, Lemma~\ref{lem:repr:equal_labels_at_intersection} implies labelings of $C$ and $C'$ being equal on $P$.
\begin{corollary}\label{cor:rect:existence_C'_intersection}
For essential cycles $C$, $C'$ and the path $P=C\cap C'$ from Lemma~\ref{lem:rect:existence_C'}, it is $\ell_C(e)=\ell_{C'}(e)$ for all edges $e$ on $P$.
\end{corollary}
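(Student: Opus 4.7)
The plan is to apply the \emph{further} clause of Lemma~\ref{lem:repr:equal_labels_at_intersection} to every edge of $P$. That clause says: if an edge $vw$ belongs to two essential cycles $C_1, C_2$ and the vertex $v$ lies on the central face of the subgraph $H' := C_1 + C_2$, then $\ell_{C_1}(vw) = \ell_{C_2}(vw)$. Here $C$ is a simple essential cycle by hypothesis, Lemma~\ref{lem:rect:existence_C'} supplies a simple essential cycle $C'$ with $P = C \cap C'$, and every edge of $P$ belongs to both cycles. So the only nontrivial task is to verify that every vertex of $P$ sits on the central face of $H' = C + C'$.

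To identify that face, write $R$ for the complementary subpath $C' \setminus P$, so that $C' = R + P$ and, by Lemma~\ref{lem:rect:existence_C'}, $R$ (or its reverse) lies entirely on the regular face~$f'$. Since $f'$ is regular it cannot contain the center of the drawing, so $R$ lies wholly on one side of the essential cycle $C$. If $R$ lies in the exterior of $C$, then $H' = C \cup R$ is just $C$ with a bump attached outside, and the interior of the essential cycle $C$ is still a face of $H'$ containing the center; it is therefore the central face, and its boundary includes $P$. If instead $R$ lies in the interior of $C$, then $R$ partitions the interior of $C$ into two subregions, and the one enclosed by the essential cycle $C' = R + P$ must contain the center; this is the central face of $H'$, and $P$ again lies on its boundary.

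In both cases every vertex of $P$ is on the central face of $H'$. A brief check shows that $C$ and $C'$ traverse $P$ in the same direction---each keeps its own interior (containing the center) on its right, and these interiors lie on the same side of $P$---so the ``common edge'' hypothesis of Lemma~\ref{lem:repr:equal_labels_at_intersection} is unambiguous. Applying its further clause edge by edge then yields $\ell_C(e) = \ell_{C'}(e)$ for every edge $e$ on $P$, which is the claim.

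The main obstacle is the small topological case analysis that pins down which face of $H'$ is central. It is not deep---regularity of $f'$ excludes the center from $R$'s side of $C$, and essentiality of both $C$ and $C'$ forces each to enclose the center---but being explicit about it is what legitimately verifies the ``central face'' hypothesis of Lemma~\ref{lem:repr:equal_labels_at_intersection}.
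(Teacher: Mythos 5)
Your proposal is correct and matches the paper's (very terse) justification: the paper likewise derives the corollary by observing that the common path lies on the central face of the relevant union and then invoking the second clause of Lemma~\ref{lem:repr:equal_labels_at_intersection}. Your write-up is in fact slightly more careful than the paper's, since you verify the hypothesis on the central face of $C+C'$ (the subgraph the lemma actually refers to) rather than of $C+f'$, and you check the direction of traversal of the shared path.
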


For two intersecting essential cycles where all labels on one cycle are 0, we 
also observe:

\begin{SCfigure}[50][tb]
 \centering
 \includegraphics{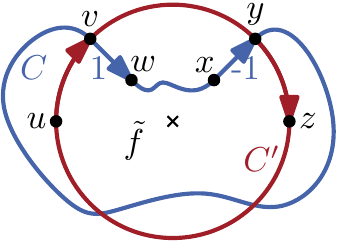}
 \caption{The situation of Lemma~\ref{lem:rect:two_cycles_horizontal}. All 
 edges of $C'$ are labeled with 0. In this situation there are edges on $C$ 
 with labels $-1$ and $1$. Hence, $C$ is neither increasing nor 
 decreasing.}
 \label{fig:rect:horizontal_cycle}
\end{SCfigure}

\newcommand{\lemTwoCyclesHorizontal}{
Let $C$ and $C'$ be two essential cycles that have at least one common vertex. If all edges on $C'$ are labeled with $0$, $C$ is neither increasing nor decreasing.
}

\begin{lemma}\label{lem:rect:two_cycles_horizontal}
\lemTwoCyclesHorizontal
\end{lemma}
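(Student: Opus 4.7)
The proof proceeds by contradiction. Assume $C$ is monotone; by symmetry we may take $C$ to be decreasing, so $\ell_C(e) \geq 0$ for every edge $e$ of $C$ with at least one strict inequality. Because $C'$ has all labels zero, this forces $C \neq C'$, so $C$ and $C'$ must diverge somewhere.

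Pick a common vertex $v$ of $C$ and $C'$ lying on the boundary of the central face $f$ of $H = C + C'$; such a vertex exists because at any common vertex of two essential cycles one of the local wedges belongs to the region inside both cycles, which is precisely $f$. Applying Lemma~\ref{lem:repr:illegal_intersection} at $v$ with $\ell_C \geq 0$ and $\ell_{C'} \equiv 0$, the incoming $C'$-edge $u'v$ lies in the closed interior of $C$ while the outgoing $C'$-edge $vw'$ lies in the closed exterior of $C$. Since every edge of $C'$ is horizontal and points right, $u'$ lies to one side of $v$ along $C'$ and $w'$ to the other, so $C$ must pass through $v$ using vertical edges. A short computation based on Lemma~\ref{lem:repr:equal_labels_at_intersection} then pins down the orientation: of the two vertical orientations, only the one in which $C$ points \emph{down} (toward the center) at $v$ is compatible with $\ell_C \geq 0$; the alternative would assign the outgoing $C$-edge at $v$ the label $-1$.

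The contradiction then arises from the closedness of $C$. Under the "descending at $v$" conclusion, the predecessor $u_1$ of $v$ along $C$ lies in the exterior of $C'$ and the successor $w_1$ lies in its interior. Since $C'$ is essential and separates its interior from its exterior, the complementary subpath $\subpath{C}{w_1, u_1}$ must contain a transversal crossing of $C'$ at which $C$ points \emph{up}. Repeating the argument at this crossing $v'$, which again lies on $\partial f$, forces $C$ to point down there, a contradiction.

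The main obstacle is the careful handling of the cases where $C$ and $C'$ share edges around $v$, so that "divergence" happens at the endpoints of a maximal shared path rather than at a transversal crossing. In those cases, applying Lemma~\ref{lem:repr:equal_labels_at_intersection} at the divergence point yields the same conclusion: leaving $C'$ in the upward direction would immediately produce a $C$-edge with label $-1$, so $C$ can only leave downward, and the closedness argument completes as before. The case that $C$ is increasing is symmetric under reversing the orientation of both cycles, so $C$ is neither increasing nor decreasing.
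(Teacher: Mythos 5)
Your overall strategy---assume $C$ is monotone, locate a central-face crossing where $C$ is forced to point down, and then derive a contradiction from a ``return'' crossing where $C$ points up---is genuinely different from the paper's proof, but it has a gap at the return crossing. The first half is sound: at a common vertex $v$ on the central face $\tilde f$ of $H=C+C'$ where $C$ transversally leaves $C'$, Lemma~\ref{lem:repr:equal_labels_at_intersection} gives $\ell_C(vw)=\rot(u'vw)\in\{+1,-1\}$, and $\ell_C\ge 0$ indeed forces the downward option. The problem is the last step. The subpath $\subpath{C}{w_1,u_1}$ must meet $C'$ again, but there is no reason why the vertex where it re-emerges into the exterior of $C'$ should lie on the central face of $H$: a deeper excursion of $C$ into the interior of $C'$ elsewhere can shield that vertex from the center, so the face of $H$ below it is not $\tilde f$. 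Your assertion that $v'$ ``again lies on $\partial f$'' is unjustified and false in general. Without the central-face hypothesis neither Lemma~\ref{lem:repr:illegal_intersection} nor Lemma~\ref{lem:repr:equal_labels_at_intersection} applies at $v'$, and---crucially---an upward-pointing edge on $C$ is not by itself in conflict with $\ell_C\ge 0$: upward edges have labels congruent to $3$ modulo $4$, so a decreasing cycle may perfectly well contain one with label $3$ (cf.\ the remark accompanying Fig.~\ref{fig:repr:intersection_cycles-C1}). Hence no contradiction materializes at $v'$.

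The paper avoids this by never following $C$: it walks along the boundary of $\tilde f$ from $v$ to the \emph{next} common vertex $y$ of $C$ and $C'$ on $\tilde f$ (which is on $\tilde f$ by construction), notes that the arc of $\partial\tilde f$ between them lies entirely on one of the two cycles, so the departing edges $vw$ and $xy$ lie on the same side of $C'$ and $\rot(uvw)=\rot(xyz)\neq 0$; the two label formulas $\ell_C(vw)=\rot(uvw)$ and $\ell_C(xy)=-\rot(xyz)$ then exhibit two edges of $C$ with labels of opposite signs directly, with no case distinction on the sign of the monotonicity. To rescue your argument you would need to show that \emph{some} central-face common vertex sees $C$ departing upward, which is essentially what the paper's choice of $y$ accomplishes; the handling of shared edges that you defer to the final paragraph is a secondary issue by comparison.
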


\begin{proof}
The situation is illustrated in Fig.~\ref{fig:rect:horizontal_cycle}.
If the two cycles are equal, the claim clearly holds.
Otherwise, we show that one can find two edges on $C$ such that the labels of these edges have opposite signs.

Let $vw$ be an edge of $C$ but not of $C'$ such that $v$ lies on the central face~$\tilde f$ of $H=C+C'$, and denote the vertex before $v$ on $C'$ by $u$.
By Lemma~\ref{lem:repr:equal_labels_at_intersection} it is
\begin{equation}\label{eqn:rect:two_cycles_horizontal:vw}
\ell_{C}(vw)=\ell_{C'}(uv)+\rot(uvw) = \rot(uvw).
\end{equation}
The second equality follows from the assumption that $\ell_{C'}(uv)=0$.
Let $y$ be the first common vertex of $C$ and $C'$ on the central face~$\tilde f$ after $v$.
That is, $\subpath{\tilde f}{v,y}$ is a part of one of the cycles $C$ and $C'$ and intersects the other cycle only at $v$ and $y$. We denote the vertex on $C$ before $y$ by $x$ and the vertex after $y$ on $C'$ by $z$.
Again by Lemma~\ref{lem:repr:equal_labels_at_intersection} we have
\begin{equation}\label{eqn:rect:two_cycles_horizontal:xy}
\ell_{C}(xy) = \ell_{C'}(yz)-\rot(xyz)=-\rot(xyz).
\end{equation}
By construction $vw$ and $xy$ lie on the same side of $C'$. Hence,
$uvw$ and $xyz$ both make a right turn if $vw$ and $xy$ lie in the
interior of $C'$ and a left turn otherwise.  Thus, it is
$\rot(uvw)=\rot(xyz)\neq 0$, and
Equations~\ref{eqn:rect:two_cycles_horizontal:vw} and
\ref{eqn:rect:two_cycles_horizontal:xy} imply that $\ell_C(uv)$ and
$\ell_C(xy)$ have opposite signs.  Hence, $C$ is neither in-\ nor
decreasing.
\end{proof}

\section{Characterization of Rectangular Graphs}
\label{sec:characterization-rect}

In this section, we prove Theorem~\ref{thm:repr:characterization} for rectangular graphs. A \emph{rectangular graph} is a graph together with an ortho-radial representation in which every face is a rectangle.
We define two flow networks that assign consistent lengths to the
graph's edges, one for the vertical and one
for the horizontal edges.  These networks are straightforward adaptions of the
networks used for drawing rectangular graphs in the
plane~\cite{bett-gdavg-99}.  In the following, \emph{vertex}
and \emph{edge} refer to the vertices and edges of the graph~$G$,
whereas \emph{node} and \emph{arc} are used for the flow networks.

\begin{figure}[t]
\centering
  \subfloat[$N_\text{ver}$]{ 
  \label{fig:draw:flows-ver}\includegraphics[scale=1]{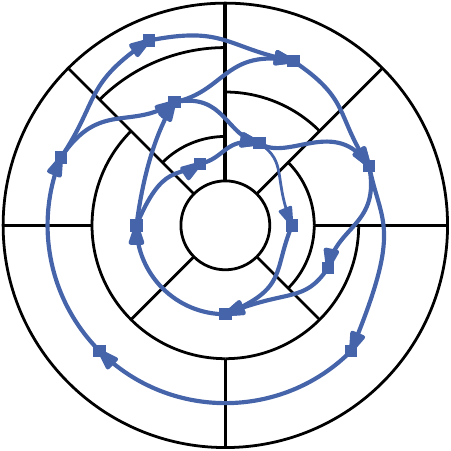}}
  ~
  \subfloat[$N_\text{rad}$]{ 
  \label{fig:draw:flows-hor}\includegraphics[scale=1]{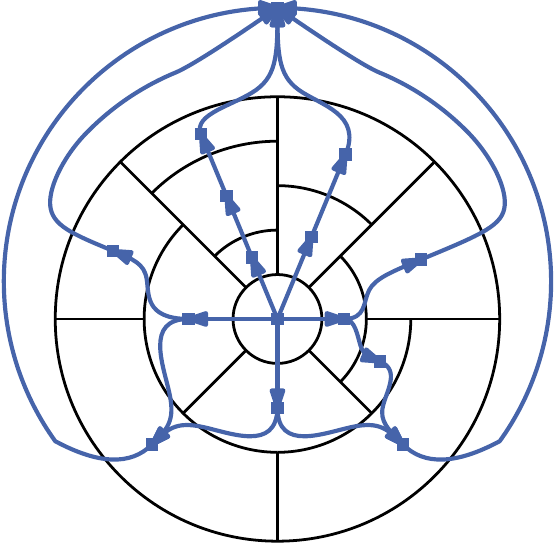}}

  \caption{Flow networks $N_\text{ver}$ and $N_\text{rad}$ for an example 
  graph~$G$. For simplicity, the edge from the outer to the central face in 
  $N_\text{rad}$ is omitted.}
  \label{fig:draw:flows}
\end{figure}

The network $N_\text{ver}=(F_\text{ver}, A_\text{ver})$ with nodes~$F_\text{ver}$ and arcs~$A_\text{ver}$ for the vertical edges contains one node for each face of $G$ except for the central and the outer face. All nodes have a demand of~0.
For each vertical edge $e$ in $G$, which we assume to be directed upwards, there is an arc $a_e=fg$ in $N_\text{ver}$, where $f$ is the face to the left of $e$ and $g$ the one to its right. The flow on $fg$ has the lower bound $l(fg)=1$ and upper bound $u(fg)=\infty$. An example of this flow network is shown in Fig.~\ref{fig:draw:flows-ver}.

To obtain a drawing from a flow in $N_\text{ver}$, we set the length of a vertical edge~$e$ to the flow on $a_e$. The conservation of the flow at each node $f$ ensures that the two vertical sides of the face~$f$ have the same length.

Similarly, the network~$N_\text{rad}$ assigns lengths to the radial edges.
There is a node for all faces of $G$, and an arc $a_e=fg$ for every horizontal edge $e$ in $G$, which we assume to be oriented clockwise.
Additionally, $N_\text{rad}$ includes one arc from the outer to the central face.
Again, all edges require a minimum flow of 1 and have infinite capacity. The 
demand of all nodes is 0. Fig.~\ref{fig:draw:flows-hor} shows an example of 
such a flow network.

\begin{lemma}
  \label{thm:rect:flows-to-drawing}
  A pair of valid flows in $N_\text{rad}$ and $N_\text{ver}$ bijectively corresponds to a valid ortho-radial drawing of $G$ respecting $\Gamma$.
\end{lemma}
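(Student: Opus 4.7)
The plan is to establish both directions of the bijection, with the common link being the identification of flow values with integer grid lengths of the corresponding edges. For the forward direction, starting from valid flows $\phi_\text{ver}$ on $N_\text{ver}$ and $\phi_\text{rad}$ on $N_\text{rad}$, I would construct the drawing by first placing the reference edge $e^\star$ at a fixed position on the cylinder (pointing right with the outer face to its left, as required by convention), and then defining the position of any other vertex~$v$ by walking from the endpoint of $e^\star$ along any path $P$ to $v$; each edge of $P$ gets its direction from the accumulated rotations dictated by $\Gamma$ and its length from the flow value on its dual arc.

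The first and most delicate task is to show that the resulting position of $v$ is independent of the choice of $P$, which is equivalent to checking that every closed walk yields zero net coordinate change on the cylinder. I would reduce this to individual face boundaries: for a regular rectangular face, the equality of the two vertical side lengths and of the two horizontal side lengths is exactly flow conservation at the corresponding node in $N_\text{ver}$ and $N_\text{rad}$, respectively. Non-essential cycles bound a union of such faces inside a disk and hence yield zero displacement by a straightforward inductive decomposition. Essential cycles wrap once around the cylinder, so they must have zero vertical displacement (by flow conservation summed over the faces lying between the cycle and the central face) and horizontal displacement equal to the total number of spokes~$N$; the latter is consistently prescribed by the flow on the extra arc from the outer to the central face in $N_\text{rad}$. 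Once positions are well-defined, planarity and agreement with $\Gamma$ follow because each face is laid out as a rectangle of the combinatorially specified shape, and the rectangles glue together along the fixed embedding.

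For the backward direction I would start from any valid ortho-radial drawing of $G$ respecting $\Gamma$ and set $\phi_\text{ver}(a_e)$, respectively $\phi_\text{rad}(a_e)$, to be the integer grid length of the corresponding vertical or horizontal edge $e$; these lengths are all at least $1$, so the lower bounds are met. Flow conservation at a regular face-node is exactly the rectangularity of that face, and for the outer and central face nodes in $N_\text{rad}$ both boundaries are horizontal cycles winding once around the cylinder with total length $N$, so the extra arc carrying flow $N$ balances the conservation equations at both. Since the two constructions are inverse to each other on the data of edge lengths, and since a drawing respecting $\Gamma$ is determined by its edge lengths once the reference edge is pinned, bijectivity follows. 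I expect the main difficulty to lie in the well-definedness step of the forward direction, because it requires careful handling of the cylinder topology and the dichotomy between essential and non-essential cycles; once that is established, the remainder is mostly book-keeping driven by flow conservation and the rectangularity of every face.
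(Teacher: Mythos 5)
Your proposal is correct and takes essentially the same approach as the paper: flow values are identified with edge lengths, and flow conservation at a face node is exactly the equality of the lengths of that rectangle's opposite sides. The paper's proof is in fact much terser than yours---it only verifies the per-face equalities and leaves implicit the well-definedness/gluing argument for vertex positions on the cylinder that you spell out, so your extra detail is a faithful elaboration rather than a different route.
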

\begin{proof}
Given a feasible flow $\varphi$ in $N_\text{ver}$, we set the length of each 
vertical edge~$e$ of $G$ to the flow $\varphi(a_e)$ on the arc $a_e$ that
crosses $e$. For each face $f$ of $G$, the total length of its left
side is equal to the total amount of flow entering $f$.  Similarly,
the length of the right side is equal to the amount of flow leaving
$f$. As the flow is preserved at all nodes of $N_\text{ver}$, the left
and right sides of $f$ have the same length.  Similarly, one obtains
the length of the radial edges from a flow in $N_\text{rad}$.

Similarly, given a drawing of $\Gamma$, we can extract a flow in the networks. 
Since the opposing sides of the rectangles have the same length, the flow is 
preserved at the nodes. Moreover, all sides have length at least $1$ and 
therefore the minimum flow on all arcs is guaranteed.
\end{proof}

Using this equivalence of drawings and feasible flows, we show the 
characterization of rectangular graphs.

\begin{theorem}\label{thm:draw:rectangle_drawing}
  Let $G=(V,E)$ be a plane graph with an ortho-radial representation~$\Gamma$ 
  satisfying Conditions~\ref{cond:repr:sum_of_angles} and 
  \ref{cond:repr:rotation_faces} of 
  Definition~\ref{def:repr:valid_representation} such that all faces are 
  rectangles.
  Let $N_\text{rad}$ and $N_\text{ver}$ be the flow networks as defined above.
  The following statements are equivalent:
  \begin{enumerate}[label=(\roman*)]
  \item\label{item:draw:rectangle_drawing:drawing} There is an ortho-radial 
  drawing of $\Gamma$.
  \item\label{item:draw:rectangle_drawing:set} No $S\subseteq F_\text{ver}$ 
  exists such that there is an arc from $F_\text{ver}\setminus S$ to $S$ in 
  $N_\text{ver}$ but not vice versa.
  \item\label{item:draw:rectangle_drawing:valid} $\Gamma$ is valid.
  \end{enumerate}  
\end{theorem}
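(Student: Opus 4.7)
My plan is to prove the three-way equivalence by the cyclic chain (ii)~$\Rightarrow$~(i)~$\Rightarrow$~(iii)~$\Rightarrow$~(ii). The first two implications are routine applications of flow-feasibility theory and drawing geometry, respectively; the return from (iii) to (ii) is the main technical hurdle, where one must extract a monotone essential cycle from a bad cut in $N_\text{ver}$.

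For (ii)~$\Rightarrow$~(i), Condition (ii) is precisely Hoffman's cut-feasibility criterion specialized to $N_\text{ver}$, whose arcs carry lower bound $1$ and upper bound $\infty$: an infeasible instance admits a cut $(\bar S, S)$ with $l(\bar S, S) > 0$ and $u(S, \bar S) = 0$, which is exactly the configuration forbidden by (ii). I would additionally verify that $N_\text{rad}$ is always feasible under Conditions~\ref{cond:repr:sum_of_angles} and \ref{cond:repr:rotation_faces}: the auxiliary arc from outer to central face, together with the fact that each regular rectangular face contributes a matched top-bottom pair of horizontal arcs, makes every arc of $N_\text{rad}$ lie on a directed cycle, so a uniform circulation of flow $1$ on each arc is feasible. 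A pair of feasible flows is then lifted to a drawing by Lemma~\ref{thm:rect:flows-to-drawing}.

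For (i)~$\Rightarrow$~(iii), Conditions~\ref{cond:repr:sum_of_angles} and \ref{cond:repr:rotation_faces} are immediate from the angle geometry of the drawing. For Condition~\ref{cond:repr:labeling}, the plan is to interpret the label $\ell_C(e)$ of an edge on an essential cycle $C$ as encoding the $y$-coordinate (height on the cylinder) of the start vertex of $e$, up to an additive constant determined by the reference edge. An essential cycle that contains at least one non-horizontal edge must then attain both a strict maximum and a strict minimum of $y$, producing labels of both signs; a purely horizontal essential cycle has all labels equal to $0$.

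The main obstacle is (iii)~$\Rightarrow$~(ii), which I would prove by contrapositive. Given a bad set $S$, every vertical edge on the boundary of $S$, when oriented so that $\bar S$ lies to its left, must point upward in the drawing---otherwise its arc in $N_\text{ver}$ would run from $S$ to $\bar S$, contradicting the definition of a bad set. Walking along any closed boundary component of $S$ in $G$ with $S$ on the right, every vertical edge thus points upward. A non-essential such component would need downward vertical edges to attain rotation $4$ (Condition~\ref{cond:repr:rotation_faces}), so at least one component $C$ must be an essential cycle. To conclude that $C$ violates Condition~\ref{cond:repr:labeling}, I would track $\ell_C$ along $C$ using Observation~\ref{obs:repr:label_difference}: each vertical edge contributes a uniform-sign change to the label, while horizontal edges contribute $0$, making $\ell_C$ weakly monotone on $C$. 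The delicate point is the corner transitions between horizontal and vertical edges on the boundary of $S$, where Lemma~\ref{lem:repr:equal_labels_at_intersection} together with a local analysis of turns at rectangle corners should ensure that the uniform sign of the label changes is preserved across corner vertices, yielding the desired monotone cycle.
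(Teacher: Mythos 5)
The main gap is in your (i)~$\Rightarrow$~(iii). The label $\ell_C(e)$ is the rotation (total turning) of a path from the reference edge to $e$, so it records the direction of $e$ modulo $4$ plus four times a winding number of the tangent direction; it is \emph{not} the $y$-coordinate of the start vertex of $e$ up to an additive constant. A descending staircase $R,D,R,D,\dots$ has labels alternating between $0$ and $1$ while the height strictly decreases, and two horizontal edges at very different heights on the same essential cycle can both carry label $0$. Consequently, your claim that a strict maximum and minimum of $y$ ``produce labels of both signs'' does not follow from anything you establish: at a peak the incoming vertical edge has label $\equiv 3$ and the outgoing one label $\equiv 1 \pmod 4$, but for all you have shown these could be $3$ and $5$, both positive. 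The entire content of this direction is pinning down the additive normalization, i.e.\ showing that the topmost horizontal run of $C$ has label \emph{exactly} $0$ so that the flanking edges get labels exactly $-1$ and $+1$. The paper does this by explicitly constructing a path from the reference edge to the topmost vertex of $C$ using only up/left steps, proving that this construction terminates, yields a path, and stays outside $C$ (so Lemma~\ref{lem:repr:equivalence_pseudo_elementary_paths} applies). Your proposal has no substitute for this step, so this direction is not proved.

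The other two directions are closer to the mark but need repairs. For (ii)~$\Rightarrow$~(i), invoking Hoffman's circulation criterion for $N_\text{ver}$ is a legitimate alternative to the paper's explicit cycle construction, but ``a uniform circulation of flow $1$ on each arc'' of $N_\text{rad}$ is not conservative in general; you must sum unit flows along directed cycles covering every arc, which is what the paper does after observing that $N_\text{rad}$ minus the auxiliary arc is a DAG whose only source and sink are the central and outer face. For (iii)~$\Rightarrow$~(ii), your skeleton (bad cut $\to$ boundary cycle of $\mathcal{S}$ all of whose vertical edges point up $\to$ monotone cycle) matches the paper, but the concluding step is garbled: $\ell_C$ cannot be ``weakly monotone'' around a closed cycle unless it is constant, and label changes occur at turns, not ``per vertical edge.'' What you actually need is that the cyclic label sequence changes by at most $1$ between consecutive edges (Observation~\ref{obs:repr:label_difference}) and avoids the residue $1 \bmod 4$ (no downward edges), hence is confined to an interval $[4k-2,4k]$, and takes the nonzero value $4k-1$ on the upward edge crossed by the incoming arc---so $C$ is increasing or decreasing either way, violating Condition~\ref{cond:repr:labeling}. (The paper instead anchors $k=0$ with a second explicit down/right elementary path.) Lemma~\ref{lem:repr:equal_labels_at_intersection} concerns two distinct essential cycles meeting at a vertex and is not the right tool for corner transitions along a single cycle.
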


\begin{proof}
  \ref{item:draw:rectangle_drawing:drawing} $\Rightarrow$
  \ref{item:draw:rectangle_drawing:valid}: Let $\Delta$ be an
  ortho-radial drawing of $G$ preserving the embedding described by
  $\Gamma$ and let $C$ be a simple, essential cycle directed such that
  the center lies in its interior.  Our goal is to show that $C$ is
  neither an increasing nor a decreasing cycle.  To this end, we
  construct a path $P$ from the reference edge to a vertex on $C$ such
  that the labeling of $C$ induced by $P$ attains both positive and
  negative values.

  In $\Delta$ either all vertices of $C$ have the same $y$-coordinate,
  or there is a maximal subpath $Q$ of $C$ whose vertices all have the
  maximum $y$-coordinate among all vertices of $C$.  In the first
  case, we may choose the endpoint~$v$ of the path~$P$ arbitrarily,
  whereas in the second case we select the first vertex of $Q$ as $v$.

\begin{figure}[bt]
  \centering
  \includegraphics[scale=0.9]{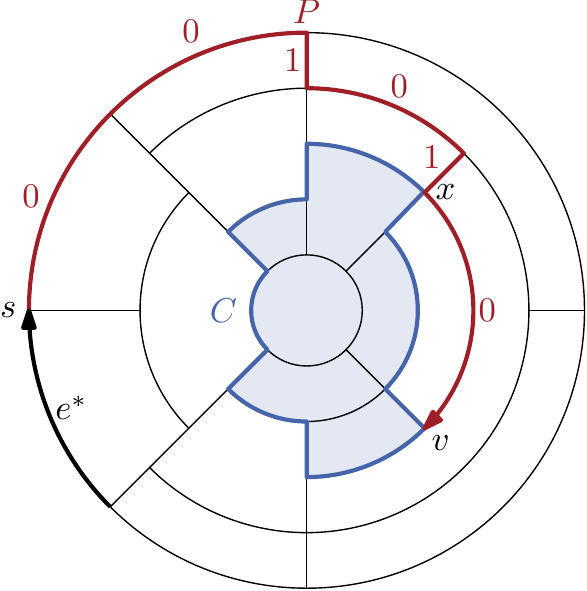}
  \caption{The path $P$ from $s$ to $v$---constructed backwards by going only 
  up or left---does not intersect the interior of $C$. The rotations of the 
  edges on $P$ relative to $e^\star$ are 0 or 1.}
  \label{fig:draw:drawing_to_representation}
\end{figure}

We construct the path $P$ backwards (i.e., the construction yields
$\reverse{P}$) as follows: Starting at $v$ we choose the edge going
upwards from $v$, if it exists, or the one leading left. Since all
faces of $G$ are rectangles, at least one of these always exists. This
procedure is repeated until the endpoint~$s$ of the reference edge is
reached.  An example of the constructed path is shown in
Fig.~\ref{fig:draw:drawing_to_representation}

To show that this algorithm terminates, we assume that this was not
the case.  As $G$ is finite, there must be a first time a vertex~$w$
is visited twice. Hence, there is a cycle~$C'$ in $\Delta$ containing
$w$ that contains only edges going left or up.  As all drawable
essential cycles with edges leading upwards must also have edges that
go down~\cite{hht-orthoradial-09}, all edges of $C'$ are
horizontal. By construction, there is no edge incident to a vertex of
$C'$ that leads upwards. The only cycle with this property, however,
is the one enclosing the outer face, because $G$ is connected. But
this cycle contains the reference, and therefore the algorithm halts.

This not only shows that the construction of $P$ ends, but also that
$P$ is a path (i.e., the construction does not visit a vertex twice).
However, $P$ might be not elementary, since it may intersect $C$
multiple times (e.g.\ in Fig.~\ref{fig:draw:drawing_to_representation}
$P$ contains two vertices of $C$: $v$ and $x$).  But $v$ has the
smallest $y$-coordinate among all vertices of $P$ and the largest
among those on $C$.
Thus, no part of $P$ lies inside $C$.  Hence,
Lemma~\ref{lem:repr:equivalence_pseudo_elementary_paths} guarantees
that the labeling $\ell^P_C$ induced by $P$ coincides with the
labeling $\ell_C$ induced by any elementary path.  To show that
$\Gamma$ satisfies Condition~\ref{cond:repr:labeling} of
Definition~\ref{def:repr:valid_representation}, we consider the
labeling $\ell^P_C$ induced by $P$.

Let $v'$ be the vertex following $v$ on $C$. By construction of $P$,
the labeling of $vv'$ induced by $P$ is $0$. If all edges of $C$ are
horizontal, this implies $\ell^P_C(e)=0$ for all edges $e$ of $C$,
which satisfies Condition~\ref{cond:repr:labeling}.

Otherwise, we claim that the edge $e_-=uv$ directly before $Q$ and the
edge $e_+=wx$ directly following $Q$ on $C$ have labels $-1$ and $+1$,
respectively.  Since all edges on $Q$ are horizontal and $e_-$ goes
down, we have $\rot(\subpath{C}{v,x})=1$ and therefore
$\ell^P_C(e_+)=1$. Similarly, $\rot(uvv')=1$ implies that
$\ell^P_C(uv)=\ell^P_C(vv') - \rot(uvv')=-1$.

\ref{item:draw:rectangle_drawing:set} $\Rightarrow$
\ref{item:draw:rectangle_drawing:drawing}:
By Lemma~\ref{thm:rect:flows-to-drawing} the existence of a drawing is 
equivalent to the existence of feasible flows in $N_\text{rad}$ and 
$N_\text{ver}$.
 We claim that $N_\text{rad}$
always possesses a feasible flow. To construct a flow, note that
$N_\text{rad}$ without the arc from the outer face~$g$ to the central
face~$f$ is a directed acyclic graph with $f$ as its only source and
$g$ as its only sink.  For each arc $a\neq gf$ in $N_\text{rad}$ there
is a directed path~$P_a$ from $f$ to $g$ via $a$.  Adding the
arc~$gf$, we obtain the cycle~$C_a=P_a\join gf$. Letting one unit flow
along each of these cycles $C_a$ and adding all flows gives the
desired flow in $N_\text{rad}$. By construction, each arc~$a$ lies on
at least one cycle (namely~$C_a$), and hence, the minimum flow of $1$
on $a$ is satisfied.

To construct a feasible flow in $N_\text{ver}$ we again compose cycles
of flow.  In order to find a directed cycle using an arc~$fg$, we
define the set $S_g$ of all nodes $h$ for which there exists a
directed path from $g$ to $h$ in $N_\text{ver}$. By definition, there
is no arc from a vertex in $S_g$ to a vertex not in $S_g$.  As
$N_\text{ver}$ satisfies~\ref{item:draw:rectangle_drawing:set}, $S_g$
does not have any incoming arcs either. Hence, $f\in S_g$ and there is
a directed path from $g$ to $f$. Closing this path with the arc $fg$
results in a cycle of $N_\text{ver}$, which we denote by $C_{fg}$.

Repeating this process for all arcs of $N_\text{ver}$, we obtain the
set~$\mathcal{C}$ of the cycles~$C_a$ ($a\in A_\text{ver}$). We set
the flow $\varphi(a)$ on each arc~$a$ as the number of the cycles in
$\mathcal{C}$ containing~$a$. Since $a$ lies on $C_a$, the flow on $a$
is at least 1.  As $\varphi$ is the sum of unit flows along cycles,
the flow is conserved at all nodes, i.e., the sum of the flows on
incoming arcs of a node $f$ is equal to the flow on arcs leaving $f$.
Therefore, $\varphi$ is a feasible flow in $N_\text{ver}$.

\ref{item:draw:rectangle_drawing:valid} $\Rightarrow$
\ref{item:draw:rectangle_drawing:set}: Instead of proving this
implication directly, we show the contrapositive. That is, we assume
that there is a set $S\subset F_\text{ver}$ of nodes in $N_\text{ver}$ such 
that
$S$ has no outgoing but at least one incoming arc.  From this
assumption we derive that $\Gamma$ is not valid, as we find an
increasing or decreasing cycle.

Let $N_\text{ver}[S]$ denote the node-induced subgraph of $N_\text{ver}$ 
induced by the set $S$. Without loss of generality, $S$ can be chosen such 
that $N_\text{ver}[S]$ is weakly connected.
If $N_\text{ver}[S]$ is not weakly connected, at least one weakly-connected 
component of $N_\text{ver}[S]$ possesses an incoming arc and we can work with 
this component instead.

As each node of $S$ corresponds to a face of $G$, $S$ can also be
considered as a collection of faces of $G$. To distinguish the two
interpretations of $S$, we refer to this collection of faces by
$\mathcal{S}$.  Our goal is to show that the innermost or the
outermost boundary of $\mathcal{S}$ forms a cycle in the original
graph $G$ that has no valid labeling.

\begin{figure}[bt]
 \centering
 \includegraphics{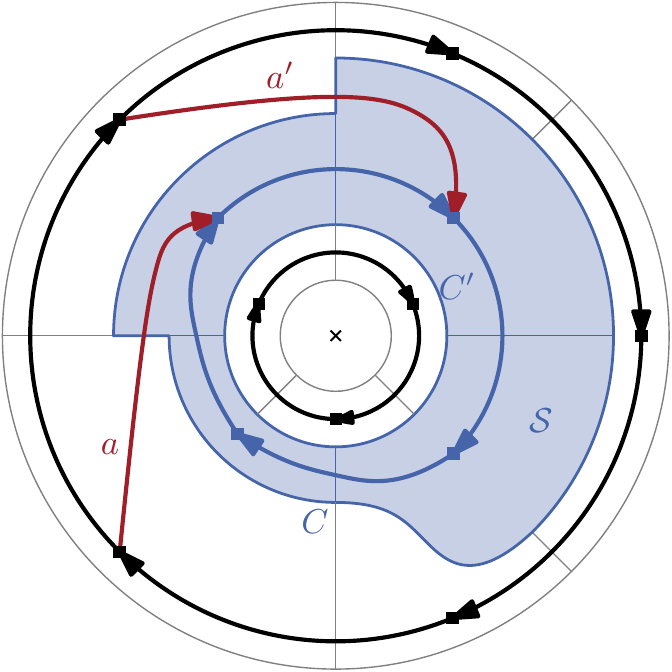}
 \caption{A set~$S$ of nodes in a graph~$G$ such that $N_\text{ver}[S]$ has no 
 outgoing but two incoming arcs~$a$ and $a'$.
 The set of faces~$\mathcal{S}$ corresponding to the nodes in $S$ are shaded 
 with blue.
 The outermost boundary of $\mathcal{S}$ forms an increasing cycle~$C$. The 
 edges on this cycle with label $-1$ are exactly those that are crossed by $a$ 
 or $a'$. All other edges on $C$ are labeled with 0.
 Note that the edge on $C$ at the bottom is curved because $G$ does not admit 
 an ortho-radial drawing.}
 \label{fig:draw:impossible_cycle}
\end{figure}

Fig.~\ref{fig:draw:impossible_cycle} shows an example of such a set
$S$ of nodes. Here, the arcs $a$ and $a'$ lead from a node outside of
$S$ to one in $S$. These arcs cross edges on the outer boundary of
$\mathcal{S}$, which point upwards.

Each node of $S$ has at least one outgoing arc, which must end at
another node of~$S$. Hence, $S$ contains a cycle, and therefore
$\mathcal{S}$ separates the outer and the central face of $G$.  Thus,
the cycle~$C$ that constitutes the outermost boundary of
$\mathcal{S}$, i.e., the smallest cycle containing all faces of
$\mathcal{S}$, is essential.  Similarly, we define $C'$ as the cycle
forming the innermost boundary of $\mathcal{S}$.

By assumption there is an arc~$a$ from a node entering $S$ from the
outside.  Since $N_\text{ver}[S]$ is weakly connected, $\mathcal{S}$
has no holes and the edge~$e$ crossed by $a$ either lies on $C$ or
$C'$. In the former case $e$ points up whereas in the latter case $e$
points down.  If $e$ lies on $C$ we prove in the following that $C$ is
an increasing cycle. Similarly one can show that $C'$ is a decreasing
cycle if $e$ lies on $C'$. We only present the argument for $C$ as the
one for $C'$ is similar.

We first observe that no edge on $C$ is directed downwards, since the
arc~$a_e$ corresponding to a downward edge~$e$ would lead from a node
in $S$ to a node outside of $S$.  To restrict the possible labels for
edges on $C$, we construct an elementary path~$P$ from the endpoint
$s$ of the reference edge to a vertex on $C$.  The construction is
similar to the one we used to show that the existence of a drawing
implies the validity of the representation, but this time the
construction works forwards starting at~$s$.  If the current vertex
lies on $C$, the path is completed. Otherwise, we append the edge
going down, if it exists, or the one that goes to the right.  As above
one can show that this procedure produces a path~$P$. It is even
elementary, because we stop when a vertex on $C$ is reached.

Let $v$ be the endpoint of $P$ and $w$ the vertex on $C$ following
$v$. By construction, $\rot(e^\star\join P)\in\{0,1\}$. Therefore,
$\ell_C(vw)=0$ if $vw$ points right or $\ell_C(vw)=-1$ if it points
up.  Since no edge on $C$ has a label that is congruent to 1 modulo 4
(i.e., the edge is downwards) and the labels of neighboring edges
differ by $-1$, $0$ or $1$, we obtain $\ell_C(e')\in\{-2,-1,0\}$ for
all edges $e'\in C$.  In particular, $\ell_C(e')\leq 0$. But the edge
$e$ crossed by the arc~$a$ points upwards and therefore
$\ell_C(e)=-1$. Hence, $C$ is an increasing cycle and $\Gamma$ is not
valid.
\end{proof}

By \cite{hht-orthoradial-09} an ortho-radial drawing of a graph satisfies 
Conditions~\ref{cond:repr:sum_of_angles} and~\ref{cond:repr:rotation_faces} of 
Definition~\ref{def:repr:valid_representation}. Therefore,
Theorem~\ref{thm:draw:rectangle_drawing} implies the characterization of 
ortho-radial drawings for rectangular graphs.

\begin{corollary}[Theorem~\ref{thm:repr:characterization} for Rectangular Graphs]\label{cor:draw:characterization}
  A rectangular 4-plane graph admits a bend-free ortho-radial drawing if and only if
  its ortho-radial representation is valid.
\end{corollary}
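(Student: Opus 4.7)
The plan is to derive the corollary as a near-immediate consequence of Theorem~\ref{thm:draw:rectangle_drawing}, which already establishes the equivalence of having a drawing and having a valid representation \emph{under the standing assumption} that Conditions~\ref{cond:repr:sum_of_angles} and \ref{cond:repr:rotation_faces} of Definition~\ref{def:repr:valid_representation} hold. So the work reduces to discharging those two standing assumptions, i.e., showing that they are forced on either side of the biconditional.

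For the forward direction, suppose $G$ admits a bend-free ortho-radial drawing $\Delta$ realizing its representation~$\Gamma$. I would invoke the result of Hasan, Hossain and Tamura~\cite{hht-orthoradial-09}, which says that the angles around each vertex of any ortho-radial drawing sum to $360\degree$, and the face rotations take the values prescribed in Condition~\ref{cond:repr:rotation_faces} (namely $4$ for regular faces and $0$ or $-4$ for the outer/central faces according to whether they coincide). Hence Conditions~\ref{cond:repr:sum_of_angles} and \ref{cond:repr:rotation_faces} are automatically met, so Theorem~\ref{thm:draw:rectangle_drawing} applies and yields implication \ref{item:draw:rectangle_drawing:drawing}$\Rightarrow$\ref{item:draw:rectangle_drawing:valid}, giving that $\Gamma$ is valid.

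For the backward direction, assume $\Gamma$ is a valid ortho-radial representation in which every face is a rectangle. Validity by definition entails that Conditions~\ref{cond:repr:sum_of_angles}, \ref{cond:repr:rotation_faces}, and \ref{cond:repr:labeling} all hold, so again the hypotheses of Theorem~\ref{thm:draw:rectangle_drawing} are met, and the implication \ref{item:draw:rectangle_drawing:valid}$\Rightarrow$\ref{item:draw:rectangle_drawing:drawing} of that theorem provides the desired ortho-radial drawing respecting~$\Gamma$.

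There is no substantial obstacle to overcome here; the entire technical content has been absorbed into Theorem~\ref{thm:draw:rectangle_drawing} (and, on the drawing side, into Lemma~\ref{thm:rect:flows-to-drawing} connecting flows to drawings). The only thing to be careful about is the citation of \cite{hht-orthoradial-09} for the angle-sum and face-rotation identities in the forward direction, so that we may legitimately drop the standing assumptions used in Theorem~\ref{thm:draw:rectangle_drawing}. Once that is in place the corollary follows by a one-line specialization.
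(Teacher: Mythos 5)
Your proposal matches the paper's own argument: the paper likewise cites \cite{hht-orthoradial-09} for the fact that any ortho-radial drawing satisfies Conditions~\ref{cond:repr:sum_of_angles} and~\ref{cond:repr:rotation_faces}, and then reads off both directions of the corollary from Theorem~\ref{thm:draw:rectangle_drawing}. The only quibble is the misattributed author names for \cite{hht-orthoradial-09}, which does not affect the mathematics.
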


\section{Characterization of 4-Planar Graphs}\label{sec:rectangulation}

In the previous section we proved for rectangular graphs that there is
an ortho-radial drawing if and only if the ortho-radial representation
is valid. We extend this result to 4-planar graphs by reduction to the
rectangular case. In this section we provide a high-level overview of
the reduction; a detailed proof is found in Section~\ref{sec:rect:correctness}.

In Section~\ref{sec:rect:algorithm} we present an algorithm augmenting
$G$ such that all faces become rectangles. As we show in
Section~\ref{sec:rect:correctness} we then can apply
Corollary~\ref{cor:draw:characterization} to show the remaining
implication of Theorem~\ref{thm:repr:characterization}.

\subsection{Rectangulation Algorithm}\label{sec:rect:algorithm}
Given a graph $G$ and its ortho-radial representation $\Gamma$ as
input, we present a rectangulation algorithm that augments~$\Gamma$
producing a graph $G'$ with ortho-radial representation $\Gamma'$ such
that all faces except the central and outer face of $\Gamma'$ are
rectangles. Moreover, we ensure that the outer and the central face of
$\Gamma'$ make no turns.  

Let $u$ be a vertex that has a left turn on a face $f$ and $vw$ be an
edge on $f$. Let further $t$ be the vertex before $u$ on $f$. We
obtain an \emph{augmentation} $\Gamma^{u}_{vw}$ from $\Gamma$ by
splitting the edge~$vw$ into two edges $vz$ and $zw$ introducing a new
vertex~$z$.  Further, we insert the edge $uz$ in the interior of $f$
such that $uz$ points in the same direction as $tu$. For an
illustration of an augmentation see Fig.~\ref{fig:rect:insertion}.

\begin{figure}[tb]
 \centering
 \includegraphics{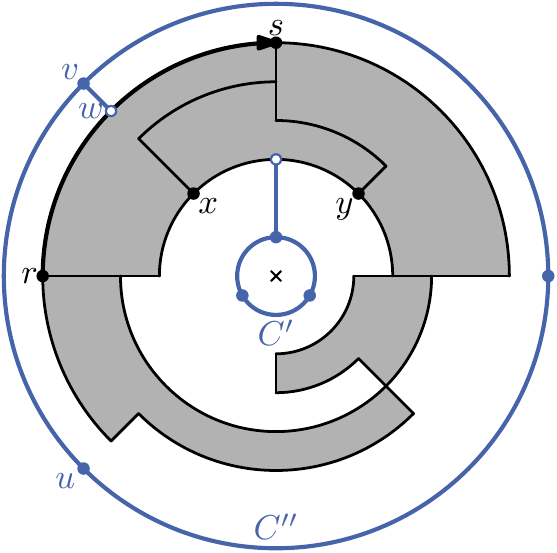}
 \caption{The outer and the central face are rectangulated by adding cycles of 
 length~3.
 The cycle~$C'$ is connected to an arbitrary edge~$xy$ that has label~0 and 
 $C''$ is connected to a new vertex on the old reference edge~$rs$.
 The edge~$uv$ is selected as the new reference edge.}
 \label{fig:rect:outer_central_face}
\end{figure}

If the central and outer faces are not rectangles, we insert triangles in both 
faces and suitably connect these to the original graph; see 
Fig.~\ref{fig:rect:outer_central_face}.
For the central face~$g$ we identify an edge $e$ on the simple cycle~$C$ 
bounding $g$ such that $\ell_C(e)=0$. Since $\Gamma$ is valid and $C$ is an 
essential cycle, such an edge must exist.
We then insert a new cycle~$C'$ of length~3 inside $g$ and connect one of its 
vertices to a new vertex on $e$. The new cycle~$C'$ now forms the boundary of 
the central face.Analogously, we insert into the outer face a cycle~$C''$ of 
length $3$ which contains $G$ and is connected to the reference edge.

In the remainder we therefore only need to consider regular faces.
By definition of a rectangle, as long as there is a regular face that
is not a rectangle, there is a left turn on that face. We augment that
face with an additional edge such that the left turn is resolved and
no further left turns are introduced, which guarantees that the
procedure of augmenting the graph terminates.

We further argue that we augment the given representation in such a
way that it remains valid. Conditions~\ref{cond:repr:sum_of_angles}
and \ref{cond:repr:rotation_faces} of
Definition~\ref{def:repr:valid_representation} are easy to preserve if we
choose the targets of the augmentation correctly; the proof is
analogous to the proof by Tamassia~\cite{t-emn-87}. In particular, the next
observation helps us to check for these two conditions.
\begin{observation}\label{obs:rect:augmentation_conditions_1_to_4}
The representation $\Gamma^u_{vw}$ satisfies Conditions~\ref{cond:repr:sum_of_angles} and~\ref{cond:repr:rotation_faces} of Definition~\ref{def:repr:valid_representation} if and only if $\rot(\subpath{f}{u,vw})=2$.
\end{observation}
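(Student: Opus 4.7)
The plan is to verify each condition separately, showing that Condition~\ref{cond:repr:sum_of_angles} is preserved unconditionally while Condition~\ref{cond:repr:rotation_faces} is equivalent to the stated rotation equation. Let $u'$ denote the successor of $u$ on $f$ (so $\rot(tuu')=-1$) and let $u_k$ denote the predecessor of $v$ on $f$. The augmentation splits $f$ into two new faces: $f_1$ bounded by $\subpath{f}{u,v}\join vz\join zu$ and containing $u'$, and $f_2$ bounded by $uz\join zw\join \subpath{f}{w,u}$ and containing $t$. The face $f'$ on the other side of $vw$ only gains the new vertex $z$ on its boundary.

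For Condition~\ref{cond:repr:sum_of_angles}, the angle sum is unchanged at every vertex other than $u,v,w,z$. At $v$ and $w$ only one neighbor is renamed to $z$, preserving every angle. At $z$ the three incident edges produce the angles $90\degree, 90\degree, 180\degree$, summing to $360\degree$. At $u$ the $270\degree$ interior angle of $f$ is partitioned by $uz$: because $uz$ points in the direction of $tu$, the piece inside $f_2$ has measure $180\degree$ (the straight turn $tu\to uz$) and the piece inside $f_1$ has measure $90\degree$ (the right turn $zu\to uu'$), summing back to $270\degree$. Thus Condition~\ref{cond:repr:sum_of_angles} holds regardless of the value of $\rot(\subpath{f}{u,vw})$.

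For Condition~\ref{cond:repr:rotation_faces}, the face $f'$ is unchanged in rotation because $z$ is inserted with a straight angle. Summing rotations around $f_1$, the internal vertices of $\subpath{f}{u,v}$ contribute the same as in $f$, the rotation at $v$ equals $\rot(u_kvw)$ because $vz$ inherits the direction of $vw$, and the two new corners $\rot(vzu)$ and $\rot(zuu')$ are both right turns. Using $\rot(\subpath{f}{u,vw})=\rot(\subpath{f}{u,v})+\rot(u_kvw)$, this gives $\rot(f_1)=\rot(\subpath{f}{u,vw})+2$. Analogously $\rot(f_2)=\rot(\subpath{f}{vw,u})+1$, the smaller increment arising because $\rot(tuz)=0$ while $\rot(uzw)=+1$. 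Since $f$ is a regular face with $\rot(f)=4$ and the left turn at $u$ contributes $-1$, summing the vertex rotations of $f$ yields $\rot(\subpath{f}{u,vw})+\rot(\subpath{f}{vw,u})=5$, so $\rot(f_1)+\rot(f_2)=8$. Hence $\rot(f_1)=4$ is equivalent to $\rot(\subpath{f}{u,vw})=2$, and this automatically implies $\rot(f_2)=4$.

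The main bookkeeping subtlety will be to track how the $-1$ contribution of the left turn at $u$ in $\rot(f)$ is redistributed as $+1$ in $f_1$ and $0$ in $f_2$; this net change of $+2$ combines with the $+2$ contributed by the two new right turns at $z$ on either side of $uz$ to produce the clean relation $\rot(f_1)+\rot(f_2)=\rot(f)+4$, from which the if-and-only-if follows at once when combined with the unconditional preservation of Condition~\ref{cond:repr:sum_of_angles}.
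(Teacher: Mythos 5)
Your proof is correct and is exactly the direct angle-and-rotation bookkeeping that the paper leaves implicit (the observation is stated without proof, with a pointer to the analogous argument of Tamassia): Condition~\ref{cond:repr:sum_of_angles} is preserved automatically by the $180\degree/90\degree$ split at $u$ and the $90\degree,90\degree,180\degree$ angles at $z$, and the identities $\rot(f_1)=\rot(\subpath{f}{u,vw})+2$, $\rot(f_2)=\rot(\subpath{f}{vw,u})+1$, and $\rot(\subpath{f}{u,vw})+\rot(\subpath{f}{vw,u})=\rot(f)-\rot(tuu')=5$ all check out, giving $\rot(f_1)+\rot(f_2)=8$ and hence the stated equivalence. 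The only implicit hypothesis worth flagging is that $f$ is a regular face (so $\rot(f)=4$), which is exactly the setting in which the paper applies this observation, since the outer and central faces are rectangulated separately beforehand.
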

We further have to check Condition~\ref{cond:repr:labeling}, i.e., we
need to ensure that we do not introduce monotone cycles when
augmenting the graph.

We now describe the approach in more detail. Tamassia~\cite{t-emn-87}
shows that if there is a left turn, there also is a left turn on a
face $f$ such that the next two turns on $f$ are right turns.
We resolve that left turn by augmenting~$G$, which we sketch in the
following. Let $u$ be the vertex at which face $f$ takes that left
turn. Let $t$ be the vertex before $u$ on $f$. We differentiate two
major cases, namely that $tu$ is either vertical or horizontal.

\textbf{Case 1, $tu$ is vertical.}  Let $vw$ be the edge that appears
on $f$ after the two right turns following the left turn at~$u$. We
show that creating a new vertex $z$ on $vw$ and adding the edge $uz$
(cf. to Fig.~\ref{fig:rect:insertion-ver} for an example) always
upholds validity; see~Section~\ref{sec:rect:correctness} for details.

\textbf{Case 2, $tu$ is horizontal.} This case is more intricate,
because we may introduce decreasing or increasing cycles by augmenting
the graph. Fig.~\ref{fig:rect:insertion-hor_dec} shows an example,
where the graph does not include a descending cycle without the edge
$uz$, but inserting $uz$ introduces one. We therefore do not just
consider $\Gamma^{u}_{vw}$, but $\Gamma^{u}_{e}$ for a set of
\emph{candidate edges} $e$: These include all edges $v'w'$ on $f$ such
that $\rot(\subpath{f}{u,v'w'})=2$. Hence,
Condition~\ref{cond:repr:sum_of_angles} and
Condition~\ref{cond:repr:rotation_faces} are satisfied by
Observation~\ref{obs:rect:augmentation_conditions_1_to_4}.  If there
is a candidate $e$ such that $\Gamma^{u}_{e}$ does not contain a
monotone cycle, we are done.

So assume that there is no such candidate and, furthermore, assume
without loss of generality that $tu$ points to the right. Let the set
of candidates be ordered as they appear on $f$. We show that
introducing an edge from $u$ to the first candidate never introduces
an increasing cycle, while introducing an edge from $u$ to the last
candidate never introduces a decreasing cycle. This also implies that
there must be a consecutive pair of candidates $vw$ and $v'w'$ such
that introducing an edge from $u$ to $vw$ creates a decreasing cycle
and introducing an edge from~$u$ to $v'w'$ creates an increasing
cycle.

In Section~\ref{sec:rect:correctness} we show that in that case, one of the edges
$uw$ or $uv'$ can be inserted without introducing a monotone
cycle. Together with
Observation~\ref{obs:rect:augmentation_conditions_1_to_4}, this
concludes the proof that we can always remove the left turn at $u$
from the representation while maintaining its validity.

Altogether, the algorithm consists of first finding a suitable left
turn in the representation, then determining which of the two cases
applies and finally performing the augmentation. In particular, when
augmenting the graph, we need to ensure that we do not introduce
monotone cycles. Checking for monotone cycles can trivially be done by
testing all essential cycles. However, this may require an exponential
number of tests and it is unknown whether this test can be done in
polynomial time.

\begin{figure}[tb]
 \centering
 \includegraphics{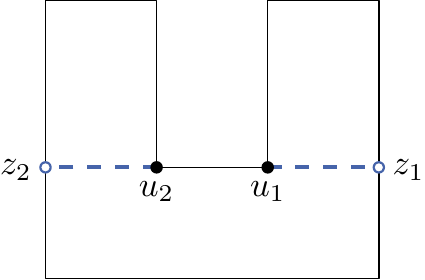}
 \hspace{1ex}
 \captionof{figure}{The face~$f$ is shaped such that the edge that shall be 
 inserted from a left turn must point to the left or to the right.
 The left turn at $u_1$ is followed by two right turns while the left turn at 
 $u_2$ is preceded by two right turns.}
 \label{fig:rect:only_horizontal}
\end{figure}

Therefore, one might wish to avoid the insertion of horizontal edges.
However, faces can be shaped in which horizontal insertions are necessary, 
even if one additionally searches for left turns that are preceded (and not 
only followed) by two right turns.
For example, the U-shaped face in Fig.~\ref{fig:rect:only_horizontal} only 
admits the insertion of new edges that point left or right.

\begin{figure}[bt]
 \centering
 \subfloat[$\Gamma^u_{vw}$]{   
 \includegraphics[scale=1.2]{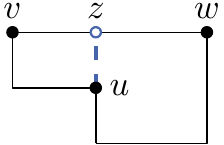}
   \label{fig:rect:insertion-ver}
  }
 \hfill
 \subfloat[$\Gamma^u_{vw}$]{ 
 \includegraphics[scale=1.2]{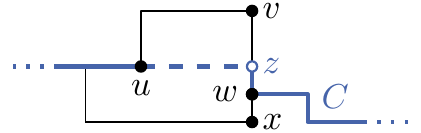}
  \label{fig:rect:insertion-hor_dec}
 }
 \hfill
 \subfloat[$\Gamma^u_{wx}$]{ 
 \includegraphics[scale=1.2]{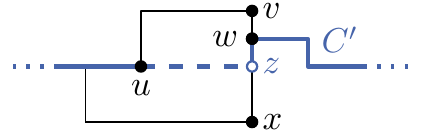}
 \label{fig:rect:insertion-hor_ok}
 }

 \caption{Examples of augmentations.
 \protect\subref{fig:rect:insertion-ver}~Inserting $uz$ is valid, if $uz$ 
 points upwards.
 \protect\subref{fig:rect:insertion-hor_dec}~The representation 
 $\Gamma^u_{vw}$ is not valid since inserting the new edge introduces a 
 decreasing cycle $C$.
 \protect\subref{fig:rect:insertion-hor_ok}~The candidate $wx$ instead gives 
 the valid representation $\Gamma^u_{wx}$. The cycle $C'$, which uses the same 
 edges outside of $f$ as $C$ before, is neither in- nor decreasing. }
 \label{fig:rect:insertion}
\end{figure}

\subsection{Correctness of Rectangulation Algorithm}\label{sec:rect:correctness}
Following the proof structure outlined in the previous section we argue more 
detailedly that a valid augmentation always exists.
We start with Case 1, in which the inserted edges is vertical and show that a 
valid augmentation always exists.

\newcommand{\lemInsertionVerticalEdge}{
Let $vw$ be the first candidate edge after $u$.
If the edge of $f$ entering $u$ points up or down, $\Gamma^u_{vw}$ is a valid ortho-radial representation of the augmented graph $G+uz$.
}

\begin{lemma}\label{lem:rect:insertion_of_vertical_edge}
\lemInsertionVerticalEdge
\end{lemma}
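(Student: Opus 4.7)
\begin{proofsketch}
My plan is to verify the three conditions of Definition~\ref{def:repr:valid_representation} for $\Gamma^u_{vw}$. Conditions~\ref{cond:repr:sum_of_angles} and~\ref{cond:repr:rotation_faces} follow immediately from Observation~\ref{obs:rect:augmentation_conditions_1_to_4}, since $vw$ being a candidate means $\rot(\subpath{f}{u,vw})=2$. The main work lies in verifying Condition~\ref{cond:repr:labeling}.

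I argue by contradiction: suppose $\Gamma^u_{vw}$ admits a monotone essential cycle $C'$. If $C'$ uses the new edge in neither direction, then it passes through $z$ (if at all) straight via $vz$ and $zw$, because the two $90\degree$ angles at $z$ are the ones incident to the new edge. Replacing $vz+zw$ by the original edge $vw$ then yields an essential cycle in $G$ with the same rotation at every vertex and hence the same labels as $C'$, contradicting the validity of $\Gamma$. By symmetry (swapping the roles of ``up'' and ``down''), I may assume $C'$ uses the directed edge $uz$. Now I apply Lemma~\ref{lem:rect:existence_C'} to $C'$ and the regular face $f_1$ of $\Gamma^u_{vw}$ bounded by $u\to x\to\dots\to v\to z\to u$, where $x$ is the vertex of $f$ following $u$. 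Since $C'\neq f_1$ (the former is essential, the latter is not), the lemma produces an essential cycle $C''$ not containing $uz$, decomposed as $C''=P+Q$ with $P$ tracing the boundary of $f_1$ and $Q=C'\cap C''$; Corollary~\ref{cor:rect:existence_C'_intersection} then gives $\ell_{C''}(e)=\ell_{C'}(e)$ for every $e\in Q$.

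The hard part is to show that $C''$ is also monotone, because then---since $C''$ avoids $uz$ and thus corresponds to an essential cycle in $G$---we obtain a monotone essential cycle of $\Gamma$, contradicting its validity. The key leverage is the \emph{first-candidate} property of $vw$: the cumulative rotation along $\subpath{f}{u,v}$ equals $1$ and never reaches $2$ strictly before $vw$. Combined with the fact that $uz$ is vertical---which forces $\rot(uzv)=-1$ and $\rot(uzw)=1$ at the new vertex $z$, and that the path $zu$ makes a straight angle with the entering edge $tu$ at $u$---this bounds the label variation along $P$ so tightly that the labels on $P$ cannot leave the sign range dictated by the labels at the two endpoints of $P$, which lie on $Q$ and hence inherit the sign pattern of $C'$. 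Consequently $C''$ is monotone in the same direction as $C'$, and replacing $vz+zw$ by $vw$ transforms $C''$ into a monotone essential cycle of $G$, completing the contradiction.
\end{proofsketch}
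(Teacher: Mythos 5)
Your overall strategy matches the paper's: reduce to a monotone cycle through the new edge, apply Lemma~\ref{lem:rect:existence_C'} to the new face to obtain an essential cycle $C''$ avoiding $uz$ that splits into a path $Q$ shared with $C'$ and a path $P$ on the new face, transfer the labels on $Q$ via Corollary~\ref{cor:rect:existence_C'_intersection}, and contradict the validity of $\Gamma$. However, the step you yourself flag as ``the hard part'' is exactly where the lemma lives, and your sketch replaces it with an assertion. The claim that the first-candidate property ``bounds the label variation along $P$ so tightly that the labels on $P$ cannot leave the sign range dictated by the labels at the two endpoints of $P$'' is neither proved nor, as stated, the actual mechanism. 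What the first-candidate property (together with $u$ being a left turn followed by two right turns) buys is that the new face is a rectangle with $uz$ as one vertical side. One then argues edge by edge on $P$ (say for an increasing $C'$): an edge of $P$ on the side parallel to $uz$ has label equal to $\ell_{C'}(uz)$ (resp.\ $\ell_{C'}(zu)$), which is odd because the edge is vertical and hence strictly negative, not merely non-positive; and a horizontal edge $e$ of $P$ cannot have positive label, because its label is even, so it would be at least $2$, while the first subsequent edge of $C''$ pointing in a different direction lies on $Q$ or on the side parallel to $uz$ and therefore has label at most $0$ --- impossible since the label changes by at most $1$ when the direction first changes. Neither sub-argument appears in your sketch, and without them the sign claim for $P$ is unsupported; in particular, nothing you wrote rules out a horizontal stretch of $P$ carrying label $+2$.

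Two smaller omissions. First, after establishing that all labels of $C''$ are non-positive you must still exclude that they are all $0$, since then $C''$ would not be monotone and no contradiction with the validity of $\Gamma$ would arise; the paper closes this with Lemma~\ref{lem:rect:two_cycles_horizontal} (if all labels of $C''$ were $0$, the intersecting cycle $C'$ could not be monotone). Second, your reduction ``by symmetry'' to the case that the cycle uses the directed edge $uz$ rather than $zu$ needs justification: reflecting the cylinder exchanges clockwise and counter-clockwise and hence increasing and decreasing cycles, so the two orientations are not interchangeable for free; the paper instead carries both orientations of the new edge through the same case analysis.
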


\begin{proof}

  Assume that $\Gamma^u_{vw}$ contains a simple increasing or
  decreasing cycle $C$. As $\Gamma$ is valid, $C$ must contain the new
  edge $uz$ in either direction (i.e., $uz$ or $zu$).  Let $f'$ be the
  new rectangular face of $G+uz$ containing $u$, $v$ and $z$, and
  consider the subgraph $H=C+f'$ of $G+uz$.  According to
  Lemma~\ref{lem:rect:existence_C'} there exists a simple essential
  cycle $C'$ that does not contain $uz$. Moreover, $C'$ can be
  decomposed into paths $P$ and $Q$ such that $P$ lies on $f'$ and $Q$
  is a part of $C$.

The goal is to show that $C'$ is increasing or decreasing. We present
a proof only for the case of increasing $C$. The proof for decreasing
cycles can be obtained by flipping all inequalities.

For each edge $e$ on $Q$ the labels $\ell_C(e)$ and $\ell_{C'}(e)$ are
equal by Lemma~\ref{lem:repr:equal_labels_at_intersection}, and hence $\ell_{C'}(e)\leq 0$.  For
an edge $e\in P$, there are two possible cases: $e$ either lies on the
side of $f'$ parallel to $uz$ or on one of the two other sides.  In
the first case, the label of $e$ is equal to the label $\ell_C(uz)$
($\ell_C(zu)$ if $C$ contains $zu$ instead of $uz$). In particular the
label is negative.

In the second case, we first note that $\ell_{C'}(e)$ is even, since
$e$ points left or right.  Assume that $\ell_{C'}(e)$ was positive and
therefore at least $2$. Then, let $e'$ be the first edge on $C'$ after
$e$ that points to a different direction. Such an edge exists, since
otherwise $C'$ would be an essential cycle whose edges all point to
the right but they are not labeled with 0.
This edge $e'$ lies on $Q$ or is parallel to $uz$. Hence, the argument above implies that $\ell_{C'}(e')\leq 0$. However, $\ell_{C'}(e')$ differs from $\ell_{C'}(e)$ by at most $1$, which requires $\ell_{C'}(e')\geq 1$.
Therefore, $\ell_{C'}(e)$ cannot be positive.

We conclude that all edges of $C'$ have a non-positive label. If all
labels were $0$, $C$ would not be an increasing cycle by
Lemma~\ref{lem:rect:two_cycles_horizontal}.  Thus, there exists an
edge on $C'$ with a negative label and $C'$ is an increasing cycle in
$\Gamma$. But as $\Gamma$ is valid, such a cycle does not exist, and
therefore $C$ does not exist either. Hence, $\Gamma^u_{vw}$ is valid.
\end{proof}

In Case 2 the inserted edge is horizontal. If there is a candidate $e$
such that $\Gamma^{u}_e$ is valid, we choose this augmentation and the
left turn at $u$ is removed successfully.  Otherwise, we make use of
the following structural properties. As mentioned before we assume in
the following that $uz$ points to the right; the other case can be
treated analogously.

\begin{lemma}\label{lem:rect:first_candidate}
Let $vw$ be the first candidate after $u$. No increasing cycle exists in 
$\Gamma^u_{vw}$.
\end{lemma}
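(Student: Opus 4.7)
My plan is to adapt the proof of Lemma~\ref{lem:rect:insertion_of_vertical_edge} to the horizontal setting. Assume for contradiction that $\Gamma^u_{vw}$ contains a simple essential increasing cycle $C$; since $\Gamma$ is valid, $C$ must use the new edge $uz$ in one of its directions. Let $f_1$ denote the new face of $\Gamma^u_{vw}$ whose boundary is $\subpath{f}{u, vz} \join zu$. I would apply Lemma~\ref{lem:rect:existence_C'} with $f' = f_1$ to obtain an essential cycle $C'$ in $\Gamma^u_{vw}$ that avoids the new edge, decomposed as $C' = P + Q$ with $P$ lying on $f_1$ and $Q = C \cap C'$ a subpath of $C$. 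The goal is to show that $C'$ is an increasing cycle in $\Gamma$, contradicting validity.

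The decisive use of the ``first candidate'' assumption is that $\rot(\subpath{f}{u, vw}) = 2$ is reached monotonically, so $\subpath{f}{u, vw}$ contains no left turns and $f_1$ is a \emph{proper} rectangle. Its four sides are: the south side $zu$ (the new edge, pointing left in $f_1$'s boundary direction), the east side $vz$ pointing down, a north side of right-pointing edges, and a west side of up-pointing edges. This is the exact analogue of the rectangular structure used in Lemma~\ref{lem:rect:insertion_of_vertical_edge}, with the roles of parallel and perpendicular sides rotated by ninety degrees.

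From Corollary~\ref{cor:rect:existence_C'_intersection} we immediately get $\ell_{C'}(e) = \ell_C(e) \leq 0$ for every $e \in Q$. For edges of $P$ I mirror the case analysis of Lemma~\ref{lem:rect:insertion_of_vertical_edge}: if $e$ lies on the north side (parallel to $uz$), then an application of Lemma~\ref{lem:repr:rotation_paths_to_cycle} to the detour along $f_1$ yields $\ell_{C'}(e) = \ell_C(uz) \leq 0$; if $e$ lies on the west side, then $\ell_{C'}(e) \equiv 3 \pmod 4$, so a positive label is at least $3$, and tracing along $C'$ to the first direction-changing edge $e'$ forces $\ell_{C'}(e') \geq 2$, contradicting $\ell_{C'}(e') \leq 0$ (which holds because $e'$ lies either on $Q$ or on the parallel north side).

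The main obstacle is the east-side subcase $e = vz$, where $\ell_{C'}(vz) \equiv 1 \pmod 4$ so a positive label need only be $\geq 1$, too weak for the parity-and-propagation trick used in the vertical case. I resolve it using the local combinatorics at $z$: since $C'$ avoids the new edge and $z$ has degree three in $\Gamma^u_{vw}$, the only way $C'$ can continue from $z$ is via $zw$; but $zw$ lies on $f_2$ rather than $f_1$, so $zw \in Q$ and hence $\ell_{C'}(zw) \leq 0$. On the other hand, $vz$ and $zw$ both point down and meet straight at $z$, giving $\ell_{C'}(zw) = \ell_{C'}(vz) \geq 1$, a contradiction. Concluding that all labels on $C'$ are non-positive, with at least one strictly negative, inherited from $C$ (and on the parallel side if $Q$'s labels all vanish), then establishes that $C'$ is increasing in $\Gamma$, completing the contradiction.
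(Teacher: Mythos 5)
Your overall strategy is exactly the paper's: assume an increasing cycle $C$ through the new edge, apply Lemma~\ref{lem:rect:existence_C'} to $H=C+f_1$ to obtain an essential cycle $C'$ avoiding the new edge, and show that every label of $C'$ is non-positive by a side-by-side analysis of the rectangle $f_1$, contradicting the validity of $\Gamma$. Your treatment of the north side, and especially your resolution of the $vz$ case (forcing $zw\in Q\subseteq C$ via the degree-3 vertex $z$ and concluding $\ell_{C'}(vz)=\ell_{C'}(zw)=\ell_C(zw)\leq 0$), matches the paper's argument; your west-side parity-and-propagation argument replaces the paper's direct computation $\ell_{C'}(e)=\ell_C(uz)-1<0$, but both work.

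There are, however, two genuine gaps. First, you never treat the case that $C$ traverses the new edge as $zu$ rather than $uz$. All of your label computations refer to $\ell_C(uz)$, which is undefined in that case, and the needed bounds change: the paper observes that $zu$ points left, so $\ell_C(zu)\equiv 2\pmod 4$ together with $\ell_C(zu)\leq 0$ forces $\ell_C(zu)\leq -2$, whence every edge of $P$ (whose label differs from $\ell_C(zu)$ by at most $1$) is non-positive. This case needs its own short argument. Second, your final step is not sound: after establishing that all labels of $C'$ are non-positive you must still rule out that they are \emph{all} zero (otherwise $C'$ is not increasing and no contradiction with the validity of $\Gamma$ arises). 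Your claim that a strictly negative label is ``inherited from $C$, and on the parallel side if $Q$'s labels all vanish'' fails when $\ell_C(uz)=0$ and $P$ happens to contain no west-side edge; a negative label of $C$ need not lie on $Q=C\cap C'$ at all. The correct fix is the one the paper uses: if all labels of $C'$ were zero, then Lemma~\ref{lem:rect:two_cycles_horizontal} applied to $C$ and $C'$ would show that $C$ is not increasing, contradicting the choice of $C$.
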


\begin{proof}
Let $f'$ be the new rectangular face of $\Gamma^u_{vw}$ containing $u$, $v$ 
and $z$, and assume that there is an increasing cycle $C$ in $\Gamma^u_{vw}$. 
This cycle must use either $uz$ or $zu$.
Similar to the proof of Lemma~\ref{lem:rect:insertion_of_vertical_edge}, we 
find an increasing cycle~$C'$ in $\Gamma$, contradicting the validity of 
$\Gamma$.

To this end, we construct the graph $H$ as the union of $C$ and $f'$.
By Lemma~\ref{lem:rect:existence_C'}, there exists a simple essential cycle 
$C'$ without $uz$ and $zu$ that can be decomposed into a path $P$ on $f'$ and 
a path $Q\subseteq C\setminus f'$ such that all edges of $Q$ have non-positive 
labels.
We show in the following that the edges of $P$ also have non-positive labels.

\begin{figure}[bt]
 \centering
 \includegraphics{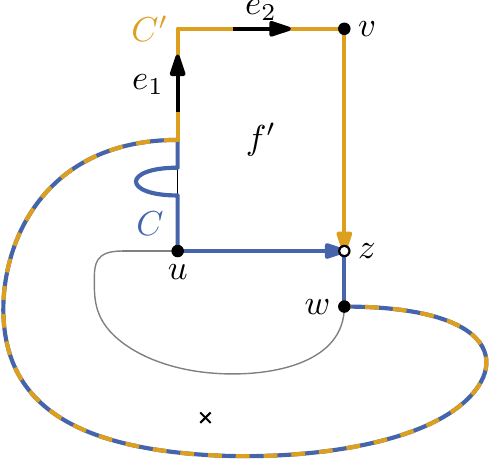}
 \caption{The increasing cycle $C$ contains $uz$. There are three 
 possibilities for edges on $C'$ that lie not on $C$: They lie on the left 
 side of $f'$ (like $e_1$), on the top (like $e_2$), or on the right side 
 formed by only the edge $vz$.}
 \label{fig:rect:first_candidate}
\end{figure}

If $C$ contains $uz$, there are three possibilities for an edge $e$ of $P$, 
which are illustrated in Fig.~\ref{fig:rect:first_candidate}: The edge~$e$ 
lies on the left side of $f'$ and points up, $e$ is parallel to $uz$, or 
$e=vz$.
In the first case $\ell_{C'}(e)=\ell_{C}(uz)-1<0$ and in the second case 
$\ell_{C'}(e)=\ell_C(uz)\leq 0$. If $e=vz$, $C$ cannot contain $zv$ and 
therefore $zw\in C$.
Then, $\ell_{C'}(e)=\ell_C(zw)< 0$. In all three cases the label of $e$ is at 
most 0.

If $C$ contains $zu$, the label of $zu$ has to leave a remainder of 2 when it 
is divided by 4 since $zu$ points to the right.
As the label is also at most $0$, we conclude $\ell_C(zu)\leq -2$. The edges 
of $P$ lie either on the left, top or right of $f'$. Therefore, the label of 
any edge $e$ on $P$ differs by at most 1 from $\ell_{C}(zu)$, and thus we get 
$\ell_{C'}(e)\leq 0$.

Summarizing the results above, we see that all edges on $C'$ are labeled with 
non-positive numbers. The case that all labels of $C'$ are equal to 0 can be 
excluded, since $C$ would not be an increasing cycles by 
Lemma~\ref{lem:rect:two_cycles_horizontal}.
Hence, $C'$ is an increasing cycle, which was already present in $\Gamma$, 
contradicting the validity of $\Gamma$.
Thus, no such increasing cycle~$C$ in $\Gamma^u_{vw}$ exists.
\end{proof}

In order to obtain a similar result for the last candidate, we give a 
sufficient condition for the existence of a candidate after an edge.

\begin{lemma}\label{lem:rect:candidate_after_small_rotation}
Let $f$ be a regular face of $G$ and $u$ be any vertex on $f$.
If $e$ is an edge on $f$ such that $\rot(\subpath{f}{u,e})\leq 2$, there is a 
candidate on $\subpath{f}{e, u}$.
\end{lemma}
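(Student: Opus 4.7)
The plan is to interpret the problem as a discrete intermediate-value argument along the boundary of the face. First I would list the edges of $f$ in cyclic order starting at $u$ as $e_0, e_1, \ldots, e_{k-1}$, where $e_i$ goes from $v_i$ to $v_{i+1}$ with $v_0 = v_k = u$, and introduce the partial rotations $r_i := \rot(\subpath{f}{u, e_i})$ for $i = 0, \ldots, k-1$. With this notation an edge $e_i$ is a candidate exactly when $r_i = 2$, and the hypothesis $\rot(\subpath{f}{u, e}) \le 2$ becomes $r_j \le 2$ for the index $j$ with $e = e_j$; the goal reduces to finding some $i \in \{j, j+1, \ldots, k-1\}$ with $r_i = 2$.

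The key step is an upward one-step bound: by the definition of the rotation, $r_i - r_{i-1} = \rot(v_{i-1}v_iv_{i+1}) \in \{-2, -1, 0, 1\}$, so the sequence $(r_i)$ can never jump upward by more than one between consecutive indices. The value $-2$ can arise only at a degree-one vertex (a $360^{\circ}$ angle), but since it only decreases the sequence it does not threaten the intermediate-value argument.

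Next I would verify that the sequence terminates at a value strictly larger than $2$. Since $f$ is regular, Condition~\ref{cond:repr:rotation_faces} gives $\rot(f) = 4$. Splitting off the rotation at $u$ from the cyclic sum yields $r_{k-1} = \rot(f) - \rot(v_{k-1}uv_1) = 4 - \rot(v_{k-1}uv_1) \ge 3$, because any single-vertex rotation is at most $1$. If $r_j = 2$ then $e$ itself is a candidate on $\subpath{f}{e, u}$; otherwise $r_j < 2$ and $r_{k-1} \ge 3$, so the upward one-step bound forces some $i \in \{j+1, \ldots, k-1\}$ with $r_i = 2$, and the corresponding $e_i$ is the desired candidate on $\subpath{f}{e, u}$.

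The only real subtlety I anticipate is the cyclic bookkeeping: one must confirm that the candidate index $i$ really lies between $j$ and $k-1$, so that $e_i$ sits on $\subpath{f}{e, u}$ rather than on the complementary arc $\subpath{f}{u, e}$. Once the partial sums are indexed as above this falls out automatically, because the discrete intermediate-value argument is applied only to indices after $j$. Everything else is essentially a direct computation.
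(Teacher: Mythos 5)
Your proposal is correct and follows essentially the same argument as the paper: define the partial rotations $r(e')=\rot(\subpath{f}{u,e'})$ along the face, observe that the final value is $\rot(f)-\rot(u)\geq 3$ while consecutive values increase by at most $1$, and apply a discrete intermediate-value argument to find an edge with rotation exactly $2$ after $e$. The extra care you take with the case $r_j=2$ and the cyclic indexing is fine but adds nothing beyond the paper's proof.
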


\begin{proof}
  For each edge $e'$ on $f$ we determine
  $r(e'):=\rot(\subpath{f}{u,e'})$.  By assumption, it is $r(e)\leq
  2$. For the last edge $e_\text{last}$ on $\subpath{f}{e,
    u}$ it is $r(e_\text{last})=\rot(f)-\rot(u)\geq 3$.  We use
  that $f$ is a regular face (i.e., $\rot(f)=4$) and the rotation
  $\rot(u)$ at the vertex $u$ in $f$ is at most 1.

When going from an edge~$e_1$ to the next edge~$e_2$, the value assigned to 
these edges increases by at most 1, i.e., $r(e_2)\leq r(e_1)+1$. Therefore, 
there exists an edge~$e'$ between $e$ and $e_\text{last}$, i.e., on 
$\subpath{f}{e, u}$ such that $r(e')=2$.
\end{proof}

We now show that augmenting to the last candidate does not create any 
decreasing cycles.

\begin{lemma}\label{lem:rect:last_candidate}
Let $vw$ be the last candidate before $u$. No decreasing cycle exists in 
$\Gamma^u_{vw}$.
\end{lemma}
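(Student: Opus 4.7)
My plan is to prove Lemma~\ref{lem:rect:last_candidate} by closely mirroring the proof of Lemma~\ref{lem:rect:first_candidate}, with the roles of increasing and decreasing cycles interchanged and the geometry of the new face $f'$ reflected accordingly. I would assume for contradiction that $\Gamma^u_{vw}$ contains a decreasing cycle $C$. Since $\Gamma$ is valid, $C$ must use the newly inserted edge $uz$ or its reverse $zu$; the goal is to extract from $C$ a decreasing cycle $C'$ already present in $\Gamma$, contradicting validity.

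Let $f'$ be the new face of $\Gamma^u_{vw}$ incident to $uz$ and $vz$ and containing $u$, $v$, and $z$. Forming the subgraph $H = C + f'$ and applying Lemma~\ref{lem:rect:existence_C'}, I obtain a simple essential cycle $C'$ not containing $uz$ that decomposes as $C' = P + Q$, where $P$ lies on $f'$ and $Q = C \cap C' \subseteq C \setminus \{uz, zu\}$. By Corollary~\ref{cor:rect:existence_C'_intersection}, every edge $e \in Q$ satisfies $\ell_{C'}(e) = \ell_C(e) \geq 0$, so the entire task reduces to establishing the same non-negativity bound for the edges of $P$.

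The main technical step is this case analysis. Since $uz$ points right, either $uz \in C$ with $\ell_C(uz) \equiv 0 \pmod 4$ and $\ell_C(uz) \in \{0, 4, 8, \dots\}$, or $zu \in C$ with $\ell_C(zu) \equiv 2 \pmod 4$ and $\ell_C(zu) \geq 2$. For each edge $e$ of $P$, I would distinguish three positions on $f'$ exactly as in Lemma~\ref{lem:rect:first_candidate}: parallel to $uz$, on a side perpendicular to $uz$, or coinciding with $vz$ (respectively $zv$). Because $vw$ is the \emph{last} candidate before $u$ rather than the first after $u$, the face $f'$ lies on the side of $uz$ opposite to the first-candidate configuration, so that the perpendicular side of $f'$ now points down rather than up. The sign of the $\pm 1$ correction flips accordingly, yielding $\ell_{C'}(e) = \ell_C(uz) + 1 \geq 1$ in the $uz \in C$ subcase, and the analogous lower bound $\ell_C(zu) - 1 \geq 1$ in the $zu \in C$ subcase. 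For $e = vz$ or $e = zv$, the forced traversal direction of $C$ through $v$, $z$, and $w$ produces a positive label directly, exactly as in the corresponding case of Lemma~\ref{lem:rect:first_candidate}.

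Once every edge of $C'$ is shown to have a non-negative label, $C'$ is either a decreasing cycle of $\Gamma$, contradicting validity, or has all labels equal to zero. The latter is ruled out by Lemma~\ref{lem:rect:two_cycles_horizontal}: since $C'$ shares a vertex with the decreasing cycle $C$, $C$ would then be forced to be non-monotone. The main obstacle is the geometric analysis of the perpendicular sides in the previous step, specifically verifying that ``last candidate before $u$'' really does swap the orientation of the perpendicular sides of $f'$ compared with the first-candidate case, so that the $\pm 1$ corrections carry the correct sign. To justify this I expect to invoke Lemma~\ref{lem:rect:candidate_after_small_rotation}, which by contrapositive ensures that every edge $e$ on $\subpath{f}{vw, u}$ past $vw$ satisfies $\rot(\subpath{f}{u, e}) > 2$ and thereby constrains the local structure of $f$ just before $vw$ in precisely the way needed to mirror the first-candidate geometry.
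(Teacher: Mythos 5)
Your overall skeleton (contradiction, extract $C'$ via Lemma~\ref{lem:rect:existence_C'}, handle $Q$ via Corollary~\ref{cor:rect:existence_C'_intersection}, exclude the all-zero case via Lemma~\ref{lem:rect:two_cycles_horizontal}) matches the paper, but the central step does not go through. The proof of Lemma~\ref{lem:rect:first_candidate} works because, for the \emph{first} candidate after $u$, the new face containing $u$, $v$ and $z$ is a \emph{rectangle}, so every edge of $P$ lies on one of three sides and its label is pinned to within $1$ of $\ell_C(uz)$. For the \emph{last} candidate before $u$ there is no such rectangle: the portion of $f$ between the candidate and $u$ may make arbitrarily many left and right turns. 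Being the last candidate only guarantees (via the contrapositive of Lemma~\ref{lem:rect:candidate_after_small_rotation}) that $\rot(\subpath{f}{u,e'})\geq 3$ for every edge $e'$ after $vw$, which does not make either of the two new faces a rectangle. Hence your three-position case analysis (``parallel to $uz$, perpendicular side, or $vz$'') has no basis, and an edge of $P$ can a priori carry a label far from $\ell_C(uz)$; the claimed sign-flipped $\pm 1$ corrections cannot be justified. (You also attach $P$ to the face containing $v$, whereas the paper works with the face $f''$ containing $w$; but the essential problem is the unjustified rectangle assumption, not the choice of face.)

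The paper closes exactly this gap with a different argument for the edges of $P$: assume some $xy\in P$ has $\ell_{C'}(xy)<0$, and then, through a chain of rotation identities (Lemma~\ref{lem:repr:rot_paths_cycle}, Lemma~\ref{lem:repr:rotation_paths_to_cycle}, Observation~\ref{obs:pre:rot_path_detour}, Observation~\ref{obs:repr:label_difference}), compute that $\rot(\subpath{f}{u,yx})<3$. Lemma~\ref{lem:rect:candidate_after_small_rotation} then yields a candidate on $\subpath{f}{yx,u}$, i.e.\ a candidate after $vw$, contradicting that $vw$ is the last one. This is where the ``last candidate'' hypothesis actually does its work; in your proposal that hypothesis is invoked only to infer the orientation of a supposed perpendicular side of the new face, which is not a valid deduction. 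Without some version of this rotation computation the bound on the labels of $P$ is missing, and the proof is incomplete.
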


\begin{proof}
\begin{figure}[bt]
 \centering
 \includegraphics{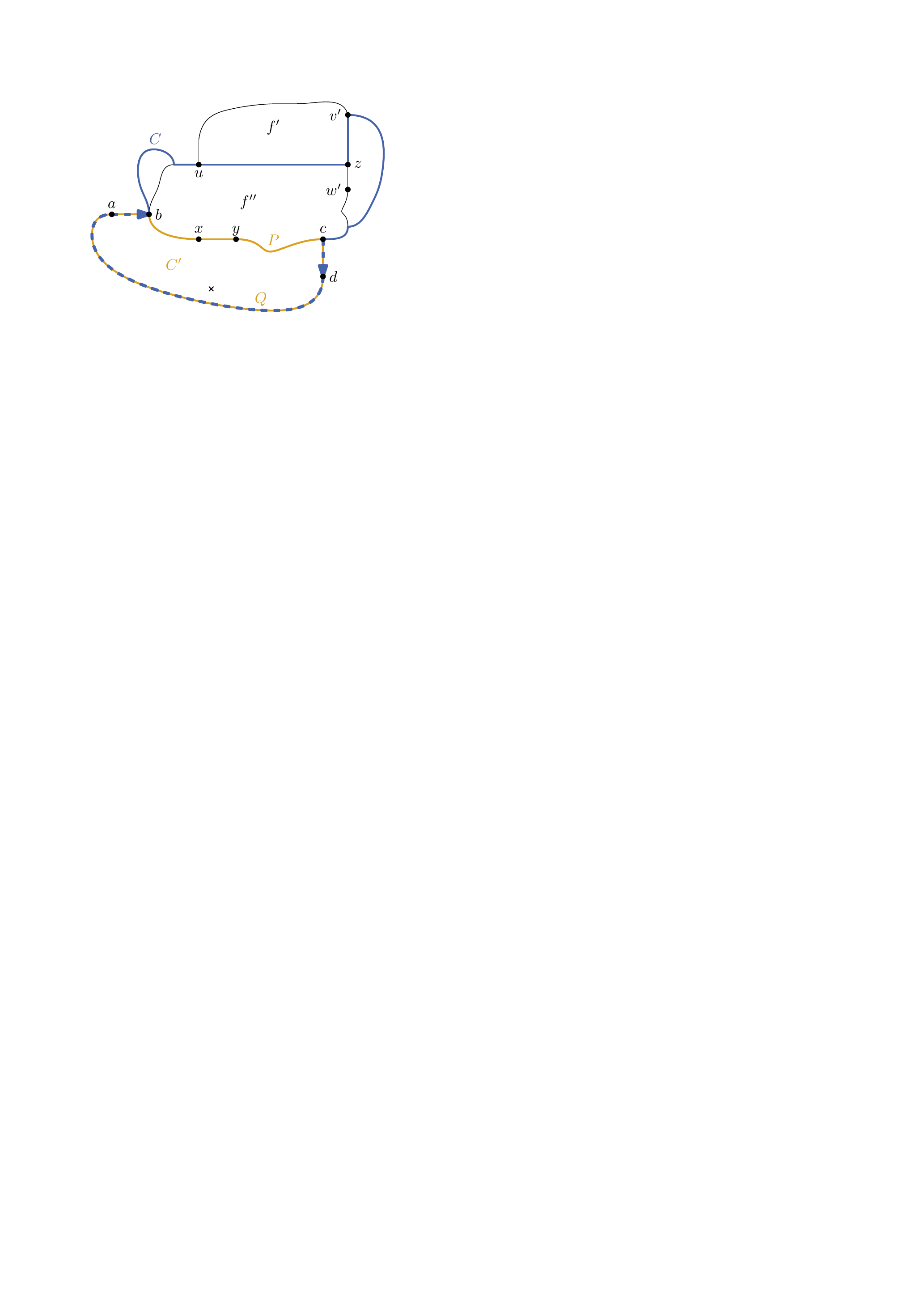}
 \caption{The situation in the proof of Lemma~\ref{lem:rect:last_candidate}. 
 The cycle $C$ is decreasing and it is assumed that $\ell_{C'}(xy)<0$.}
 \label{fig:rect:last_candidate}
\end{figure}

The face $f$ is split in two parts when $uz$ is inserted. Let $f'$ be the face 
containing $v$ and $f''$ the one containing $w$. Assume that there is a simple 
decreasing cycle~$C$ in $\Gamma^u_{vw}$.
Then, either $uz$ or $zu$ lies on $C$. We use a similar strategy as in the 
proof of Lemma~\ref{lem:rect:insertion_of_vertical_edge} to find a decreasing 
cycle in $\Gamma$, which contradicts that $\Gamma$ is valid.

Consider the graph $H=f''+C$ composed of the decreasing cycle $C$ and $f''$. 
Lemma~\ref{lem:rect:existence_C'} shows that there exists an essential 
cycle~$C'$ in $H$ that can be decomposed into a path $P$ on $f''$ and $Q=C\cap 
C'$ (see Fig.~\ref{fig:rect:last_candidate}).
We show in the following that $C'$ is a decreasing cycle. But all edges of 
$C'$ are already present in $\Gamma$, contradicting the assumption that 
$\Gamma$ is valid.
For all edges $e\in E(Q)$, we have $\ell_C(e)=\ell_{C'}(e)\geq 0$ by 
Corollary~\ref{cor:rect:existence_C'_intersection}.

To show that edges on $P$ also have non-negative labels, we assume this was 
not the case, i.e., there is an edge $xy\in P$ such that $\ell_{C'}(xy)<0$.
We present a detailed argument for the case that $C$ uses $uz$ (and not $zu$). 
Then, $P$ is directed such that $f''$ lies to the left of $P$. At the end of 
the proof, we briefly outline how the argument can be adapted if $C$ uses $zu$.

Our goal is to show that there must be a candidate on $f$ after $y$ and in 
particular after the last candidate $vw$---a contradiction.
To this end, we first assume that $\rot(\subpath{f}{u,yx} < 3)$, which we 
prove later on.
By Lemma~\ref{lem:rect:candidate_after_small_rotation} there then is a 
candidate on $\subpath{f}{yx, u}$.
In particular, this candidate comes after $vw$, contradicting the assumption 
that $vw$ is the last candidate.

Moreover, not all labels of edges on $C'$ can be 0, since $C$ would not be 
decreasing (cf.~Lemma~\ref{lem:rect:two_cycles_horizontal}).
Thus, $C'$ is a decreasing cycle that consists solely of edges of~$G$. But as 
$\Gamma$ is valid such a cycle cannot exist, showing that $\Gamma^u_{vw}$ 
contains no decreasing cycle either.

It remains to show the upper bound $\rot(\subpath{f}{u,yx}) < 3$. Let $ab$ and 
$cd$ be the last and the first edge of $Q$, respectively.
In order to simplify the descriptions of the paths we use below, we assume 
without loss of generality that $ab$ and $cd$ have no common endpoints.
This property does not hold, only if $Q$ has at most two edges, in which case 
we may subdivide an edge of $Q$ to lengthen $Q$.

Applying Lemma~\ref{lem:repr:rot_paths_cycle} to $\subpath{C}{a,d}$ and 
$\subpath{C'}{a,d}$, we get
\begin{equation}
\rot(\subpath{C}{a,d}) = \rot(\subpath{C'}{a,d}).
\end{equation}
Splitting the paths at $uz$ and $xy$, respectively, the total rotation does 
not change (cf.~Observation~\ref{obs:rot_splitting_path}).
\begin{align}
\rot(\subpath{C}{a,d})  &=\rot(\subpath{C}{a,uz}) + \rot(\subpath{C}{uz,d}) \\
\rot(\subpath{C'}{a,d}) &=\rot(\subpath{C'}{a,xy}) + \rot(\subpath{C'}{xy,d})
\end{align}
Combining the previous equations gives
\begin{equation}\label{eqn:rect:last_candidate:first_sum}
\rot(\subpath{C}{a,uz}) + \rot(\subpath{C}{uz,d}) - \rot(\subpath{C'}{a,xy}) - 
\rot(\subpath{C'}{xy,d}) = 0.
\end{equation}
As $\subpath{C'}{xy,d}=\subpath{\reverse{f''}}{xy,c}\join cd$, we get
\begin{equation}\label{eqn:rect:last_candidate:C'xy_d}
\rot(\subpath{C'}{xy,d})=\rot(\subpath{\reverse{f''}}{xy,c}+cd) = 
-\rot\left(dc\join\subpath{f''}{c,yx}\right).
\end{equation}
The last equality follows from the fact that the rotation of the reverse of a 
path is the negative rotation of the path 
(cf.~Observation~\ref{obs:rot_reverse}).
Applying Lemma~\ref{lem:repr:rotation_paths_to_cycle} to $\subpath{C}{uz,d}$ 
and $\subpath{f''}{uz,c}\join cd$, we get
\begin{equation}\label{eqn:rect:last_candidate:Cuz_d}
\rot(\subpath{C}{uz,d}) = \rot(\subpath{f''}{uz, c}+cd).
\end{equation}
Substituting Equations~\ref{eqn:rect:last_candidate:C'xy_d} and 
\ref{eqn:rect:last_candidate:Cuz_d} into 
Equation~\ref{eqn:rect:last_candidate:first_sum} yields
\begin{equation}\label{eqn:rect:last_candidate:second_sum}
\rot(\subpath{C}{a,uz}) + \rot(\subpath{f''}{uz,c}\join cd) - 
\rot(\subpath{C'}{a,xy}) + \rot(dc\join\subpath{f''}{c,yx}) = 0.
\end{equation}
Note that if one joins $\subpath{f''}{uz,c}\join cd$ and 
$dc\join\subpath{f''}{c,yx}$ together, the resulting path is almost 
$\subpath{f''}{uz, yx}$ except for the detour via $d$.
Observation~\ref{obs:pre:rot_path_detour} therefore implies
\begin{equation}
\rot\left(\subpath{f''}{uz,c}\join cd\right) + 
\rot\left(dc\join\subpath{f''}{c,yx}\right) = \rot(\subpath{f''}{u, yx})-2.
\end{equation}
Substituting this equality in 
Equation~\ref{eqn:rect:last_candidate:second_sum} and rearranging results in
\begin{equation}
\rot(\subpath{f''}{u, yx}) = 2 + \rot(\subpath{C'}{a,xy}) - 
\rot(\subpath{C}{a,uz}).
\end{equation}
By Observation~\ref{obs:repr:label_difference} the rotations on an essential 
cycle can be expressed as the difference of labels.
Here, we have $\rot(\subpath{C'}{a,xy})=\ell_{C'}(xy)-\ell_{C'}(ab)$ and 
$\rot(\subpath{C}{a,uz})=\ell_C(uz)-\ell_C(ab)$.
Additionally $ab$ lies on $Q$, and therefore it is $\ell_C(ab)=\ell_{C'}(ab)$. 
Hence,
\begin{align}
\rot(\subpath{f''}{u, yx}) &= 2 + 
\ell_{C'}(xy)-\ell_{C'}(ab)-\ell_C(uz)+\ell_C(ab) \nonumber\\
 &= 2+\ell_{C'}(xy)-\ell_C(uz) \nonumber\\
 &< 2.
\end{align}
By construction of $\Gamma^u_{vw}$ the rotations of $\subpath{f}{u, yx}$ and 
$\subpath{f''}{u, yx}$ differ by exactly 1, i.e.,
\begin{equation}
\rot\subpath{f}{u,yx}=\rot\subpath{f''}{u,yx}+1<3.
\end{equation}

We assumed that $C$ uses $uz$ in that direction. If $zu$ lies on $C$, a 
similar argument shows that $C'$ would also be a decreasing cycle, 
contradicting that assumption that $\Gamma$ is valid.
\end{proof}

By assumption $\Gamma^{u}_e$ contains a monotone cycle for all
candidate edges $e$. By Lemma~\ref{lem:rect:first_candidate} the
augmentation for the first candidate contains a decreasing cycle and
by Lemma~\ref{lem:rect:last_candidate} the augmentation for the last
candidate has an increasing cycle. Hence, there are consecutive
candidates $vw$ and $v'w'$ such that $\Gamma^u_{vw}$ has a decreasing
cycle and $\Gamma^u_{v'w'}$ an increasing cycle.

\begin{figure}[bt]
 \centering
 \subfloat[]{ \label{fig:rect:horizontal_between-neighbors}\includegraphics{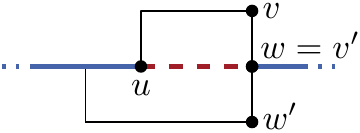} }
 \hfill
 \subfloat[]{ \label{fig:rect:horizontal_between-w}\includegraphics{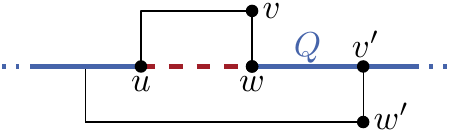} }
 \hfill
 \subfloat[]{ \label{fig:rect:horizontal_between-v}\includegraphics{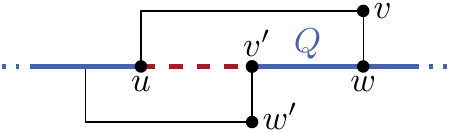} }%
  \caption{Three possibilities how the path between $w$ and $v'$ can look 
  like: \protect\subref{fig:rect:horizontal_between-neighbors}~$w=v'$, 
  \protect\subref{fig:rect:horizontal_between-w}~all edges point right, and 
  \protect\subref{fig:rect:horizontal_between-v} all edges point left. In the 
  first two cases the edge $uw$ is inserted and in (c) $uv'$ is added.}
 \label{fig:rect:horizontal_between}
\end{figure}

\begin{lemma}\label{lem:rect:between_decreasing_and_increasing}
Let $Q$ be the path on $f$ between the candidates $vw$ and $v'w'$.
Then, there is a path~$P$ in $G$ containing $w$, $v'$ and $u$ such that the 
edges point to the right.
More precisely, $P$ starts at $w$ or $v'$ and ends at $u$, and either $Q$ or 
$\reverse{Q}$ forms the first part of $P$. Moreover, the start vertex of $P$ 
has no incident edge to its left.
\end{lemma}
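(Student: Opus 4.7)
The plan combines a rotation analysis on $f$, the monotone cycle hypotheses from the preceding discussion (that $\Gamma^u_{vw}$ has a decreasing cycle $C_1$ and $\Gamma^u_{v'w'}$ has an increasing cycle $C_2$), and a greedy right-extension argument to finally reach $u$.

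First, by the candidate condition $\rot(\subpath{f}{u, vw}) = \rot(\subpath{f}{u, v'w'}) = 2$ and Observation~\ref{obs:rot_splitting_path}, we have $\rot(\subpath{f}{vw, v'w'}) = 0$. Setting $r_j := \rot(\subpath{f}{vw, e_j})$ for each edge $e_j$ of $Q$, consecutiveness of the candidates forces $r_j \neq 0$ for all $j$, so the values share a common sign. Using the convention $(\text{right},\text{down},\text{left},\text{up}) = (0,1,2,3)$ and noting that $vw$ points down, the direction of $e_j$ equals $(1 + r_j) \bmod 4$.

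The core claim and main obstacle is that $|r_j| = 1$ holds uniformly, which immediately yields the three cases of the lemma: $r_j = -1$ gives all edges of $Q$ pointing right (case~(b), with $P$ starting at $w$ and $Q$ as the first part); $r_j = +1$ gives all pointing left (case~(c), with $P$ starting at $v'$ and $\reverse{Q}$ as the first part); the degenerate case $w = v'$ is~(a). I would prove this by contradiction. If some $|r_j| \geq 2$, then $e_j$ points up (when $r_j \in \{\pm 2\}$) or otherwise non-horizontal. Using the monotone cycle $C_1$ of $\Gamma^u_{vw}$ (or $C_2$ of $\Gamma^u_{v'w'}$), I would construct an alternative essential cycle $C'$ already in $\Gamma$ by applying Lemma~\ref{lem:rect:existence_C'} to replace the inserted edge by a detour around the newly created rectangular face, and then adjust the detour to incorporate or avoid $e_j$ as dictated by its direction. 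Labels along the common portion are controlled by Corollary~\ref{cor:rect:existence_C'_intersection} and Lemma~\ref{lem:repr:equal_labels_at_intersection}, which ensures that $C'$ remains monotone with the same sign as the original cycle. The construction mirrors the detour strategies of the proofs of Lemmas~\ref{lem:rect:first_candidate} and~\ref{lem:rect:last_candidate}, but requires delicate bookkeeping because the forbidden direction at $e_j$ must be leveraged to maintain label non-positivity (or non-negativity) across the detour. This contradicts the validity of $\Gamma$.

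Finally, once $Q$ (or $\reverse{Q}$) is a right-pointing prefix of $P$, I would extend $P$ from the endpoint of this prefix toward $u$ by greedily taking a right-pointing incident edge at each step. Termination at $u$ is guaranteed because a closed right-only walk would be a horizontal essential cycle (all labels zero), and by the validity of $\Gamma$ together with the rotational structure at $u$ (where $tu$ provides a right-pointing edge into $u$), the walk must eventually close at $u$. The no-left-incident-edge condition on the start vertex of $P$ then follows: a left-pointing incident edge at $w$ or $v'$ lying on $f$ would contradict the established turn structure at that vertex (which dictates either the placement of $vw$/$v'w'$ or the direction of the first edge of $Q$), while one in an adjacent face would, combined with the rotational analysis, introduce an additional candidate on $f$ strictly between $vw$ and $v'w'$, contradicting consecutiveness.
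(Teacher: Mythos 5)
Your reduction to the core claim (that the rotation $r_j=\rot(\subpath{f}{vw,e_j})$ stays at $\pm1$ with a fixed sign along $Q$, giving the three cases) matches the shape of the paper's statement, but the proposal leaves precisely that claim unproved: the ``detour construction with delicate bookkeeping'' modelled on Lemmas~\ref{lem:rect:first_candidate} and~\ref{lem:rect:last_candidate} does not transfer. Those lemmas work because the candidate is extremal -- Lemma~\ref{lem:rect:candidate_after_small_rotation} then manufactures a candidate beyond the last one, a contradiction -- whereas for two consecutive \emph{interior} candidates a single monotone cycle gives you nothing. The indispensable idea in the paper's proof is to use both hypotheses simultaneously: it builds an auxiliary representation $\tilde\Gamma$ in which the decreasing cycle of $\Gamma^u_{vw}$ and the increasing cycle of $\Gamma^u_{v'w'}$ coexist (replacing $uz$ by $uxz$ and $uz'$ by $uxyz'$), shows via Lemma~\ref{lem:repr:illegal_intersection} that $ux$ lies on the central face of $C+C'$, deduces $0\le\ell_C(ux)=\ell_{C'}(ux)\le 0$ from Lemma~\ref{lem:repr:equal_labels_at_intersection} and the opposite sign constraints, and then squeezes all labels on the relevant edges to $0$. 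That squeeze -- decreasing forces $\ge 0$, increasing forces $\le 0$ -- is what makes $Q$ straight and horizontal; your one-cycle-at-a-time argument cannot produce it.

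The same gap reappears in your final step. The tail of $P$ from the end of $Q$ to $u$ is not obtained in the paper by a greedy right-walk; a right-pointing walk need not reach $u$ at all (it may stop at a vertex with no outgoing right edge, or close into an essential cycle avoiding $u$, and such a cycle has all labels equal and congruent to $0$ mod $4$ but not necessarily equal to $0$, so validity of $\Gamma$ does not immediately exclude it). Instead the paper proves, again via Lemma~\ref{lem:repr:illegal_intersection}, that $\subpath{C}{t,u}=\subpath{C'}{t,u}$ and that all labels on this common path are $0$, so all its edges point right and the path ends at $u$ because both cycles do. Your argument for the ``no edge to the left of the start vertex'' condition (an extra candidate would arise between $vw$ and $v'w'$) is in the right spirit and close to the paper's case analysis at $w$ and $v'$, but as it depends on the unproved direction claim it does not stand on its own.
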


\begin{proof}
\begin{figure}[tb]
\centering
\includegraphics{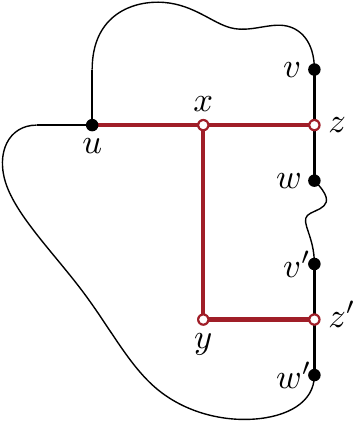}
\caption{The structure that is used to simulate the insertion of both $uz$ and 
$uz'$ at the same time. The edge $uz$ is replaced by the path $uxz$ and $uz'$ 
by $uxyz'$.}
\label{fig:rect:between_structure}
\end{figure}

Let $z$ be the new vertex inserted in $\Gamma^u_{vw}$ and $z'$ the one in 
$\Gamma^u_{v'w'}$.
Since both $uz$ and $uz'$ point to the right, there is no augmentation of 
$\Gamma$ containing both edges.
We need to compare $\Gamma^u_{vw}$ and $\Gamma^u_{v'w'}$ though.
Therefore, we use the following construction, which models all important 
aspects of both representations:
Starting from $\Gamma$ we insert new vertices $z$ on $vw$ and $z'$ on $v'w'$. 
We connect $u$ and $z$ by a path of length 2 that points to the right and 
denote its internal vertex by $x$.
Furthermore, a path of length~2 from $x$ via a new vertex $y$ to $z'$ is added.
The edge $xy$ points down and $yz'$ to the right. In the resulting 
ortho-radial representation $\tilde{\Gamma}$ the edge $uz$ is modeled by the 
path $uxz$ and $uz'$ by $uxyz'$.
The resulting construction is depicted in 
Fig.~\ref{fig:rect:between_structure}.

Take any simple decreasing cycle in $\Gamma^u_{vw}$. As $\Gamma$ is valid, 
this cycle must contain either $uz$ or $zu$.
We obtain a cycle~$C$ in $\tilde\Gamma$ by replacing $uz$ with $uxz$ (or $zu$ 
with $zxu$).
Note that $ux$ and $xz$ have the same label as $uz$, and the labels of all 
other edges on the cycles stay the same. Therefore, $C$ is a decreasing cycle.

Similarly, there exists a simple increasing cycle in $\Gamma^u_{v'w'}$, which 
contains $uz'$ or $z'u$.
Replacing $uz'$ with $uxyz'$ (or $z'u$ with $z'yxu$) we get a cycle $C'$ in 
$\tilde\Gamma$.
Note that $C'$ might not be an increasing cycle as $\ell_C(xy)$ might be 
positive. But the labels of all other edges are at most 0 and there exists an 
edge with a negative label.
In other words, outside of $f$ the cycle $C'$ behaves exactly like an 
increasing cycle.

For now we assume that the original cycles use $uz$ and $uz'$ in these 
directions. At the end of the proof, we shall see that this is in fact the 
only possibility.
The proof is structured as follows: First, we show 
$\ell_C(ux)=\ell_{C'}(ux)=0$. This also determines the labels of $C$ and $C'$ 
in the interior of $f$.
In a second step, we find that at least one of the vertices $w$ and $v'$ has 
no incident edge to the left and the other vertex lies on both $C$ and $C'$.
Note that $w=v'$ is possible. In this case, $w=v'$ has all the properties 
mentioned.
Moreover, we prove that the path~$Q$ on $f$ between $w$ and $v'$ (of length~0 
if $w=v'$) is straight and can be used as the first part of the desired path 
$P$.
From this information we can infer that there are three possibilities as shown 
in Fig.~\ref{fig:rect:horizontal_between}: Either $w=v'$, all edges on $f$ 
between $w$ and $v'$ point right, or they all point left.
Finally, we show that outside of $f$, the cycles $C$ and $C'$ are equal. In 
particular, all their edges point to the right and we can use them as the 
second part of the desired path~$P$.

\begin{figure}[bt]
 \centering
 \includegraphics{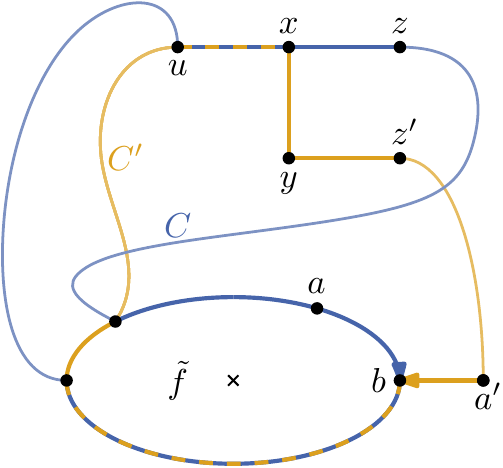}
 \caption{If $ux$ did not lie on the central face~$\tilde f$, there would be 
 edges $ab$ on $C$ and $a'b$ on $C'$ such that $ab$ would lie on $\tilde f$ 
 but $a'b$ does not.
 But as $\ell_C(ab)\geq 0$ and $\ell_{C'}(a'b)\leq 0$, the edge $a'b$ would 
 have to lie inside $\tilde f$. Hence, this situation does not occur.}
 \label{fig:rect:between_impossible_central_face}
\end{figure}

To show $\ell_C(ux)=\ell_{C'}(ux)$, we consider the graph $H=C+C'$ formed by 
the two cycles $C$ and $C'$ and denote its central face by~$\tilde f$.
By Lemma~\ref{lem:repr:equal_labels_at_intersection} it suffices to show that 
$ux$ lies on $\tilde f$. Assume for the sake of contradiction that this was 
not the case.
Then, $xy$, $xz$ and $yz'$ do not lie on $\tilde f$ either.
Hence, $\tilde f$ is formed completely by edges in $E(\subpath{C}{z,u})\cup 
E(\subpath{C'}{z', u})$. As $C$ and $C'$ were constructed by subdividing edges 
of simple cycles, they are simples themselves.
Therefore, the edges of $\tilde f$ do not all belong to the same cycle.
Therefore, there is an edge $ab$ on $\tilde f$ such that $b$ lies on both $C$ 
and $C'$ but $ab$ only belongs to $C$ and not to $C'$ 
(cf.~Fig.~\ref{fig:rect:between_impossible_central_face}).
Since $C$ is a decreasing cycle, it is $\ell_C(ab)\geq 0$.

Moreover, let $a'$ be the vertex before $b$ on $C'$.
Since $C'$ is almost an increasing cycle, we get $\ell_{C'}(a'b)\leq 0$ unless 
$a'b=xy$. But this is impossible since $y$ does not lie on $\tilde f$.
Lemma~\ref{lem:repr:illegal_intersection} therefore implies that $a'b$ lies in 
the interior of $C$.
But then $ab$ would not lie on $\tilde{f}$, contradicting the choice of $ab$. 
Thus, $ux$ is part of $\tilde f$.

Hence, Lemma~\ref{lem:repr:equal_labels_at_intersection} applies to $ux$ and 
we obtain
\[
0\leq \ell_C(ux) = \ell_{C'}(ux)\leq 0.
\]
Thus, $\ell_C(ux) = \ell_{C'}(ux)=0$. Furthermore, we get 
$\ell_C(xz)=\ell_{C'}(yz')=0$.
Therefore, $C$ contains $zw$, because otherwise $zv$ would be labeled with 
$-1$. Similarly, we see that $z'v'$ lies on $C'$.

As a next step we prove that one of $v'$ and $w$ has no incident edge to its 
left and the other vertex lies on both $C$ and $C'$.
This is the case if $v'=w$, since any subgraph to the left of this vertex 
would contain a candidate. Remember that we treat degree-1 vertices as two 
left turns with an edge in between and this edge can be a candidate.
Therefore, even in the extreme case, where the subgraph to the left of $v'=w$ 
is just one path that points left, we find a candidate---namely the leftmost 
endpoint of the path.

\begin{figure}
 \centering
 \subfloat[The path $\subpath{C}{w,t}$ makes a right turn at $v'$. But then 
 $\subpath{C}{z',u}\join uz'$ would be a decreasing cycle in 
 $\Gamma^u_{v'w'}$.]{ \label{fig:rect:between_left_turn_w-turn} 
 \includegraphics[width=0.45\textwidth]{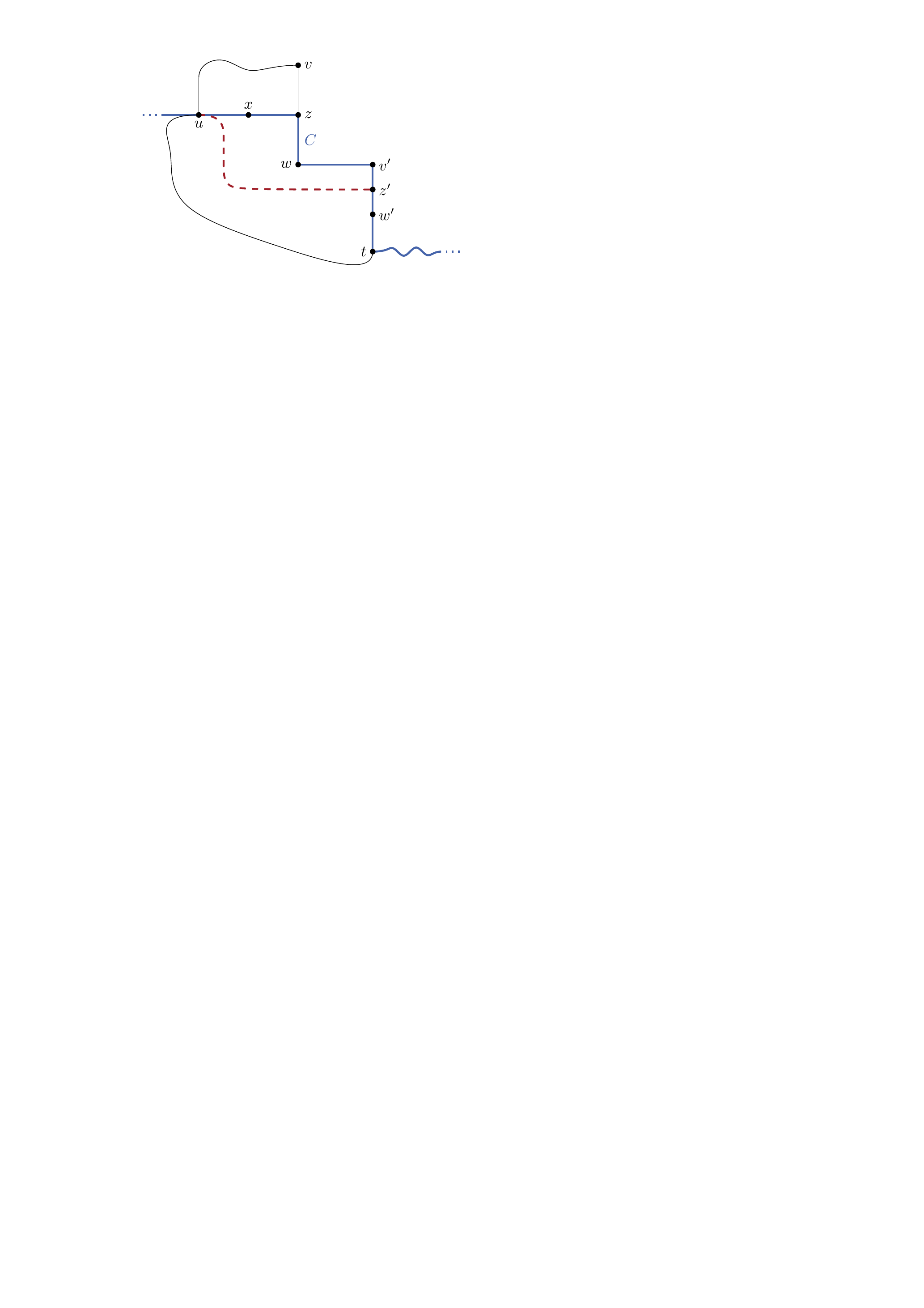}
  }
 \hfill
 \subfloat[The path $\subpath{C}{w,t}$ is straight, which is the only 
 possibility. In this case it is $t=v'$.]{ 
 \label{fig:rect:between_left_turn_w-straight} 
 \includegraphics[width=0.45\textwidth]{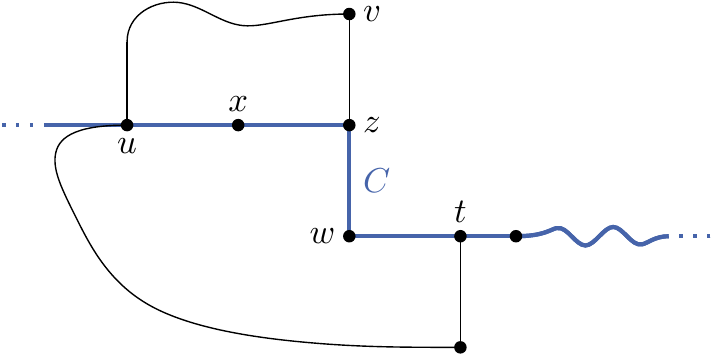}
  }
 \caption{The face $f$ makes a left turn at $w$.}
 \label{fig:rect:between_left_turn_w}
\end{figure}

If $w\neq v'$, $f$ makes either a left or a right turn at $w$.
If $f$ makes a left turn at $w$, $C$ makes a left turn there as well.
Let $t$ be the vertex at which $C$ leaves $f$ after $w$. In other words, $t$ 
is the last vertex of $C$ such that $\subpath{C}{w,t}$ lies on $f$. All edges 
of $f$ starting at $t$ lie to the right of $C$ (i.e., in the interior of $C$).
This situation is illustrated in Fig.~\ref{fig:rect:between_left_turn_w}.

We show that $\subpath{C}{w,t}$ contains $v'$ and that all edges of 
$\subpath{C}{w,v'}$ point to the right.
If $\subpath{C}{w,t}$ makes turns, the first turn cannot be a left turn, since 
the edge following the turn would lie on $C$ and  be labeled with $-1$.
If the first turn is a right turn, the edge following the turn is the 
candidate $v'w'$. But then $\Gamma_u^{v'w'}$ would contain the decreasing 
cycle $\subpath{C}{z',u}\join uz'$, which contradicts the choice of $v'w'$ 
(cf.~Fig.~\ref{fig:rect:between_left_turn_w-turn}).
Hence, it remains the case in which $\subpath{C}{w,t}$ is straight, which is 
illustrated in Fig.~\ref{fig:rect:between_left_turn_w-straight}.
The edge following $t$ on $C$ must point right. Therefore, $f$ makes a right 
turn at $t$ implying $t=v'$.
In all possible cases, $v'$ lies on both $C$ and $C'$ and the path 
$Q:=\subpath{C}{w,v'}$ is straight and points to the right.

If $f$ makes a right turn at $w$, we consider the edge $wa$ following this 
turn.
\[
\rot\subpath{f}{u,wa}=\rot\subpath{f}{u,vw} + \rot(vwa) = 2+1=3
\]
Since $v'w'$ is a candidate, it is $\rot(\subpath{f}{u, v'w'}) = 2$.
When walking along $f$ the rotation changes by at most~1 per step, since 
degree-1 vertices are treated as two steps.
Hence, the rotations of all the edges between $vw$ and $v'w'$ must be at least 
$3$. Otherwise, there would be another candidate in between. Therefore, $f$ 
makes a left turn at~$v'$.

\begin{figure}[bt]
 \centering
 \includegraphics{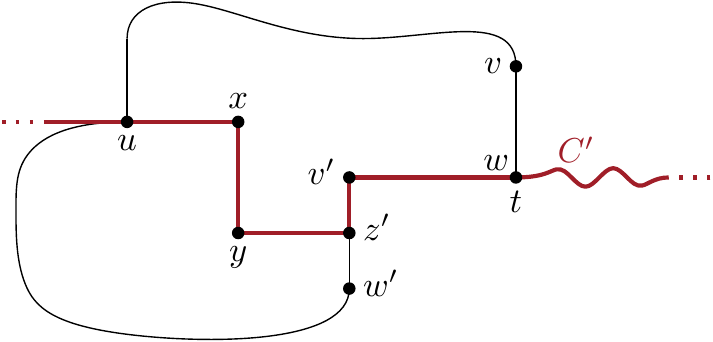}
 \caption{If $f$ makes a right turn at $w$, it makes a left turn at $v'$. The 
 longest common path $\subpath{C'}{v',t}$ on $C'$ and $\reverse{f}$ contains 
 $w$ and all its edges point to the right.}
 \label{fig:rect:between_right_turn_w}
\end{figure}

As $C'$ enters $v'$ on the edge $z'v'$ from below, $C'$ must leave $v'$ to the 
right. Thus, there is a part of $C'$ starting at $v'$ that lies on 
$\reverse{f}$. Let $t$ be the last vertex on $C'$ such that 
$\subpath{C}{v',t}=\subpath{\reverse{f}}{v', t}$.
Fig.~\ref{fig:rect:between_right_turn_w} illustrates the situation.

Similar to the case above, we analyze the first turn of $\subpath{C'}{v',t}$, 
if it exists, and show that $w$ lies on this path. Moreover, we see that all 
edges of this path point to the right.
Because $C'$ is an increasing cycle and the first edge of $\subpath{C'}{v',t}$ 
is labeled with $0$, the first turn cannot be a right turn.
If it is a left turn, the left turn must occur at $w$. Note that in this case 
$\Gamma^u_{vw}$ contains an increasing cycle. We shall see that this situation 
actually cannot occur, but we cannot exclude this possibility yet.
If $\subpath{C'}{v',t}$ makes no turns, the edge on $C'$ after $t$ also points 
to the right and the edge on $f$ incident to $t$ is the candidate $vw$.
Hence, the path $Q:=\subpath{C'}{v',w}$ points completely to the right.

In all cases we found a (possibly empty) path~$Q$ from $v'$ to $w$ or vice 
versa, whose edges point to the right. Moreover, the endpoint~$t$ of this path 
lies on both $C$ and $C'$.
The path $Q$ is the initial part of the desired path~$P$.
To construct the remaining part of $P$, we prove that 
$\subpath{C}{t,u}=\subpath{C'}{t,u}$. Moreover, we show that all edges on this 
path point to the right.

Assume for the sake of contradiction that 
$\subpath{C}{t,u}\neq\subpath{C'}{t,u}$. Then, let $ab$ be the last edge of 
$\subpath{C}{t,u}$ that does not lie on $C'$. Let $a'b$ be the edge of $C'$ 
entering $b$.
By Lemma~\ref{lem:repr:illegal_intersection} the edge $a'b$ lies in the 
interior of $C$. But then consider the last common vertex $c$ of $C$ and $C'$ 
before $b$. We denote the edges of $C$ and $C'$ starting at $c$ by $cd$ and 
$cd'$, respectively. As $ab$ lies in the interior of $C$, so does $cd'$.
But according to Lemma~\ref{lem:repr:illegal_intersection} the edge $ab'$ lies 
in the exterior of $C$, a contradiction.
Thus, all edges of $\subpath{C}{t, u}$ lie on $\subpath{C'}{t, u}$ as well. 
Since these are both paths from $t$ to $u$, this means that they are equal. To 
shorten the notation, we refer to $\subpath{C}{t,u}$ as $R$.

Since $\ell_C(ux)=\ell_{C'}(ux)$, it is $\ell_C(e)=\ell_{C'}(e)$ for all edges 
$e$ on $R$. Moreover, as $C$ is a decreasing and $C'$ an (almost) increasing 
cycle, these labels must be $0$. In particular, all edges on $R$ point to the 
right.

Thus, $P=Q\join R$ is a path containing both $v_{i}$ and $w$ ending at $u$, 
such that all edges of $P$ point to the right, which concludes the proof for 
the case when both $C$ and $C'$ use the edge $ux$ in this direction.

It remains to show that neither $C$ nor $C'$ can contain $xu$. By an argument 
similar to the one above, we know that $xu$ lies on the boundary of the 
central face~$\tilde f$ of $H=C+C'$.
Since $C$ and $C'$ contain $\tilde f$ in their interiors, any edge $e$ on one 
of them incident to $\tilde f$ is directed such that $\tilde f$ lies locally 
to the right of $e$.
If $C$ and $C'$ used $ux$ in different directions, this implies that $\tilde 
f$ lies locally to the right of both $ux$ and $xu$, which is impossible.

Hence, if $xu$ lies on one cycle, it lies on the other one, too. In this case, 
it is $\ell_C(xu)=\ell_{C'}(xu)=0$. But $xu$ points to the left and therefore 
its label must leave a remainder of 2 when divided by 4.
Thus, the assumption that both $C$ and $C'$ contain $ux$ is justified as none 
of the other cases can occur.
\end{proof}

There are three possible ways how the vertices $w$ and $v'$ can be arranged on $P$; see~Fig.~\ref{fig:rect:horizontal_between}.
Either $w=v'$, $w$ comes before $v'$, or $v'$ comes before $w$.
In any case we denote the start vertex of $P$ by $z$. According to Lemma~\ref{lem:rect:between_decreasing_and_increasing}, no edge is incident to the left of~$z$.
Hence, the insertion of the edge~$uz$ such that it points right gives a new ortho-radial representation~$\Gamma'$, which is valid by the following lemma.

\newcommand{\lemHorizontalCycleValid}{
Let $\Gamma'$ be the ortho-radial representation that is obtained from $\Gamma$ by adding the edge~$uz$ pointing to the right as in Lemma~\ref{lem:rect:between_decreasing_and_increasing}. Then, $\Gamma'$ is valid.
}

\begin{lemma}\label{lem:rect:horizontal_cycle_valid}
\lemHorizontalCycleValid
\end{lemma}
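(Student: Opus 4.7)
The plan is to verify the three conditions of Definition~\ref{def:repr:valid_representation} for $\Gamma'$.

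For Conditions~\ref{cond:repr:sum_of_angles} and~\ref{cond:repr:rotation_faces}, I perform a local analysis at the endpoints of the new edge $uz$. Since $tu$ points right and $uz$ is inserted pointing right, at $u$ the $270^\circ$ left-turn angle of $f$ cleanly splits into a $180^\circ$ angle (between $tu$ and $uz$) and a $90^\circ$ angle (between $uz$ and the edge of $f$ leaving $u$). At $z$, the fact from Lemma~\ref{lem:rect:between_decreasing_and_increasing} that $z$ has no incident edge to its left means the ``left'' direction at $z$ lies inside the interior angle of $f$, so $zu$ cleanly subdivides that angle. The rotations of the two faces $f_1, f_2$ produced from $f$ by the insertion then evaluate to $4$ each, by an argument analogous to Observation~\ref{obs:rect:augmentation_conditions_1_to_4} applied to the subpaths of $f$ from $u$ to $z$ on either side of $uz$.

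For Condition~\ref{cond:repr:labeling}, I argue by contradiction. Suppose $\Gamma'$ contains a simple essential monotone cycle $C^*$. Since $\Gamma$ is valid, $C^*$ must use $uz$ in one of its two directions; assume $C^*$ is decreasing (the increasing case is symmetric). Let $f^\circ \in \{f_1, f_2\}$ be the regular face of $\Gamma'$ on the appropriate side of $uz$. Apply Lemma~\ref{lem:rect:existence_C'} to $C^*$ and $f^\circ$ to obtain a simple essential cycle $C'$ that avoids $uz$ and decomposes as $C' = P_f \join Q$ with $P_f$ on $f^\circ$ and $Q = C^* \cap C'$. Because $C'$ does not use the new edge, $C'$ is an essential cycle of $\Gamma$ itself, so showing $C'$ is decreasing in $\Gamma$ contradicts the validity of $\Gamma$.

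By Corollary~\ref{cor:rect:existence_C'_intersection} we have $\ell_{C'}(e) = \ell_{C^*}(e) \le 0$ for every $e \in Q$, so it remains to bound the labels on $P_f$. Mimicking the label analysis in the proof of Lemma~\ref{lem:rect:insertion_of_vertical_edge}: an edge $e \in P_f$ parallel to $uz$ satisfies $\ell_{C'}(e) = \ell_{C^*}(uz) \le 0$, while an edge perpendicular to $uz$ has $\ell_{C'}(e) \equiv 2 \pmod 4$, and one forces $\ell_{C'}(e) \le -2$ by comparing with a neighboring edge whose label is already controlled (labels change by at most $1$ between consecutive edges). The structural input from Lemma~\ref{lem:rect:between_decreasing_and_increasing} is crucial here: the horizontal-right path $P$ through $w$ and $v'$ certifies that the non-$uz$ boundary of $f^\circ$ is compatible with the label propagation used in the candidate insertion analyses on either side, so the bound transfers. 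Finally, the case that all labels on $C'$ equal $0$ is ruled out by Lemma~\ref{lem:rect:two_cycles_horizontal} applied to $C^*$ and $C'$.

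The main obstacle is obtaining the label bound on $P_f$: unlike in Lemma~\ref{lem:rect:insertion_of_vertical_edge}, the faces $f_1, f_2$ need not be rectangles, so the simple ``four-sides-of-a-rectangle'' bookkeeping does not apply directly. Overcoming this relies on the fact that $z \in \{w, v'\}$ lies between the consecutive candidates $vw$ and $v'w'$ together with the fine structure of the right-pointing path $P$ from Lemma~\ref{lem:rect:between_decreasing_and_increasing}, which reduces the situation on $P_f$ to the labelling analyses already carried out for the neighboring candidate augmentations $\Gamma^u_{vw}$ and $\Gamma^u_{v'w'}$.
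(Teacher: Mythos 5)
There is a genuine gap at exactly the point you flag as ``the main obstacle.'' Your plan for Condition~\ref{cond:repr:labeling} imitates the proof of Lemma~\ref{lem:rect:insertion_of_vertical_edge}: take a monotone cycle $C^*$ through $uz$, pass to a cycle $C'$ via Lemma~\ref{lem:rect:existence_C'}, and bound the labels on the part $P_f$ of $C'$ lying on the new face. But that label bookkeeping relies essentially on the new face being a rectangle, so that every edge of $P_f$ is either parallel to $uz$ (and inherits its label) or lies on one of two short perpendicular sides. After a \emph{horizontal} insertion into a non-rectangular face $f$, neither $f_1$ nor $f_2$ is a rectangle in general, an edge of $P_f$ ``parallel to $uz$'' need not have label $\ell_{C^*}(uz)$, and the step where you ``force $\ell_{C'}(e)\le -2$ by comparing with a neighboring edge whose label is already controlled'' is not an argument --- it is precisely the claim that needs proving. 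The appeal to Lemma~\ref{lem:rect:between_decreasing_and_increasing} as certifying that ``the bound transfers'' does not close this; as written, no bound on the labels of $P_f$ is established.

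The actual role of the path $P$ from Lemma~\ref{lem:rect:between_decreasing_and_increasing} is different and makes the whole contradiction machinery unnecessary: $P$ runs from $z$ to $u$ with all edges pointing right, so $P\join uz$ is itself a simple \emph{essential} cycle all of whose edges have label $0$. Any essential cycle of $\Gamma'$ that uses $uz$ or $zu$ passes through $u$ and hence shares a vertex with this horizontal cycle, so Lemma~\ref{lem:rect:two_cycles_horizontal} immediately shows it is neither increasing nor decreasing; essential cycles avoiding $uz$ and $zu$ already live in the valid representation $\Gamma$ and are likewise not monotone. No invocation of Lemma~\ref{lem:rect:existence_C'}, Corollary~\ref{cor:rect:existence_C'_intersection}, or any label analysis on the new faces is needed. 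Your treatment of Conditions~\ref{cond:repr:sum_of_angles} and~\ref{cond:repr:rotation_faces} is fine (this is the ``by construction'' part, using that $z$ has no edge to its left), but the proof of Condition~\ref{cond:repr:labeling} needs to be replaced by the argument above.
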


\begin{proof}
  By construction, $\Gamma'$ satisfies
  Conditions~\ref{cond:repr:sum_of_angles} and~\ref{cond:repr:rotation_faces}
  of Definition~\ref{def:repr:valid_representation}.  Let $C'=P\join
  uz$ be the new cycle whose edges point right. It is $\ell_{C'}(e)=0$
  for each edge $e$ of $C'$.  No essential cycles without $uz$ or $zu$
  is increasing or decreasing, since they are already present in the
  valid representation~$\Gamma$.  If an essential cycle~$C$ contains
  $uz$ or $zu$ and in particular the vertex~$u$,
  Lemma~\ref{lem:rect:two_cycles_horizontal} states that $C$ is
  neither increasing nor decreasing. Thus, $\Gamma'$ satisfies
  Condition~\ref{cond:repr:labeling} and is therefore valid.
\end{proof}
Putting all results together we see that the rectangulation algorithm
presented in Section~\ref{sec:rect:algorithm} works correctly.  That
is, given a valid ortho-radial representation~$\Gamma$, the algorithm
produces another valid ortho-radial representation~$\Gamma'$ such that
all faces of $\Gamma'$ are rectangles and $\Gamma$ is contained in
$\Gamma'$.  Combining this result with
Corollary~\ref{cor:draw:characterization} we obtain the following theorem.

\begin{theorem}\label{thm:rect:representation_to_drawing}
Let $\Gamma$ be a valid ortho-radial representation of a graph $G$. Then there is a drawing of $G$ representing $\Gamma$.
\end{theorem}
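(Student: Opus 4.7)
The plan is to reduce the general case to the rectangular case handled in Corollary~\ref{cor:draw:characterization} by carrying out the rectangulation algorithm of Section~\ref{sec:rect:algorithm} and then invoking that corollary on the augmented graph. Starting from the valid representation~$\Gamma$ of $G$, I would first rectangulate the outer and central faces by inserting the two length-$3$ cycles indicated in Fig.~\ref{fig:rect:outer_central_face}; the anchor edge on the boundary of the central face is chosen to have label~$0$, which exists because Condition~\ref{cond:repr:labeling} rules out monotone essential cycles. A direct angle-count shows these insertions preserve Conditions~\ref{cond:repr:sum_of_angles} and~\ref{cond:repr:rotation_faces}, and they introduce no monotone essential cycle because each new cycle has all labels equal to~$0$, so Condition~\ref{cond:repr:labeling} is kept as well.

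Next I would iteratively remove left turns from regular faces. Following Tamassia, as long as a regular face is not a rectangle there exists a left turn at some vertex~$u$ on a face~$f$ whose two subsequent turns on~$f$ are right turns. I would split according to whether the edge $tu$ entering~$u$ on~$f$ is vertical or horizontal. In the vertical case Lemma~\ref{lem:rect:insertion_of_vertical_edge} immediately yields that augmenting to the first candidate gives a valid representation. In the horizontal case I would first check whether some candidate~$e$ already gives a valid $\Gamma^u_e$; if so, use it. Otherwise, combining Lemmas~\ref{lem:rect:first_candidate} and~\ref{lem:rect:last_candidate} (no increasing cycle for the first candidate, no decreasing cycle for the last), a pigeonhole argument on the ordered candidates produces consecutive candidates $vw$ and $v'w'$ with $\Gamma^u_{vw}$ containing a decreasing and $\Gamma^u_{v'w'}$ containing an increasing cycle; Lemma~\ref{lem:rect:between_decreasing_and_increasing} then provides a vertex $z\in\{w,v'\}$ without a left neighbour, and Lemma~\ref{lem:rect:horizontal_cycle_valid} shows that inserting $uz$ pointing right yields a valid representation. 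By Observation~\ref{obs:rect:augmentation_conditions_1_to_4} the angle conditions are preserved throughout.

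Termination follows because each augmentation removes at least one left turn (the turn at~$u$ on~$f$) and, by construction of the candidate condition $\rot(\subpath{f}{u,vw})=2$ together with the right-pointing nature of the inserted edge, creates only right or straight angles at its endpoints in both new faces. Hence after finitely many iterations every regular face has rotation $4$ with no left turns, i.e.~is a rectangle, while the outer and central faces have already been made turn-free. Call the resulting augmented graph $G'$ and its representation~$\Gamma'$. By the inductive argument $\Gamma'$ is a valid ortho-radial representation of the rectangular graph $G'$, so Corollary~\ref{cor:draw:characterization} produces an ortho-radial drawing~$\Delta'$ of $G'$ realising $\Gamma'$. Deleting the edges added by the rectangulation from $\Delta'$ yields a drawing~$\Delta$ of $G$, and since $\Gamma$ agrees with $\Gamma'$ on the edges of $G$, $\Delta$ realises $\Gamma$, which proves the theorem.

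All the genuinely difficult work has already been done in the preceding lemmas; in particular, the hard part — showing that Condition~\ref{cond:repr:labeling} is maintained under horizontal augmentations, which is where monotone cycles can in principle be created — was settled by the interlocking Lemmas~\ref{lem:rect:first_candidate}, \ref{lem:rect:last_candidate}, \ref{lem:rect:between_decreasing_and_increasing}, and~\ref{lem:rect:horizontal_cycle_valid}. What remains for this proof is therefore essentially bookkeeping: assembling these results into a terminating iteration and invoking the rectangular characterisation at the end.
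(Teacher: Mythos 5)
Your proposal is correct and follows essentially the same route as the paper: the paper also proves this theorem by assembling the rectangulation algorithm of Section~\ref{sec:rect:algorithm} (triangle insertion for the outer and central faces, then iterated left-turn removal via the vertical case and the first/last/consecutive-candidate analysis for the horizontal case) into a terminating procedure yielding a valid rectangular representation~$\Gamma'$, and then invokes Corollary~\ref{cor:draw:characterization} and restricts the resulting drawing to~$G$. The only point worth stating slightly more carefully is that the augmentation also subdivides edges of~$G$, so recovering~$\Delta$ requires smoothing the inserted degree-three vertices along their $180\degree$ sides as well as deleting the new edges, but this is immediate.
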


Theorem~\ref{thm:rect:representation_to_drawing} shows one direction of implication of Theorem~\ref{thm:repr:characterization}.
The other direction is proved by the following theorem.

\newcommand{\thmDrawingToRepresentation}{
For any drawing~$\Delta$ of a 4-planar graph~$G$ there is a valid ortho-radial representation of~$G$.
}

\begin{theorem}\label{thm:rect:drawing_to_representation}
\thmDrawingToRepresentation
\end{theorem}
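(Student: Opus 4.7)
The plan is to produce the obvious combinatorial representation $\Gamma$ induced by $\Delta$ and then verify the three conditions of Definition~\ref{def:repr:valid_representation}. I would extract $\Gamma$ by reading off, for each face $f$ and each directed edge bounding $f$ on the right, the geometric angle of $f$ between that edge and its successor; the outer and central faces of $\Delta$ are designated accordingly and the reference edge $e^\star$ is taken to be any edge of the outer face that $\Delta$ draws pointing right. Conditions~\ref{cond:repr:sum_of_angles} and~\ref{cond:repr:rotation_faces} then hold for purely geometric reasons: the angles around any grid vertex sum to $360\degree$, and the rotations of regular faces, of the outer/central face, and of the degenerate case where these coincide are $4$, $0$ and $-4$, respectively. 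These facts are essentially the content of~\cite{hht-orthoradial-09}.

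For Condition~\ref{cond:repr:labeling}, I would reduce to the already-proved rectangular case rather than redo the monotone backward-path construction used for Theorem~\ref{thm:draw:rectangle_drawing}. Starting from $\Delta$, I geometrically rectangulate every regular face that contains a left turn by inserting an edge from the left-turning vertex to a suitable point on the face boundary (subdividing an existing edge if necessary); because $\Delta$ has actual coordinates on the ortho-radial grid, such a segment can always be routed inside the face without creating crossings, similarly to how any simple rectilinear polygon can be partitioned into axis-aligned rectangles by chords. The outer and central face are rectangulated by attaching small triangles exactly as in Section~\ref{sec:rect:algorithm}. The resulting drawing $\Delta'$ is a rectangular ortho-radial drawing of a supergraph $G' \supseteq G$, and its induced representation $\Gamma'$ extends $\Gamma$: the angle at an original vertex between two original edges is unchanged, since the newly inserted edges only further subdivide face-interior angles.

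By implication \ref{item:draw:rectangle_drawing:drawing}$\Rightarrow$\ref{item:draw:rectangle_drawing:valid} of Theorem~\ref{thm:draw:rectangle_drawing}, the representation $\Gamma'$ is valid. Every simple essential cycle $C$ in $G$ remains a simple essential cycle in $G'$, and any elementary path in $G$ whose image in $\Delta$ lies in the exterior of $C$ is also such a path in $G'$. Since the rotation of such a path is computed from turn angles at original vertices between original edges, it takes the same value in $\Gamma$ and in $\Gamma'$. Combined with Lemma~\ref{lem:repr:equivalence_pseudo_elementary_paths}, which guarantees that the choice of elementary path is immaterial in either representation, this shows that the canonical labelings $\ell_C$ computed in $\Gamma$ and in $\Gamma'$ agree edge by edge. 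Hence Condition~\ref{cond:repr:labeling} transfers from $\Gamma'$ to $\Gamma$, and $\Gamma$ is valid.

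The main obstacle is the geometric rectangulation step: one has to verify that every non-rectangular face of $\Delta$ admits a valid axis-aligned chord to a left-turning vertex without introducing crossings with already-drawn edges or leaving the grid, and that the special construction for the outer and central faces can be realized geometrically. This is a well-behaved but somewhat tedious exercise in rectilinear geometry; none of the earlier combinatorial obstructions (the monotone-cycle arguments of Section~\ref{sec:rect:correctness}) can occur here, precisely because $\Delta$ is already an honest drawing rather than a purely combinatorial object.
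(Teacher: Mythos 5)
Your proposal follows essentially the same route as the paper: read off $\Gamma$ from $\Delta$, invoke \cite{hht-orthoradial-09} for Conditions~\ref{cond:repr:sum_of_angles} and~\ref{cond:repr:rotation_faces}, geometrically rectangulate the drawing (ray-casting from left turns, special treatment of the outer and central faces), apply the rectangular-case characterization to the resulting $\Gamma'$, and transfer Condition~\ref{cond:repr:labeling} back to $\Gamma$ by noting that labelings of essential cycles are unaffected by the added edges. The only cosmetic difference is that the paper completes the innermost and outermost circles of $\Delta$ rather than attaching triangles to handle the central and outer faces.
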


\begin{proof}
\begin{figure}
 \centering
 \subfloat[The result of the geometric rectangulation of an ortho-radial graph drawing.
 The edges of the original graph are drawn in solid black and the edges inserted during the rectangulation in dashed blue.
 The reference edge~$e^\star$ is chosen as an edge on the outermost circle.]{ \label{fig:rect:geometric_rectangulation-complete}
\includegraphics{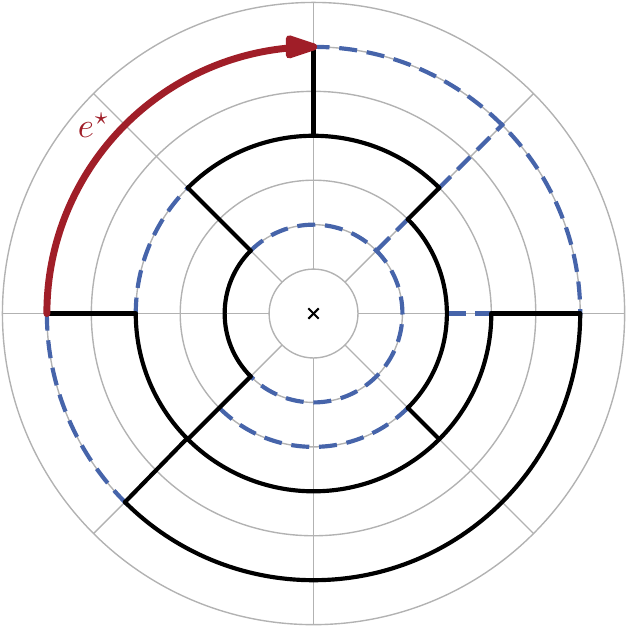} }
 \hspace{1cm}
 \subfloat[One face is rectangulated by casting rays from the left turns until 
 the intersect with an edge.]{ \label{fig:rect:geometric_rectangulation-face} 
 \includegraphics[scale=0.8]{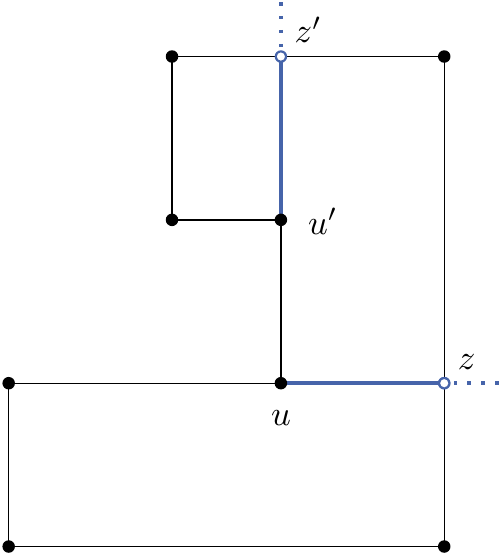}
  }
 \caption{Rectangulations of a graph and a single face}
 \label{fig:rect:geometric_rectangulation}
\end{figure}

We first note that any drawing $\Delta$ fixes an ortho-radial representation up to the choice of the reference edge.
Let $\Gamma$ be such an ortho-radial representation where we pick an 
edge~$e^\star$ on the outer face as the reference edge such that $e^\star$ 
points to the right and lies on the outermost circle that is used by $\Delta$ 
(as in Fig.~\ref{fig:rect:geometric_rectangulation-complete}).
By~\cite{hht-orthoradial-09} the representation~$\Gamma$ satisfies Conditions~\ref{cond:repr:sum_of_angles} and~\ref{cond:repr:rotation_faces} of Definition~\ref{def:repr:valid_representation}.
To prove that $\Gamma$ also satisfies Condition~\ref{cond:repr:labeling}, i.e., $\Gamma$ does not contain any increasing or decreasing cycles, we show how to reduce the general case to the more restricted one, where all faces are rectangles.
By Corollary~\ref{cor:draw:characterization} the existence of a drawing and the validity of the ortho-radial representation are equivalent.

Given the drawing~$\Delta$, we augment it such that all faces are rectangles. This rectangulation is similar to the one described in Section~\ref{sec:rect:algorithm} but works with a drawing and not only with a representation.
We first insert the missing parts of the innermost and outermost circle that are used by $\Delta$ such that the outer and the central face are already rectangles.
For each left turn on a face~$f$ at a vertex $u$, we then cast a ray from $v$ in $f$ in the direction in which the incoming edge of $u$ points (cf.~Fig.~\ref{fig:rect:geometric_rectangulation-face}).
This ray intersects another edge in $\Delta$. Say the first intersection occurs at the point~$p$. Either there already is a vertex~$z$ drawn at $p$ or $p$ lies on an edge. In the latter case, we insert a new vertex, which we call~$z$, at $p$.
We then insert the edge $uz$ in $G$ and update $\Delta$ and $\Gamma$ accordingly.

Repeating this step for all left turns, we obtain a drawing~$\Delta'$ and an ortho-radial representation~$\Gamma'$ of the augmented graph~$G'$ (see~Fig.~\ref{fig:rect:geometric_rectangulation-complete} for an example of $\Delta'$).
As the labelings of essential cycles are unchanged by the addition of edges elsewhere in the graph, any increasing or decreasing cycle in $\Gamma$ would also appear in $\Gamma'$.
But by Corollary~\ref{cor:draw:characterization} $\Gamma'$ is valid, and hence neither $\Gamma$ nor $\Gamma'$ contain increasing or decreasing cycles. Thus, $\Gamma$ satisfies Condition~\ref{cond:repr:labeling} and is valid.
\end{proof}

\section{NP-Hardness Proof for Drawings without Bends} 
\label{sec:bend-minimization}

Garg and Tamassia~\cite{gt-ccurpt-01} showed that it is \NP-complete to decide whether a 4-planar graph admits an orthogonal drawing without any edge bends (\textsc{Othogonal Embeddability}).
In this section, we study the analogous problem for ortho-radial drawings and prove that it is \NP-complete as well.
We say a graph~$G$ admits an \emph{ortho-radial (or orthogonal) embedding}, if there is an embedding of $G$ such that $G$ can be drawn ortho-radially (or orthogonally) without bends.

\begin{definition}[\textsc{Ortho-Radial Embeddability}]
Does a 4-planar graph $G$ admit an ortho-radial embedding?
\end{definition}

To show that \textsc{Ortho-Radial Embeddability} is \NP-hard, we reduce \textsc{Planar Monotone 3-SAT}, which was shown to be \NP-hard by~\cite{l-pftu-82}, to \textsc{Ortho-Radial Embeddability}.
\begin{definition}[\textsc{Planar Monotone 3-SAT}]
Given a Boolean formula~$\Phi$ in conjunctive normal form, such that each clause contains exactly three literals that are either all positive or all negative and a planar representation of its variable-clause-graph in which the variables are placed on one line, the clauses with only positive literals above that line and the clauses with only negative literals below this line. Is $\Phi$ satisfiable?
\end{definition}

To reduce from \textsc{Planar Monotone 3-SAT} to \textsc{Ortho-Radial Embedding} we first construct an equivalent instance~$G$ of \textsc{Orthogonal Embeddability} as described by Bläsius et~al.~\cite{bbr14-kandinsky}.
We then build a structure around $G$ yielding a graph~$G'$ such that in any ortho-radial representation of $G'$ the representation~$\Gamma$ of $G$ does not contain any essential cycles.
In other words, $\Gamma$ is actually an orthogonal representation of $G$. Hence, an ortho-radial embedding of $G'$ can only exist if $G$ admits an orthogonal embedding.
We may assume without loss of generality that $G$ is connected as otherwise, we handle each component separately.

\begin{figure}[bt]
 \centering
 \subfloat[$G$ lies between $C_1$ and $C_2$.]{ \label{fig:embed:embedding-12}\includegraphics{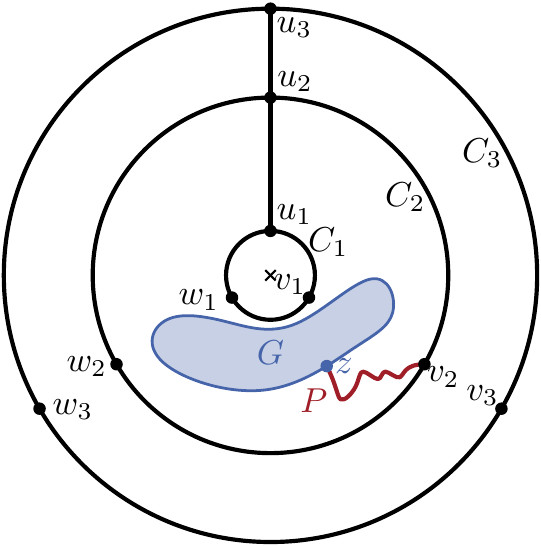} }
 \hspace{10ex}
 \subfloat[$G$ lies between $C_2$ and $C_3$.]{ 
 \label{fig:embed:embedding-23}\includegraphics{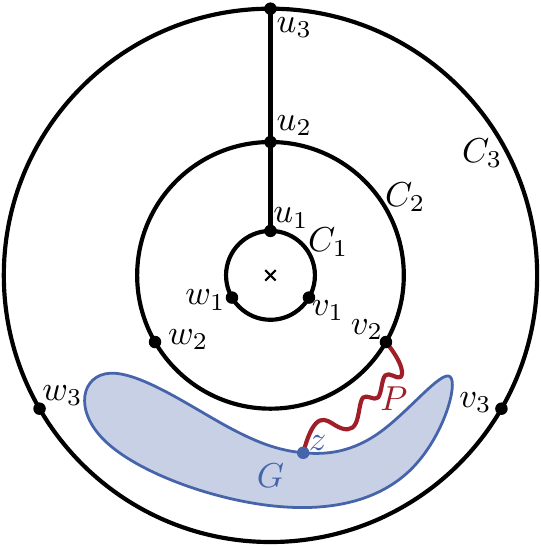}
  }
 \caption{Possible embeddings of $G'$: In both cases $G$ contains no essential cycles. The roles of $C_1$ and $C_3$ can be exchanged.}
 \label{fig:embed:embedding}
\end{figure}

The construction of $G'$ from $G$ is based on the fact that there is only one way to ortho-radially draw a triangle~$C$, i.e., a cycle of length~3, without bends: as an essential cycle on one circle of the grid.
We build a graph $H$ consisting of three triangles called $C_1$, $C_2$ and $C_3$ and denote the vertices on $C_i$ by $u_i$, $v_i$ and $w_i$.
Furthermore, $H$ contains the edges $u_1u_2$ and $u_2u_3$.
In Fig.~\ref{fig:embed:embedding} $H$ is formed by the black edges.

To connect $H$ and $G$, we identify a vertex~$z$ in $G$, called the \emph{port} of $G$, such that either there is an orthogonal embedding of $G$ with $z$ on the outer face and the angle at $z$ in the outer face is at least $180\degree$, or $G$ does not admit an orthogonal embedding.
We shall see later how such a vertex can be identified in polynomial time.
We connect $z$ and $v_2$ by a path~$P$ whose length is equal to the number of edges in $G$ and denote the resulting graph by~$G'$ (cf.~Fig.~\ref{fig:embed:embedding}).

\begin{lemma}\label{lem:embed:equivalence}
The graph $G'$ admits an ortho-radial embedding if and only if $G$ admits an orthogonal embedding with the port~$z$ on the outer face such that the angle at~$z$ in the outer face is at least $180\degree$.
\end{lemma}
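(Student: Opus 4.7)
The plan is to prove both directions of the equivalence. The forward direction is an explicit construction; the reverse direction hinges on showing that the three triangles are forced to be essential cycles and that, as a consequence, $G$ can carry no essential cycle.

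For the direction ``orthogonal embedding of $G$ implies ortho-radial embedding of $G'$'', I would start from an orthogonal embedding of $G$ with $z$ on its outer face and an outer-face angle of at least $180^\circ$ at $z$. I would draw $C_1, C_2, C_3$ as three nested essential cycles on three concentric grid circles, with $u_1u_2$ and $u_2u_3$ realized as radial spoke segments joining consecutive triangles. I would then place the orthogonal drawing of $G$ into the annular region between $C_2$ and $C_3$: since an orthogonal embedding has no essential cycle, it fits into any annular strip without topological obstruction. The angle-at-$z$ condition leaves at least two consecutive $90^\circ$ sectors free at $z$ in the face facing $C_2$, giving exactly the room to attach $P$ radially from $z$ toward $v_2$; the remaining edges of $P$ are then routed through the annulus, which is always possible because the number of spokes and circles in the underlying ortho-radial grid may be chosen freely. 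Checking the three conditions of Definition~\ref{def:repr:valid_representation} yields a valid ortho-radial representation, and Theorem~\ref{thm:repr:characterization} then supplies the drawing.

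For the reverse direction, let $\Delta'$ be an ortho-radial embedding of $G'$ and let $\Gamma$ denote the ortho-radial representation of $G$ induced by $\Delta'$. I would proceed in four steps. First, each triangle $C_i$ must be drawn as an essential cycle: a bend-free 3-cycle in the plane is geometrically impossible (three axis-aligned edge vectors of positive length cannot close up a non-degenerate triangle), so on the cylinder it must wrap around, i.e., be essential. Second, the edges $u_1u_2$ and $u_2u_3$ force $C_1, C_2, C_3$ to be three pairwise disjoint concentric essential cycles; without loss of generality, $C_2$ lies between $C_1$ and $C_3$. Third, since $v_2 \in C_2$ and the path $P$ attaches $G$ to $v_2$, the subgraph $G \cup P$ lies entirely in one of the two annular regions between consecutive triangles; by the symmetry between $C_1$ and $C_3$ illustrated in Fig.~\ref{fig:embed:embedding}, assume $G \cup P$ lies between $C_2$ and $C_3$. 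Fourth, the central combinatorial step: I would prove that $G$ contains no simple essential cycle $D$ in $\Delta'$. Such a $D$ would be a fourth essential cycle nested strictly between $C_2$ and $C_3$; the planar path $P$ from $v_2 \in C_2$ to $z \in G$ then forces $z$ to lie on the $C_2$-side of $D$; and applying Lemma~\ref{lem:repr:illegal_intersection} and Lemma~\ref{lem:repr:equal_labels_at_intersection} to $D$ together with the path-length condition $|E(P)| = |E(G)|$ yields a labeling contradiction with the validity of $\Gamma$. Once $G$ has no essential cycle in $\Delta'$, the restriction of $\Delta'$ to $G$ is an orthogonal embedding of $G$, the face carrying $P$ is the outer face, $z$ lies on it, and the fact that $P$ leaves $z$ without bending forces the angle at $z$ in that face to be at least $180^\circ$.

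The main obstacle is the fourth step. The easy sub-case is when $z$ lies on the $C_3$-side of $D$: then the planar path $P$ from $v_2$ to $z$ would have to cross $D$, an immediate contradiction. The hard sub-case is when $z$ lies on the $C_2$-side of $D$; here $P$ need not cross $D$, so the purely topological argument fails, and the labeling machinery of Section~\ref{sec:preliminaries}, combined with the specific engineered path-length $|E(P)| = |E(G)|$, must do the work of ruling out $D$ by producing incompatible labels on $D$. Handling this sub-case carefully is where the bulk of the proof effort will sit.
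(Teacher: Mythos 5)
Your overall architecture (two directions; triangles forced to be essential and nested; $G\cup P$ confined to one region; conclude $G$ has no essential cycle) matches the paper's, but your fourth step misses the key observation and replaces it with machinery that is not needed and, as sketched, would not obviously work. The region containing $G\cup P$ is not an annulus: it is the face of $H$ bounded by $C_2$, $C_3$ \emph{and the connecting edge $u_2u_3$} (resp.\ $C_1$, $C_2$ and $u_1u_2$). That edge cuts the annulus between the two consecutive triangles into a topological disk, so \emph{any} cycle of $G$ drawn there has the grid center in its exterior and is non-essential --- immediately, with no case distinction on which side of a hypothetical essential cycle $z$ lies, and no appeal to the labeling lemmas of Section~\ref{sec:preliminaries}. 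Your ``hard sub-case'' therefore does not exist, and the tool you propose for it is the wrong one: the condition $|E(P)|=|E(G)|$ plays no role in excluding essential cycles, so a proof that leans on it there is not going to close.

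That length condition is instead what carries the \emph{other} direction, where your sketch is too casual. Starting from an orthogonal embedding of $G$, the path $P$ cannot in general run radially from $z$ to $v_2$; it must wrap around the outer boundary of $G$ to reach the triangle, and since no bends are allowed, every turn of $P$ must be absorbed by an internal vertex of $P$. The number of turns needed is bounded by the number of edges on the outer face of $G$, hence by $|E(G)|=|E(P)|$, which is exactly why $P$ can be drawn bend-free; the $\geq 180\degree$ angle at $z$ is what lets $P$ leave $z$ into the exterior of $G$ at all. Choosing the grid resolution freely handles lengths but not bends, so your justification for routing $P$ is incomplete as written.
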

\begin{proof}
Let $\Gamma'$ be a valid ortho-radial representation of $G'$ without edge bends. In $\Gamma'$ all three cycles $C_1$, $C_2$ and $C_3$ are essential cycles. Hence, $C_2$ lies between $C_1$ and $C_3$.
Furthermore, $G$ either lies inside the area enclosed by $C_1$, $C_2$ and $u_1u_2$ (as shown in Fig.~\ref{fig:embed:embedding-12}), or in the area enclosed by $C_2$, $C_3$ and $u_2u_3$ (as in Fig.~\ref{fig:embed:embedding-23}).
Therefore, $G$ cannot contain any essential cycles.
Hence, the ortho-radial representation of $G$ can be interpreted as an orthogonal representation, and thus, $G$ admits an orthogonal embedding.
Since $P$ ends at the port~$z$ and intersects $G$ only at $z$, the angle at~$z$ in the outer face of $G$ must be at least $180\degree$.

Let $\Gamma$ be an orthogonal representation of $G$ without bends such that $z$ lies on the outer face and the angle at $z$ in the outer face is at least $180\degree$.
We interpret $\Gamma$ as an ortho-radial representation of $G$ and extend it to a representation of $G'$ as follows:
We embed $H$ such that $C_1$ lies in the interior of $C_2$, which in turn lies in $C_3$. We place $G$ between $C_1$ and $C_2$ as shown in Fig.~\ref{fig:embed:embedding-12}.
Since $\Gamma$ has no bends, the path $P$ connecting $G$ and $H$ needs to make at most as many turns as there are edges incident to the outer face of $G$.
Since the length of $G$ is equal to the number of edges of $G$, we can place $P$ such that all turns occur at its vertices, that is, without edge bends.
As the angle at the port~$z$ has at least $180\degree$, $P$ can lie completely in the exterior of~$G$.
\end{proof}

The construction of $G'$ relies on the fact that we find a vertex~$z$ in $G$ such that placing~$z$ on the outer face of $G$ does not restrict the possible embeddings too much.
In order to show how $z$ can be chosen, we present some details of the reduction from \textsc{Planar Monotone 3-SAT} to \textsc{Orthogonal Embeddability} by Bläsius et~al.~\cite{bbr14-kandinsky}.

\begin{figure}
 \centering
 \subfloat[]{ \label{fig:embed:reduction-3sat}\includegraphics{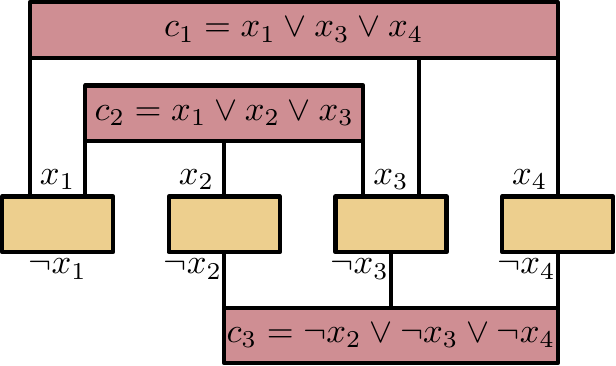} }
 \hfill
 \subfloat[]{ \label{fig:embed:reduction-graph}\includegraphics{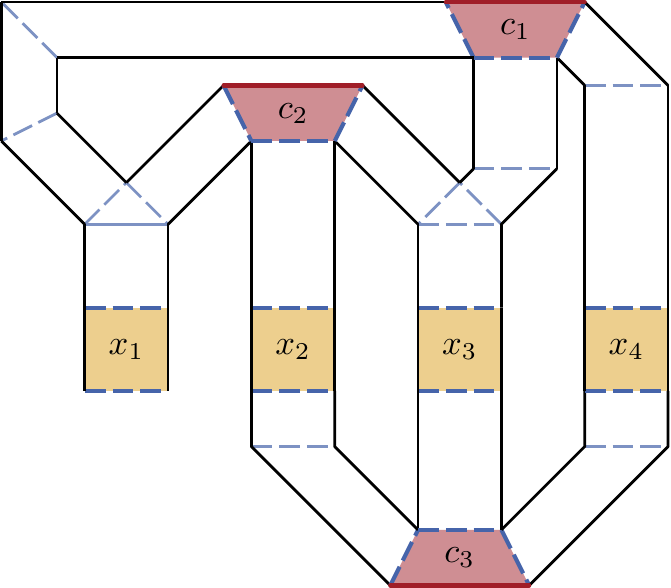} }
 \caption{(a) An instance~$\Phi$ of \textsc{Planar Monotone 3-SAT}. (b) A schematic drawing of the graph $G$ constructed from $\Phi$ as described by Bläsius et~al.~\cite{bbr14-kandinsky}.
 The red-shaded parts correspond to the clauses and the orange-shaded ones to the variables. The dashed blue lines refer to edges with exactly one bend, which encode the truth values.}
 \label{fig:embed:reduction}
\end{figure}

\begin{figure}
 \centering
 \subfloat[]{ \label{fig:embed:gadgets-clause}\includegraphics{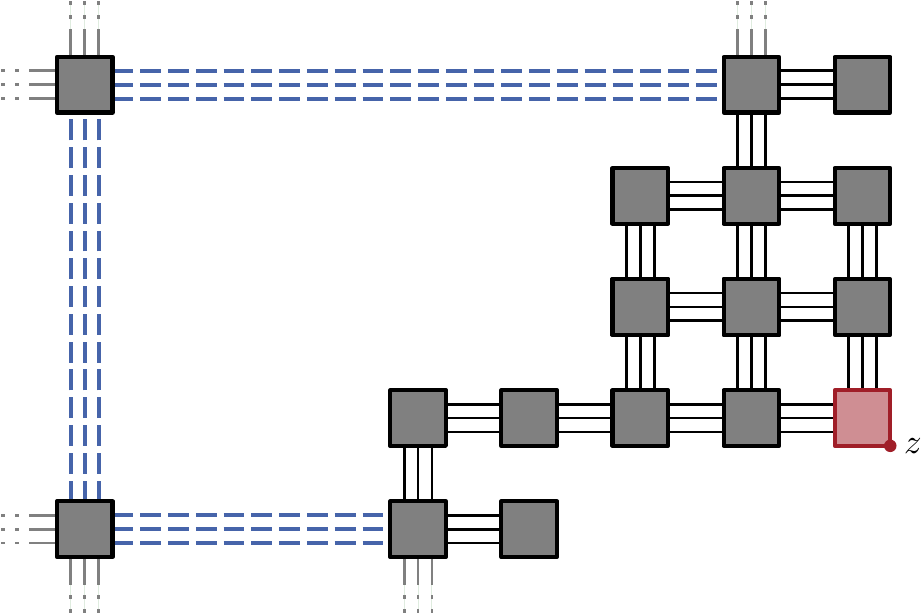} }
 \hfill
 \subfloat[]{ \label{fig:embed:gadgets-vertex}\includegraphics[scale=1.2]{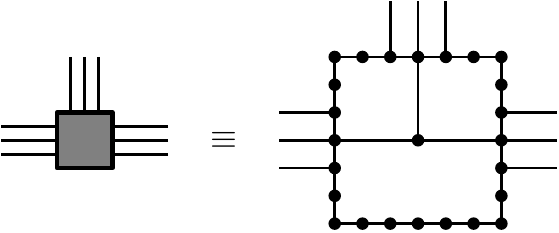} }
 \hspace{1cm}
 \subfloat[]{ \label{fig:embed:gadgets-edge}\includegraphics[scale=1.2]{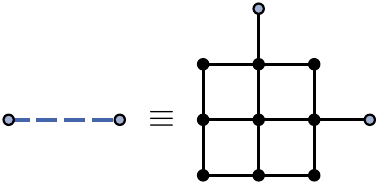} }
 \caption{(a) The clause gadget. The dashed blue edges must have exactly one bend. The light gray edges indicate where the clause gadget is connected to the remainder of $G$.
 (b) The subgraph each box stands for.
 (c) The gadget that represents edges that must have exactly one bend. By flipping the embedding one can achieve both bends to the left and to the right.}
 \label{fig:embed:gadgets}
\end{figure}
The idea behind the reduction is to rebuild the variable-clause graph of $\Phi$ and represent the truth values by edge bends.
Fig.~\ref{fig:embed:reduction} provides an example of this construction.
Bläsius et~al.\ introduce edges that must have exactly one bend (\emph{1-edges}) and represent true and false as bends in different directions.
In our figures, 1-edges are drawn as dashed blue lines (e.g., in Fig.~\ref{fig:embed:reduction-graph}).
However, instances of \textsc{Orthogonal Embeddability} cannot contain 1-edges and the gadget shown in Fig.~\ref{fig:embed:gadgets-edge} is used instead. Flipping the embedding of this gadget changes the direction of the bend.

We want to choose the port~$z$ on a part of $G$ that corresponds to one of the outermost clauses (e.g., on $c_1$ or $c_3$ in Fig.~\ref{fig:embed:reduction}).
The gadget they use to represent clauses, is shown in Fig.~\ref{fig:embed:gadgets-clause}.
Note that the boxes are not vertices but occurrences of the graph in Fig.~\ref{fig:embed:gadgets-vertex}.
Consider the gadget for one of the outermost clauses in the variable-clause graph.
We choose any degree-2 vertex of the box in the corner as out port~$z$ (cf.~Fig.~\ref{fig:embed:gadgets-clause}).
By the following lemma this vertex has the desired properties.

\begin{lemma}\label{lem:embed:existence_z}
Let $z$ be the port of $G$ as selected above. Then, exactly one of the following is true:
\begin{enumerate}
 \item There is an orthogonal embedding of $G$ in which $z$ lies on the outer face and the angle at $z$ in the outer face is at least $180\degree$.
 \item The graph $G$ does not admit any orthogonal embedding.
\end{enumerate}
\end{lemma}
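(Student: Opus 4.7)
If $G$ does not admit any orthogonal embedding, Case~2 holds and there is nothing to show, so assume from now on that $G$ has at least one orthogonal embedding. The plan is to transform an arbitrary orthogonal embedding of $G$ into one in which $z$ lies on the outer face and the outer angle at $z$ is at least $180\degree$, thereby establishing Case~1.

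The starting observation is that $z$ has degree exactly~$2$, since it was chosen as a degree-$2$ vertex of the corner box depicted in Fig.~\ref{fig:embed:gadgets-vertex}. Hence, in any orthogonal embedding of $G$ the two face angles incident to $z$ are positive multiples of $90\degree$ summing to $360\degree$, so at least one of them is $\geq 180\degree$. Call the face incident to that large angle at $z$ the \emph{big face} of $z$, and the other one the \emph{small face} of $z$. It therefore suffices to show that we can always choose an orthogonal embedding in which the big face of $z$ is the outer face.

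Next I would use the explicit structure of the reduction of Bl\"asius et~al.\ as recalled in Fig.~\ref{fig:embed:reduction}. The port $z$ was chosen in the corner box of a clause gadget $c$ that is one of the \emph{outermost} clauses, i.e., topmost or bottommost in the planar variable-clause representation of $\Phi$. The variable-clause graph has a fixed planar embedding in which $c$ is incident to the outer face; this property is inherited by the graph $G$, whose combinatorial embedding is essentially prescribed by the planar embedding of the variable-clause graph (up to flipping individual symmetric gadgets). Consequently, in \emph{any} orthogonal embedding of $G$, the corner box containing $z$ must lie on the boundary of the outer face of $G$; otherwise the planar embedding of the variable-clause graph would be violated. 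In particular $z$ already lies on the outer face.

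Finally I would argue that the big face of $z$ can be made the outer face. Here I would rely on two symmetries of the construction: (i) the corner box gadget (Fig.~\ref{fig:embed:gadgets-vertex}) has a diagonal symmetry that swaps the two degree-$2$ vertices in the corner and interchanges the two incident faces, and (ii) flipping the orthogonal embedding of the entire outermost clause gadget is consistent with the remainder of the construction, since the clause gadget is attached to the rest of $G$ only through well-defined interfaces that are invariant under this flip. By applying these flips if necessary, we can turn the big face of $z$ into the outer face without destroying orthogonal embeddability. This yields an orthogonal embedding of $G$ witnessing Case~1.

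The main obstacle is the flipping argument in the last step: one must verify that swapping the roles of the two faces at $z$, either by flipping the corner box or by reselecting the outer face, does not invalidate the orthogonal representation of $G$ elsewhere. I expect this to follow straightforwardly from the local symmetry of the corner box gadget together with the fact that in the Bl\"asius et~al.\ reduction the gadget attaches to its neighbors only at prescribed ports whose angles are unaffected by the flip.
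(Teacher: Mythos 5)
Your proposal takes a genuinely different route from the paper, and the route you chose contains a gap at its crux. You try to \emph{transform an arbitrary} orthogonal embedding of $G$ into one with the desired properties at $z$. The key unjustified step is the claim that in \emph{any} orthogonal embedding of $G$ the corner box containing $z$ must lie on the boundary of the outer face. \textsc{Orthogonal Embeddability} quantifies existentially over embeddings, including the choice of outer face; nothing forces an arbitrary bend-free embedding of $G$ to respect the planar layout of the variable-clause graph, and even if the rotation system were rigid, the outer face could still be chosen differently. Your subsequent flipping argument (symmetry of the corner box, flippability of the outermost clause gadget) is likewise left as a hope and would require a detailed verification of the gadgets' internals that you do not supply.

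The paper sidesteps all of this with a short detour through satisfiability: since statement~1 is itself existential, one never needs to repair a given embedding. If $G$ admits \emph{some} orthogonal embedding, then by the Bl\"asius et~al.\ equivalence $\Phi$ is satisfiable; satisfiability in turn lets one construct the \emph{canonical} embedding of $G$ that mimics the planar variable-clause representation, and in that specific embedding $z$ sits on the outermost clause gadget, hence on the outer face, with its outer angle set to $270\degree$. I recommend restructuring your argument along these lines: rather than proving rigidity and flippability of the gadgets, invoke the equivalence ``$G$ embeddable $\Leftrightarrow$ $\Phi$ satisfiable'' and then exhibit the standard embedding directly.
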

\begin{proof}
If $G$ admits an orthogonal embedding, then $\Phi$ is satisfiable \cite{bbr14-kandinsky}. But then $G$ can be embedded like the variable-clause graph of $\Phi$.
In particular, $z$ lies on the outermost clause gadget. By the choice of $z$ this implies that $z$ also lies on the outer face of the whole graph $G$. Moreover, the angle at $z$ can be set to $270\degree$.
\end{proof}

Finding such a vertex $z$ was the missing piece in the reduction from \textsc{Planar Monotone 3-SAT} to \textsc{Orthogonal Embeddability}.

\begin{theorem}\label{thm:embed:NP_completeness}
\textsc{Ortho-Radial Embeddability} is $\mathcal{NP}$-complete.
\end{theorem}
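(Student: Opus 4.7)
The plan is to establish both $\mathcal{NP}$-membership and $\mathcal{NP}$-hardness. Hardness will follow by chaining the already-proved Lemmas~\ref{lem:embed:equivalence} and~\ref{lem:embed:existence_z} with the Bläsius et~al.~\cite{bbr14-kandinsky} reduction from \textsc{Planar Monotone 3-SAT}; membership requires a careful choice of certificate.

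For $\mathcal{NP}$-hardness, given a formula~$\Phi$, I would first invoke the Bläsius et~al.\ construction in polynomial time to obtain a 4-planar graph~$G$ together with the designated port~$z$ described before Lemma~\ref{lem:embed:existence_z}. Then I would form $G'$ by attaching the three-triangle gadget~$H$ to~$G$ through the path~$P$ of length~$|E(G)|$ from~$v_2$ to~$z$, exactly as set up in the paragraphs preceding the theorem. Both steps are polynomial, so $G'$ has polynomial size in~$|\Phi|$. Correctness then follows by composing three equivalences: $\Phi$ is satisfiable iff $G$ admits an orthogonal embedding (Bläsius et~al.), iff $G$ admits an orthogonal embedding with~$z$ on the outer face and angle at least~$180\degree$ at~$z$ (Lemma~\ref{lem:embed:existence_z}), iff $G'$ admits an ortho-radial embedding (Lemma~\ref{lem:embed:equivalence}).

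For membership, my certificate would be a combinatorial embedding together with integer coordinates for every vertex; given such a certificate, verifying that it realises~$G$, respects the embedding, and draws every edge as a single spoke segment or single arc segment is a routine polynomial-time geometric check. To keep the certificate polynomial in size I would observe that if any bend-free ortho-radial drawing of~$G$ exists, then the vertices occupy at most $n$ distinct circle indices and at most $n$ distinct spoke indices, so by contracting empty circles and spokes one obtains an equivalent drawing with all coordinates in $\{1,\dots,n\}$. The main obstacle is precisely this membership step: the tempting alternative certificate---a valid ortho-radial representation---is problematic because Condition~\ref{cond:repr:labeling} of Definition~\ref{def:repr:valid_representation} quantifies over essential cycles, of which there may be exponentially many and for which no polynomial-time validity check is presently known. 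Certifying the drawing itself sidesteps this at the cost of the small combinatorial coordinate-bound argument above.
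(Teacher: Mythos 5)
Your proposal is correct and follows essentially the same route as the paper: the hardness direction chains the Bl\"asius et~al.\ construction with Lemmas~\ref{lem:embed:equivalence} and~\ref{lem:embed:existence_z} exactly as the paper does. The only difference is that the paper dismisses $\mathcal{NP}$-membership with ``clearly,'' whereas you supply the standard certificate-and-coordinate-compression argument (and correctly identify why a valid ortho-radial representation would \emph{not} serve as an obviously checkable certificate); this is a welcome elaboration, not a divergence.
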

\begin{proof}
Clearly, \textsc{Ortho-Radial Embeddability} lies in $\mathcal{NP}$.

The construction presented above first transforms an instance~$\Phi$ of \textsc{Planar Monotone 3-SAT} to an instance~$G$ of \textsc{Orthogonal Embeddability} and then to an instance $G'$ of \textsc{Ortho-Radial Embeddability}.
By Lemma~\ref{lem:embed:equivalence} the graph~$G'$ admits an ortho-radial embedding if and only if $G$ admits an orthogonal embedding with $z$ on the outer face.
According to Lemma~\ref{lem:embed:existence_z} this is in turn equivalent to the existence of an orthogonal embedding of $G$ without requiring $z$ to lie on the outer face.
Bläsius et~al.~\cite{bbr14-kandinsky} prove that $G$ admits an orthogonal embedding if and only if $\Phi$ is satisfiable.
In total, this implies that $\Phi$ and $G'$ are equivalent.
Furthermore, the reduction runs in polynomial time. As \textsc{Planar Monotone 3-SAT} is \NP-hard~\cite{bk12-optimal}, \textsc{Ortho-Radial Embeddability} is \NP-hard as well.
\end{proof}

One might wonder why we do not directly reduce \textsc{Orthogonal Embeddability} to \textsc{Ortho-Radial Embeddability} but instead start from \textsc{Planar Monotone 3-SAT}. This is due to the fact that we need to connect the triangles to a vertex of the instance~$G$ of \textsc{Orthogonal Embeddability}.
Therefore, we must find a vertex~$z$ on $G$ that can lie on the outer face. That is, if $G$ admits an orthogonal drawing, there is an orthogonal drawing of $G$ such that $z$ lies on the outer face.
For arbitrary instances, we cannot identify such a vertex easily. For instances created by the reduction from \textsc{Planar Monotone 3-SAT} however, we can exploit the structure to find a suitable vertex.

\section{Conclusion}
In this paper we considered orthogonal drawings of graphs on
cylinders. Our main result is a characterization of the 4-plane graphs that can be drawn bend-free on a cylinder in terms of a combinatorial description of such drawings. These ortho-radial representations determine all angles in the drawing without fixing any lengths, and thus are a natural extension of Tamassia's orthogonal representations. However, compared to those, the proof that every valid ortho-radial representation has a corresponding drawing is significantly more involved. The reason for this is the more global nature of the additional property required to deal with the cyclic dimension of the cylinder.

Our ortho-radial representations establish the existence of an ortho-radial TSM framework in the sense that they are a combinatiorial description of the graph serving as interface between the ``Shape'' and ``Metrics'' step.

For rectangular 4-plane graphs, we gave an algorithm producing a drawing from a valid ortho-radial representation. Our proof reducing the drawing of general 4-plane graphs with a valid ortho-radial representation to the case of rectangular 4-plane graphs is constructive; however, it requires checking for the violation of our additional consistency criterion. It is an open question whether this condition can be checked in polynomial time. These algorithms correspond to the ``Metrics'' step in a TSM framework for ortho-radial drawings.

Since the additional property of the characterization is non-local (it is
based on paths through the whole graph), the original flow network by Tamassia cannot be easily adapted to compute the ``Shape'' step of an ortho-radial TSM framework. We leave this as an open question.

 \newcommand{\bibsoda}[2]{Proceedings of the #1 Annual ACM-SIAM Symposium on
  Discrete Algorithms (SODA'#2)} \newcommand{\bibgd}[2]{Proceedings of the #1
  International Symposium on Graph Drawing (GD'#2)}
  \newcommand{\bibfocs}[2]{Proceedings of the IEEE #1 Annual Symposium on
  Foundations of Computer Science (FOCS'#2)}
  \newcommand{\bibinfovis}[1]{Proceedings of the IEEE Symposium on Information
  Visualization (InfoVis'#1)} \newcommand{\bibvis}[1]{Proceedings of the IEEE
  Conference on Visualization (Vis'#1)} \newcommand{\bibpvis}[1]{Proceedings of
  the IEEE Pacific Visualisation Symposium (PacificVis'#1)}
  \newcommand{\bibsoftvis}[2]{Proceedings of the #1 ACM Symposium on Software
  Visualization (SoftVis'#2)} \newcommand{\bibeurocg}[2]{Proceedings of the #1
  European Workshop on Computational Geometry (EuroCG'#2)}
  \newcommand{\bibsocg}[2]{Proceedings of the #1 Annual Symposium on
  Computational Geometry (SoCG'#2)} \newcommand{\bibwads}[2]{Proceedings of the
  #1 International Symposium on Algorithms and Data Structures (WADS'#2)}
  \newcommand{\bibwg}[2]{Proceedings of the #1 Workshop on Graph-Theoretic
  Concepts in Computer Science (WG'#2)} \newcommand{\bibgta}{Proceedings of the
  Conference at Graph Theory and Applications}
  \newcommand{\bibisaac}[2]{Proceedings of the #1 International Symposium on
  Algorithms and Computation (ISAAC'#2)} \newcommand{\bibcocoon}[2]{Proceedings
  of the #1 Annual International Conference on Computing and Combinatorics
  (COCOON'#2)} \newcommand{\bibtamc}[2]{Proceedings of the #1 Annual Conference
  on Theory and Applications of Models of Computation (TAMC'#2)}
  \newcommand{\bibicalp}[2]{Proceedings of the #1 Int. Colloquium on Automata,
  Languages and Programming (ICALP'#2)}

\end{document}